\documentclass[11pt,letterpaper]{article}
\usepackage[margin=0.8in]{geometry}

\usepackage{amsmath,amsfonts,amssymb}
\usepackage{array}
\usepackage{textcomp}
\usepackage{url}
\usepackage{verbatim}
\usepackage{graphicx}
\usepackage{cite}
\usepackage{color}
\usepackage{bm}
\usepackage{qcircuit}
\usepackage{braket}
\usepackage{latexsym,mathrsfs,mathtools,enumerate}
\usepackage{hyperref}
\usepackage{placeins}
\usepackage{authblk}

\allowdisplaybreaks[4]
\hypersetup{colorlinks,linkcolor={blue},citecolor={blue},urlcolor={red}}

\newtheorem{definition}{Definition}
\newtheorem{question}[definition]{Question}
\newtheorem{lemma}[definition]{Lemma}
\newtheorem{remark}[definition]{Remark}
\newtheorem{theorem}[definition]{Theorem}
\newtheorem{example}[definition]{Example}
\newtheorem{proposition}[definition]{Proposition}

\newtheorem{corollary}[definition]{Corollary}
\newtheorem{conjecture}[definition]{Conjecture}

\newtheorem{memo}[definition]{Memo}

\def\squareforqed{\hbox{\rlap{$\sqcap$}$\sqcup$}}
\def\qed{\ifmmode\squareforqed\else{\unskip\nobreak\hfil
\penalty50\hskip1em\null\nobreak\hfil\squareforqed
\parfillskip=0pt\finalhyphendemerits=0\endgraf}\fi}
\def\endenv{\ifmmode\;\else{\unskip\nobreak\hfil
\penalty50\hskip1em\null\nobreak\hfil\;
\parfillskip=0pt\finalhyphendemerits=0\endgraf}\fi}
\newenvironment{proof}{\noindent \textbf{{Proof.~} }}{\qed}
\def\Dbar{\leavevmode\lower.6ex\hbox to 0pt
{\hskip-.23ex\accent"16\hss}D}
\makeatletter
\def\url@leostyle{%
  \@ifundefined{selectfont}{\def\UrlFont{\sf}}{\def\UrlFont{\small\ttfamily}}}
\makeatother

\def\bcj{\begin{conjecture}}
\def\ecj{\end{conjecture}}
\def\bcr{\begin{corollary}}
\def\ecr{\end{corollary}}
\def\bd{\begin{definition}}
\def\ed{\end{definition}}
\def\bea{\begin{eqnarray}}
\def\eea{\end{eqnarray}}
\def\beq{\begin{equation}}
\def\eeq{\end{equation}}
\def\bal{\begin{aligned}}
\def\eal{\end{aligned}}
\def\bem{\begin{enumerate}}
\def\eem{\end{enumerate}}
\def\bex{\begin{example}}
\def\eex{\end{example}}
\def\bim{\begin{itemize}}
\def\eim{\end{itemize}}
\def\bl{\begin{lemma}}
\def\el{\end{lemma}}
\def\bma{\begin{bmatrix}}
\def\ema{\end{bmatrix}}
\def\bpf{\begin{proof}}
\def\epf{\end{proof}}
\def\bpp{\begin{proposition}}
\def\epp{\end{proposition}}
\def\bqu{\begin{question}}
\def\equ{\end{question}}
\def\br{\begin{remark}}
\def\er{\end{remark}}
\def\bt{\begin{theorem}}
\def\et{\end{theorem}}
\def\bmm{\begin{memo}}
\def\emm{\end{memo}}

\def\btb{\begin{tabular}}
\def\etb{\end{tabular}}

\newcommand{\nc}{\newcommand}

\nc{\bbA}{\mathbb{A}} \nc{\bbB}{\mathbb{B}} \nc{\bbC}{\mathbb{C}}
 \nc{\bbD}{\mathbb{D}} \nc{\bbE}{\mathbb{E}} \nc{\bbF}{\mathbb{F}}
 \nc{\bbG}{\mathbb{G}} \nc{\bbH}{\mathbb{H}} \nc{\bbI}{\mathbb{I}}
 \nc{\bbJ}{\mathbb{J}} \nc{\bbK}{\mathbb{K}} \nc{\bbL}{\mathbb{L}}
 \nc{\bbM}{\mathbb{M}} \nc{\bbN}{\mathbb{N}} \nc{\bbO}{\mathbb{O}}
 \nc{\bbP}{\mathbb{P}} \nc{\bbQ}{\mathbb{Q}} \nc{\bbR}{\mathbb{R}}
 \nc{\bbS}{\mathbb{S}} \nc{\bbT}{\mathbb{T}} \nc{\bbU}{\mathbb{U}}
 \nc{\bbV}{\mathbb{V}} \nc{\bbW}{\mathbb{W}} \nc{\bbX}{\mathbb{X}}
 \nc{\bbZ}{\mathbb{Z}}

 \nc{\bA}{{\bf A}} \nc{\bB}{{\bf B}} \nc{\bC}{{\bf C}}
 \nc{\bD}{{\bf D}} \nc{\bE}{{\bf E}} \nc{\bF}{{\bf F}}
 \nc{\bG}{{\bf G}} \nc{\bH}{{\bf H}} \nc{\bI}{{\bf I}}
 \nc{\bJ}{{\bf J}} \nc{\bK}{{\bf K}} \nc{\bL}{{\bf L}}
 \nc{\bM}{{\bf M}} \nc{\bN}{{\bf N}} \nc{\bO}{{\bf O}}
 \nc{\bP}{{\bf P}} \nc{\bQ}{{\bf Q}} \nc{\bR}{{\bf R}}
 \nc{\bS}{{\bf S}} \nc{\bT}{{\bf T}} \nc{\bU}{{\bf U}}
 \nc{\bV}{{\bf V}} \nc{\bW}{{\bf W}} \nc{\bX}{{\bf X}}
 \nc{\bZ}{{\bf Z}}

\nc{\cA}{{\cal A}} \nc{\cB}{{\cal B}} \nc{\cC}{{\cal C}}
\nc{\cD}{{\cal D}} \nc{\cE}{{\cal E}} \nc{\cF}{{\cal F}}
\nc{\cG}{{\cal G}} \nc{\cH}{{\cal H}} \nc{\cI}{{\cal I}}
\nc{\cJ}{{\cal J}} \nc{\cK}{{\cal K}} \nc{\cL}{{\cal L}}
\nc{\cM}{{\cal M}} \nc{\cN}{{\cal N}} \nc{\cO}{{\cal O}}
\nc{\cP}{{\cal P}} \nc{\cQ}{{\cal Q}} \nc{\cR}{{\cal R}}
\nc{\cS}{{\cal S}} \nc{\cT}{{\cal T}} \nc{\cU}{{\cal U}}
\nc{\cV}{{\cal V}} \nc{\cW}{{\cal W}} \nc{\cX}{{\cal X}}
\nc{\cZ}{{\cal Z}}

\nc{\hA}{{\hat{A}}} \nc{\hB}{{\hat{B}}} \nc{\hC}{{\hat{C}}}
\nc{\hD}{{\hat{D}}} \nc{\hE}{{\hat{E}}} \nc{\hF}{{\hat{F}}}
\nc{\hG}{{\hat{G}}} \nc{\hH}{{\hat{H}}} \nc{\hI}{{\hat{I}}}
\nc{\hJ}{{\hat{J}}} \nc{\hK}{{\hat{K}}} \nc{\hL}{{\hat{L}}}
\nc{\hM}{{\hat{M}}} \nc{\hN}{{\hat{N}}} \nc{\hO}{{\hat{O}}}
\nc{\hP}{{\hat{P}}} \nc{\hR}{{\hat{R}}} \nc{\hS}{{\hat{S}}}
\nc{\hT}{{\hat{T}}} \nc{\hU}{{\hat{U}}} \nc{\hV}{{\hat{V}}}
\nc{\hW}{{\hat{W}}} \nc{\hX}{{\hat{X}}} \nc{\hZ}{{\hat{Z}}}

\nc{\hn}{{\hat{n}}}

\nc{\as}{{\cal AS}}
	\nc{\app}{{\cal AP}}

\def\max{\mathop{\rm max}}
\def\min{\mathop{\rm min}}

\def\dg{\dagger}

\providecommand{\bra}[1]{\langle#1|}
\providecommand{\ket}[1]{|#1\rangle}

\allowdisplaybreaks[4]

\title{High-Dimensional Carrier-Assisted Entanglement Purification Based on Mutually Unbiased Bases}

\author[1]{Zihua Song}
\author[1,*]{Lin Chen}
\author[1,*]{Yongge Wang}

\affil[1]{LMIB (Beihang University), Ministry of Education,
and School of Mathematical Sciences, Beihang University,
Beijing 100191, China}

\affil[*]{Corresponding authors:
linchen@buaa.edu.cn (Lin Chen);
wangyongge@buaa.edu.cn (Yongge Wang)}

\date{September 5, 2026}

\begin{document}

\maketitle

\begin{abstract}
Distilling high-dimensional entanglement under general asymmetric Pauli noise remains challenging. For the unprocessed qutrit single-carrier recurrence studied here, severe noise asymmetry can prevent the convergence condition from being satisfied. In this paper, we investigate carrier-assisted entanglement purification protocols, namely CAEPP and mCAEPP, first for two-qutrit systems, and show that the unprocessed single-carrier recurrence is bottlenecked by marginal $X$-error probabilities. To overcome this limitation, we introduce MUB-guided preprocessing that combines randomized inversion-pair symmetrization with deterministic maximum-line alignment. For any qutrit Pauli channel with initial fidelity $p_{00}>1/3$, we prove that, conditioned on successful syndrome outcomes, the MUB-adapted recurrence converges for every sufficiently large fixed carrier number and that its fixed-point fidelity approaches one in the subsequent large-carrier limit. We further extend the algebraic carrier-assisted framework and the asymmetric-noise bottleneck to arbitrary qudit dimensions, and show that in prime-power dimensions the finite-field preprocessing gives the sufficient threshold $p_{00}>1/d$. A certified equal-target comparison over a finite deterministic optimization domain further shows that neither the MUB-adapted strategy nor complete Pauli-label twirling uniformly dominates the other in finite-resource cost.
\end{abstract}

\begin{center}
\small
\textbf{Keywords:}
Entanglement purification, qutrit, qudit, asymmetric noise, mutually unbiased bases, carrier-assisted purification.
\end{center}

\section{Introduction}
\label{sec:introduction}

Entanglement distillation, the extraction of pure entangled states from mixed ensembles, is a fundamental prerequisite for reliable quantum information processing \cite{bbc93, werner89}. While the basic distillability of bipartite systems has been extensively mapped out via the positive partial transpose (PPT) and reduction criteria \cite{peres1996, hhh96, Horodecki1997Reduction, 1997Inseparable, hhh1998}, characterizing and distilling full-rank mixed states under general noise remains notoriously difficult \cite{2002.03233}. Traditionally, standard local twirling operations have been employed to convert bipartite states into highly symmetric forms, such as Werner states \cite{2000Evidence}, to simplify the collective processing of multiple copies \cite{watrous04,vd06,Chen2018Generalized}. However, to circumvent severe mathematical complexities, much of this foundational effort was strictly limited to systems with low matrix rank \cite{1999Rank,Lin2008Rank,cd16pra,cd11jpa}, alongside various quantitative estimations mostly confined to low-dimensional protocols \cite{rains1999,rains2001,dw2005,Kwiat2001Experimental,ch12ijmpb}. 

High-dimensional entanglement offers increased information capacity and, in some settings, enhanced noise robustness, as reviewed in \cite{Erhard2020NatRev}. Recent demonstrations of high-dimensional quantum teleportation \cite{Luo2019PRL} and heralded photon--photon quantum gates \cite{Liu2026NatPhot} indicate continuing progress in multilevel coherent control. Nevertheless, entanglement distillation under general asymmetric noise remains difficult. Recent no-go results delimit universal entanglement purification \cite{Zang2025NoGo} and catalysis-assisted distillation \cite{Lami2024Catalysis}, but they do not imply that every recurrence protocol fails under asymmetric noise. Here we instead identify a convergence obstruction for the particular unprocessed qutrit single-carrier recurrence studied below. Related work has addressed distillation with noisy measurements \cite{Kim2025}, high-dimensional state certification \cite{Lib2025PRL}, symmetric projection constraints \cite{Morelli2023PRL}, and entanglement-based communication under high noise \cite{Hu2021PRL}.

Within the carrier-assisted setting considered here, overcoming the asymmetric-noise obstruction motivates preprocessing that retains selected directional information. In the qutrit setting, the MUB-guided strategy combines randomized inversion-pair symmetrization with a deterministic Clifford rotation that aligns a maximum-weight MUB line with the primary axis. This produces the symmetry and spectral separation used to prove convergence under general asymmetric qutrit Pauli noise. In prime-power dimensions, the corresponding finite-field construction uses maximum-line alignment followed by multiplicative symmetrization, which preserves the aligned-axis weight and equalizes nonzero column totals but does not generally preserve every MUB-line weight separately.

In this paper, we extend the carrier-assisted entanglement purification protocol (CAEPP) proposed for qubits in \cite{qubitCAEPP} to three-dimensional systems (qutrits) subject to general asymmetric noise, and then discuss its higher-dimensional extension. The remainder of this paper is organized as follows. In Section \ref{se:pre}, we establish the theoretical framework by introducing qutrit Pauli channels, Bell-diagonal states, and entanglement fidelity. Specifically, Lemma~\ref{le:EB1} shows that $F(\rho)>1/3$ rules out an entanglement-breaking channel, while the associated Bell-diagonal state violates the reduction criterion and is $1$-distillable. Section \ref{sec:single} details the single-carrier-assisted entanglement purification protocol. Through rigorous error propagation analysis, Proposition \ref{pr:inf} demonstrates that without adaptive preprocessing, the protocol's convergence is strictly bottlenecked by the marginal $X$-error probabilities. This identifies a convergence obstruction for the unprocessed qutrit single-carrier recurrence and motivates the preprocessing developed below. To overcome this limitation, Section \ref{sec:multi} introduces the multi-carrier generalized protocol (mCAEPP) via stabilizer codes. Lemma~\ref{le:cnot1} gives the elementary qutrit SUM-gate conjugation rule, while Lemma~\ref{le:cnot2} records a separate parallel-SUM conjugation identity. The star-code encoder and its syndrome-extraction rules are specified explicitly in Section~\ref{subsec:sca}. Building on this framework, Theorem~\ref{thm:mcaepp_convergence} proves that, for symmetric depolarizing channels, the conditional recurrence converges for every fixed finite $m$, while its fixed-point fidelity approaches one as $m\to\infty$. To tackle general asymmetric qutrit noise, we introduce MUB-guided preprocessing consisting of randomized inversion-pair symmetrization and deterministic maximum-line alignment. Governed by the discrete geometric properties of the qutrit phase space established in Lemma \ref{lemma:mub_dominance}, this step actively rotates the phase space to enforce a primary-axis error dominance. Our main qutrit result, Theorem~\ref{thm:universal_purification}, proves that for every qutrit Pauli channel with $p_{00}>1/3$, the conditional MUB-adapted recurrence converges for every sufficiently large fixed carrier number and its fixed-point fidelity approaches one in the subsequent large-carrier limit. Relative to the qubit carrier architecture in \cite{qubitCAEPP}, the principal new ingredients are the qutrit MUB-adapted recurrence and its asymmetric-noise spectral analysis, followed by the finite-field convolution construction in prime-power dimensions. We also compare the MUB-adapted strategy with complete Pauli-label twirling at equal target fidelity, including cumulative success probability, restart-aware channel uses, and the shared randomness of the specified implementations. Finally, Section \ref{sec:higher} extends the algebraic carrier-assisted framework to higher-dimensional qudits. Proposition \ref{pr:inf_generalized} shows that the single-carrier convergence bottleneck persists in arbitrary dimension $d$, where convergence requires the target fidelity to dominate all marginal shift-error probabilities. Equation~\eqref{eq:finite_field_line_sum} gives the line-sum identity from which the corresponding maximum-line lower bound follows in prime-power dimensions. Building on finite-field line alignment and multiplicative symmetrization, Theorem~\ref{thm:prime_power_mcaepp} proves that $p_{00}>1/d$ is sufficient for the stated fixed-$m$ convergence and ordered large-$m$ limit, while composite dimensions that are not prime powers require additional ideas beyond this finite-field construction.

\section{Preliminaries}

\label{se:pre}

To characterize three-dimensional bipartite systems and generalized Pauli channels, we first introduce the set of single-qutrit generalized Pauli operators. This set is constructed from the cyclic-shift operator $X$ and the phase-shift operator $Z$, whose actions on the computational basis are defined as $X\ket{j}=\ket{(j+1)\bmod 3}$ and $Z\ket{j}=\omega^j\ket{j}$, respectively, with $\omega=e^{2\pi i/3}$. These operators satisfy the fundamental commutation relation $ZX=\omega XZ$. Consequently, any operator within this set can be uniquely expressed in the form $X^u Z^v$ for $u, v \in \{0,1,2\}$.

The qutrit Fourier transform is defined by
\begin{equation}
    H_3\ket{j} = \frac{1}{\sqrt{3}} \sum_{k=0}^{2}\omega^{jk}\ket{k},
    \qquad
    \omega=e^{2\pi i/3},
\end{equation}
or equivalently,
\begin{equation}
    H_3 = \frac{1}{\sqrt{3}}
    \begin{pmatrix}
        1 & 1 & 1\\
        1 & \omega & \omega^2\\
        1 & \omega^2 & \omega
    \end{pmatrix}.
\end{equation}
Under this convention, the qutrit Pauli operators satisfy
\begin{equation}
    H_3XH_3^\dagger=Z,
    \qquad
    H_3ZH_3^\dagger=X^{-1},
    \qquad
    H_3^\dagger ZH_3=X.
\end{equation}
 
The fundamental resources for our protocol are the nine three-dimensional Bell states, i.e., maximally entangled qutrit pairs. They form a complete orthonormal basis and are uniformly defined as
\begin{equation}
\ket{\Phi^{n,m}} = \frac{1}{\sqrt{3}}\sum_{j=0}^2\omega^{mj}\ket{j, (j+n)\bmod 3},
\end{equation}
where $n, m \in \{0,1,2\}$ represent the shift and phase indices, respectively. Notably, applying an arbitrary local Pauli operation $I \otimes X^u Z^v$ to $\ket{\Phi^{n,m}}$ deterministically shifts its indices, enabling mutual conversions among all nine orthogonal states up to a global phase factor.

Any three-dimensional bipartite Bell-diagonal state can be written as
\begin{equation}
\label{eq:sigma0}
\rho=\sum_{n,m=0}^2p_{nm}\ket{\Phi^{n,m}}\bra{\Phi^{n,m}},
\end{equation}
where $\sum_{n,m=0}^2p_{nm}=1$.

A three-dimensional Pauli channel $\mathcal{N}$ is parameterized by these probabilities $p_{nm}$ as
\begin{equation}
\label{eq:channel1}
    \mathcal{N}(\cdot)=\sum_{n,m=0}^2 p_{nm} Z^m X^n (\cdot) X^{-n} Z^{-m}.
\end{equation}
When Alice sends one qutrit of $\ket{\Phi^{0,0}}$ through $\mathcal{N}$, the resulting shared state, known as the CJ state of the channel, is given by
\begin{equation}
\label{eq:sigma1}
    \sigma=(I_3\otimes\mathcal{N})\left(\ket{\Phi^{0,0}}\!\bra{\Phi^{0,0}}\right)=\sum_{n,m=0}^2p_{nm}\ket{\Phi^{n,m}}\!\bra{\Phi^{n,m}}.
\end{equation}

The entanglement fidelity for three-dimensional states $\rho$ is defined as 
\begin{equation}
\label{eq:fidelity1}
F(\rho)=\langle\Phi^{0,0}|\rho|\Phi^{0,0}\rangle=p_{00}.
\end{equation}

The following lemma is a direct consequence of the results in \cite{2003Entanglement}.

\begin{lemma}
\label{le:EB1}
    An $N$-dimensional channel $\mathcal{N}$ is EB only if its corresponding CJ state $\rho$ has an entanglement fidelity $F(\rho) \le 1/N$. 
\end{lemma}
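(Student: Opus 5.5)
The plan is to argue via separability of the Choi--Jamio{\l}kowski state, together with the standard fact that separable states have bounded overlap with a maximally entangled state.

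\emph{Step 1: reduce to separability.} By the Horodecki--Shor--Ruskai characterization \cite{2003Entanglement}, $\mathcal{N}$ is entanglement-breaking if and only if its CJ state $\rho=(I_N\otimes\mathcal{N})(\ket{\Phi^{0,0}}\!\bra{\Phi^{0,0}})$ is separable. So it suffices to show that every separable state $\rho$ on $\mathbb{C}^N\otimes\mathbb{C}^N$ satisfies $\bra{\Phi^{0,0}}\rho\ket{\Phi^{0,0}}\le 1/N$. Here $\ket{\Phi^{0,0}}=N^{-1/2}\sum_j\ket{j,j}$.

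\emph{Step 2: bound product vectors.} The fidelity functional is linear, and a separable $\rho$ is a convex combination of pure product states $\ket{a}\ket{b}$. It is therefore enough to bound $|\langle\Phi^{0,0}|a,b\rangle|^2$ for unit vectors $a,b$. Writing $a=\sum_j a_j\ket{j}$ and $b=\sum_j b_j\ket{j}$, we have
\begin{equation}
\langle\Phi^{0,0}|a,b\rangle=\frac{1}{\sqrt N}\sum_j a_jb_j .
\end{equation}
By Cauchy--Schwarz, $|\sum_j a_jb_j|\le\|a\|\,\|b\|=1$. Hence the overlap squared is at most $1/N$, and convexity extends the bound to all separable $\rho$.

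\emph{Step 3: conclude.} If $\mathcal{N}$ is EB, then $\rho$ is separable, so $F(\rho)\le 1/N$. Taking the contrapositive, $F(\rho)>1/N$ implies that $\mathcal{N}$ is not EB.

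The only nontrivial ingredient is the EB$\Leftrightarrow$separable-Choi equivalence, which we cite rather than reprove; the remaining steps are routine. An alternative to Step 2 uses the reduction criterion, $I\otimes\rho_B-\rho\ge0$ for separable $\rho$, evaluated on $\ket{\Phi^{0,0}}$. Since $\rho_B=\mathcal{N}(I/N)=I/N$ for a Pauli channel, this gives $1/N-F\ge0$ directly. The same computation also yields the paper's remark that $F>1/N$ violates the reduction criterion.
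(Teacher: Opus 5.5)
Your proof is correct and is essentially the paper's route. The paper gives no argument beyond citing the Horodecki--Shor--Ruskai result \cite{2003Entanglement} that an EB channel has a separable Choi state, and your Cauchy--Schwarz step supplies the standard bound $\bra{\Phi^{0,0}}\sigma\ket{\Phi^{0,0}}\le 1/N$ for separable $\sigma$ that the citation leaves implicit.

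Two small refinements. First, only the direction EB $\Rightarrow$ separable Choi state is needed, and that direction follows immediately from the definition of EB. Second, in your reduction-criterion alternative the Pauli (unital) assumption is unnecessary, since $\bra{\Phi^{0,0}}(I\otimes X)\ket{\Phi^{0,0}}=\operatorname{Tr}X/N$ for every operator $X$.
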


For the qutrit Bell-diagonal state in \eqref{eq:sigma0}, $p_{00}>1/3$ implies violation of the reduction criterion and hence one-copy distillability \cite{Horodecki1997Reduction}; in particular, the state is NPT entangled.

Based on these preliminaries, we first present the single-carrier CAEPP protocol in Section \ref{sec:single}, followed by the multi-carrier generalization and stabilizer code implementation in Section \ref{sec:multi}.

\section{single-carrier-assisted Entanglement Purification in Three Dimensions}
\label{sec:single}

The three-dimensional CAEPP in this section extends the qubit CAEPP recently introduced in \cite{qubitCAEPP}. Both protocols share a fundamental operational architecture that avoids consuming additional entangled pairs by utilizing an ancillary carrier initialized in the state $\ket{0}$ to traverse the noisy channel. Furthermore, both employ collective bilateral encoding and decoding operations and rely on $Z$-basis measurements on the carrier for syndrome extraction and post-selection. In both dimensions, a basic preprocessing step involving local unitaries is applied to the shared state to minimize pure phase errors prior to encoding.

However, generalizing this framework to three-dimensional systems introduces algebraic complexities and severe noise asymmetry challenges. The standard CNOT gates and Pauli operators are replaced by high-dimensional analogues such as modulo-$3$ SUM gates $U_{CN}$ and generalized qutrit Pauli operators. Moreover, the qutrit Pauli error space contains eight nonidentity errors, allowing more varied asymmetric error distributions than in the qubit case. Although a fixed local preprocessing step is sufficient to handle general noise in the qubit case, we show in this section that sufficiently asymmetric noise can violate the convergence condition of the unprocessed qutrit single-carrier recurrence. This obstruction motivates the MUB-guided preprocessing developed in Section \ref{sec:multi}.

\subsection{Protocol Overview}

The single-carrier purification protocol proceeds through the following sequential steps.

1. Initialization: Alice and Bob share a three-dimensional Bell-diagonal state $\rho$ with fidelity $F=p_{00}>1/3$, as defined in \eqref{eq:fidelity1}. By Lemma~\ref{le:EB1}, this condition ensures that the corresponding qutrit channel is not entanglement-breaking.

2. preprocessing: Alice and Bob apply local unitaries to the shared pair.

3. Encoding: Alice prepares a single carrier qutrit in the initial state $\ket{0}$. Then she applies a three-dimensional CNOT gate $U_{CN}$, using her shared qutrit as the control and the carrier as the target. Following this encoding step, she transmits the carrier through the quantum channel.

4. Decoding: When receiving the carrier, Bob applies a reverse CNOT gate $U_{CN'}$, using his shared qutrit as the control and the carrier as the target. Subsequently, the carrier is measured in the $Z$-basis.

5. Decision: Alice and Bob retain the purified shared state if the syndrome measurement yields $0$, indicating no detected error; otherwise, the state is discarded and the protocol restarts.

Figure \ref{fig:3dcaepp} presents the detailed schematic diagram of the single-carrier operational workflow.

\begin{figure}[htbp]
    \centering
    \[
    \Qcircuit @C=1.5em @R=2.2em {
        \mbox{\hspace{1em}} & \mbox{} & A_{\text{in}} & \mbox{} & \qw & \gate{U_A^{\text{pre}}} & \ctrl{1} & \qw & \qw & \qw & \qw & \qw & \mbox{} \\
        \rho_{\text{in}} & \mbox{} \ar@{-}[urr] \ar@{-}[drr] & \mbox{} & \ket{0} & \mbox{} & \qw & \targ & \gate{\mathcal{N}} & \targ & \meter \ar@{}[u]|-{\raisebox{-2em}{$\text{out}=0$}} & \qw & \qw & \mbox{} & \rho_{\text{out}} \\
        \mbox{} & \mbox{} & B_{\text{in}} & \mbox{} & \qw & \gate{U_B^{\text{pre}}} & \qw & \qw & \ctrl{-1} & \qw & \qw & \qw & \mbox{}
        \gategroup{1}{12}{3}{12}{0.6em}{\}}
    }
    \]
    \vspace{0.5em}
    \caption{Quantum circuit of the single-carrier CAEPP protocol}
    \label{fig:3dcaepp}
    \vspace{0.3em}
    \small
    \centering
    \begin{minipage}{0.9\textwidth}
    \textbf{Notation Description:}
    \begin{itemize}
        \item $A_{\text{in}}$: Input qutrit of the shared entangled pair at Alice's side
        \item $B_{\text{in}}$: Input qutrit of the shared entangled pair at Bob's side
        \item $\rho_{\text{in}}$: Initial noisy shared Bell-diagonal entangled state
        \item $\ket{0}$: Initial carrier qutrit
        \item $U_A^{\text{pre}}, U_B^{\text{pre}}$: Local preprocessing unitary operations
        \item $\mathcal{N}$: Noisy qutrit channel acting on the carrier
        \item $\text{out}=0$: Success condition of Z-basis measurement on the carrier
        \item $\rho_{\text{out}}$: Purified output entangled state after successful protocol
    \end{itemize}
    \end{minipage}
\end{figure}

\subsection{Detailed Protocol Design}
\label{subsub:dotp}

We now evaluate how the protocol affects the shared state $\sigma$ in \eqref{eq:sigma1}. Linearity permits an independent analysis of each Bell-diagonal component $\ket{\Phi^{n,m}}\bra{\Phi^{n,m}}$. The derivation proceeds as follows.

Specifically, the action of the three-dimensional CNOT gate $U_{CN}$ is defined as
\begin{equation}
    U_{CN} \ket{j}_C\otimes\ket{k}_T=\ket{j}_C \otimes \ket{(j+k)\bmod 3}_T,
\end{equation}
where the subscripts $C$ and $T$ denote the control and target qutrits, respectively. 

Before the gate application, the joint state of the shared Bell pair and the initial carrier is
\begin{equation}
    \ket{\Phi^{n,m}}_{AB}\otimes\ket{0}_{T}=\frac{1}{\sqrt{3}}\sum_{j=0}^2\omega^{mj}\ket{j}_{A}\otimes\ket{(j+n)\bmod 3}_{B}\otimes\ket{0}_{T}.
\end{equation}
In our protocol, Alice's qutrit $A$ serves as the control and the ancillary carrier $T$ acts as the target, while Bob's qutrit $B$ remains unaffected as a spectator.

Applying $U_{CN}$ between Alice's qutrit $A$ and the carrier $T$ transforms the target state from $\ket{0}_T$ to $\ket{(j+0) \bmod3}_T=\ket{j}_T$, yielding the joint state
\begin{equation}
    U_{CN} \left(\ket{\Phi^{n,m}}_{AB}\otimes\ket{0}_{T}\right)=\frac{1}{\sqrt{3}}\sum_{j=0}^2\omega^{mj}\ket{j}_{A}\otimes\ket{(j+n)\bmod 3}_{B}\otimes\ket{j}_{T}.
\end{equation}

The carrier $T$ is then transmitted through the quantum channel to Bob. Upon receiving it, Bob applies a reverse CNOT gate $U_{CN'}$ using his qutrit $B$ as the control and the carrier $T$ as the target. The action of $U_{CN'}$ is defined by
\begin{equation}
U_{CN'} \ket{j}_B\otimes\ket{k}_T=\ket{j}_B \otimes \ket{(k-j)\bmod 3}_T.
\end{equation}
Applying $U_{CN'}$ to the joint state yields the final pre-measurement state
\begin{equation}
\label{eq:outputstate}
\frac{1}{\sqrt{3}}\sum_{j=0}^2\omega^{mj}\ket{j}_{A}\otimes\ket{(j+n)\bmod 3}_{B}\otimes\ket{(3-n)\bmod 3}_{T}.
\end{equation}

Finally, Bob measures the carrier in the $Z$-basis and classically communicates the outcome to Alice. If the measurement yields $0$, the protocol succeeds and they retain the shared pair; otherwise, they declare failure and discard the state.

\subsection{Two-Round Purification over Noiseless Channels}

Consider the ideal scenario where the ancillary carrier is transmitted through a noiseless channel with $\mathcal{N}=I$. Upon successful post-selection in the first round of the CAEPP, all $X$-errors with $n \in \{1,2\}$ are deterministically filtered out. The shared state thus reduces to
\begin{equation}
\rho_{\mathrm{out}}=\sum_{m=0}^2p'_{0m}\ket{\Phi^{0,m}}\bra{\Phi^{0,m}},
\end{equation}
with the normalized probabilities given by
\begin{equation}
p'_{0m}=\dfrac{p_{0m}}{\sum_{j=0}^2p_{0j}}.
\end{equation}
Because this intermediate state contains only pure phase errors, executing a second round of the CAEPP in the conjugate basis achieves complete purification. Consequently, conditioned on successful outcomes in both rounds, the retained shared state is the maximally entangled qutrit state $\ket{\Phi^{0,0}}$.

Figure \ref{fig:2round_noiseless} illustrates the corresponding quantum circuit. In Round $1$, the initial qutrits $A_{\text{in}}$ and $B_{\text{in}}$ undergo preprocessing operations $U_A^{\text{pre1}}$ and $U_B^{\text{pre1}}$, and interact with the first carrier initialized in state $\ket{0}_1$, yielding the partially purified state $\rho_{\text{out}}$ upon a successful measurement outcome of $\text{out}_1=0$. Round $2$ applies the subsequent preprocessing operations $U_A^{\text{pre2}}$ and $U_B^{\text{pre2}}$, and utilizes the second carrier initialized in state $\ket{0}_2$. Upon obtaining the second successful measurement outcome $\mathrm{out}_2=0$, Alice and Bob retain the maximally entangled qutrit state $\ket{\Phi^{0,0}}$.

\begin{figure}[htbp]
    \centering
    \scalebox{0.8}{
    $
    \Qcircuit @C=1.2em @R=2.2em {
        A_{\text{in}} & \mbox{\hspace{0.2em}} & \qw & \gate{U_A^{\text{pre1}}} & \ctrl{2} & \qw & \qw & \qw & \qw & \qw & \qw & \qw & \gate{U_A^{\text{pre2}}} & \ctrl{3} & \qw & \qw & \qw & \qw & \qw & \qw & A_{\text{out}}\\ 
        B_{\text{in}} & \mbox{\hspace{0.2em}} & \qw & \gate{U_B^{\text{pre1}}} & \qw & \qw & \ctrl{1} & \qw & \qw & \qw & \qw & \qw & \gate{U_B^{\text{pre2}}} & \qw & \qw & \ctrl{2} & \qw & \qw & \qw & \qw & B_{\text{out}}\\
        \ket{0}_{1} & \mbox{\hspace{0.2em}} & \qw & \qw & \targ & \gate{I} & \targ & \meter \ar@{}[u]|-{\raisebox{-2.2em}{$\text{out}_1=0$}} & \qw & \qw & \rho_{\text{out}} & \mbox{} & \mbox{} & \mbox{} & \mbox{} & \mbox{} & \mbox{} & \mbox{} & \mbox{} & \mbox{}\\
        \ket{0}_{2} & \mbox{\hspace{0.2em}} & \qw & \qw & \qw & \qw & \qw & \qw & \qw & \qw & \qw & \qw & \qw & \targ & \gate{I} & \targ & \meter \ar@{}[u]|-{\raisebox{-2.2em}{$\text{out}_2=0$}} & \qw & \qw & \qw & \ket{\Phi^{0,0}}
        \gategroup{1}{4}{3}{9}{1.8em}{-} 
        \gategroup{1}{13}{4}{18}{1.8em}{-} 
    }
    $
    }
    \vspace{0.5em}
    \caption{Quantum circuit for two-round purification over noiseless channels}
    \label{fig:2round_noiseless}
    \vspace{0.3em}
    \small
    \centering
    \begin{minipage}{0.9\textwidth}
    \textbf{Notation Description:}
    \begin{itemize}
        \item $A_{\text{in}}, B_{\text{in}}$: Input qutrits of the initial noisy state shared between Alice and Bob.
        \item $\ket{0}_{1}, \ket{0}_{2}$: Initial states of the carrier states.
        \item $U_A^{\text{pre}k}, U_B^{\text{pre}k}$: preprocessing unitaries for Round $k$.
        \item $I$: Noiseless quantum channel for shared state transmission.
        \item $\text{out}_k=0$: Success condition of $Z$-basis measurement on the carrier for Round $k$.
        \item $\rho_{\text{out}}$: Partially purified state on shared states after Round $1$ success.
        \item $\ket{\Phi^{0,0}}$: The maximally entangled qutrit state retained after successful post-selection in both rounds.
    \end{itemize}
    \end{minipage}
\end{figure}

\subsection{Robustness under Noisy Carriers}

We now evaluate the CAEPP under practical conditions where the single-qutrit carrier is subjected to noise. Motivated by realistic communication setups, we assume the ancillary carrier traverses the same noisy channel $\mathcal{N}$ used to distribute the initial shared state. This scenario is mathematically modeled as two independent uses of the identical channel, represented by $\mathcal{N} \otimes \mathcal{N}$.

During initialization, Alice transmits one qutrit of a maximally entangled pair through $\mathcal{N}$, yielding the shared Bell-diagonal state defined in \eqref{eq:sigma1}, with $\mathbf{q}^{(0)}=\mathbf{p}$. For a fixed carrier channel $\mathcal{N}$, let $F_N:=q_{00}^{(N)}$ denote the fidelity conditioned on $N$ consecutive successful rounds of the single-carrier CAEPP. Whenever the conditional recurrence converges, we define its asymptotic convergent fidelity by 
\begin{equation} 
\label{eq:convergent_fidelity}
    F_\ast(\mathcal{N}):=\lim_{N\to\infty}q_{00}^{(N)}.  
\end{equation}

\subsection{Single-Carrier Protocol under Asymmetric Noise}
\label{subsec:singleCAEPPunderNoise}

We define the mixed error operator $Y=ZX$ through the cyclic-shift operator $X$ and the phase-shift operator $Z$,
\begin{align}
    &X\ket{j}=\ket{(j+1)\bmod3},\\
    &Z\ket{j}=\omega^j\ket{j},\\
    &Y\ket{j}=ZX\ket{j}=\omega^{j+1}\ket{(j+1)\bmod3}.
\end{align}
Then \eqref{eq:channel1} can be rewritten as
\begin{align}
    \nonumber
    \mathcal{N}(\cdot)=&p_{00}(\cdot)+p_{01}Z(\cdot)Z^{-1}+p_{02}Z^2(\cdot)Z^{-2}\\
    \nonumber
    &+p_{10}X(\cdot)X^{-1}+p_{20}X^2(\cdot)X^{-2}\\
    \nonumber
    &+p_{11}Y(\cdot)Y^{-1}+p_{22}Y^2(\cdot)Y^{-2}\\
    &+p_{12}ZY(\cdot)Y^{-1}Z^{-1}+p_{21}Z^2Y^2(\cdot)Y^{-2}Z^{-2}.
    \label{eq:XYZ}
\end{align}

As discussed in Section \ref{subsub:dotp}, a single round of testing cannot eliminate pure $Z$-errors with $n=0$. Consequently, we restrict our analysis to Pauli channels characterized by exactly seven non-zero probabilities by setting $p_{01}=p_{02}=0$. Letting the initial fidelity be $F_0 = p_{00}$, this configuration can be parameterized as
\begin{eqnarray}
\label{eq:p00}
    (p_{00},0,0,p_{10},p_{11},p_{12},p_{20},p_{21},p_{22})
\end{eqnarray}
where $\sum_{n=1}^2\sum_{m=0}^2 p_{nm}=1-p_{00}$.

We now evaluate the asymptotic performance of the single-carrier protocol in the absence of preprocessing, establishing its fundamental convergence threshold in the following proposition.

\begin{proposition}
\label{pr:inf}
    Consider a two-qutrit state transmitted with initial fidelity $p_{00}$ and error probabilities $p_{nm}$, where $p_{01} = p_{02} = 0$. In the absence of any preprocessing, the fidelity $F_N$ after $N$ successful rounds of the single-carrier protocol converges to $1$ as $N \to \infty$ if and only if
    \begin{equation}
        p_{00} > \max \{ p_{10} + p_{11} + p_{12}, p_{20} + p_{21} + p_{22} \}.
    \end{equation}
\end{proposition}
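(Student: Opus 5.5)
The plan is to derive an exact closed-form recurrence for the Bell-diagonal weights $q^{(N)}_{nm}$ under one round of the unprocessed single-carrier protocol, with a noisy carrier as in the robustness subsection, and then read off the limit. By linearity I will track a single component $\ket{\Phi^{n,m}}$ together with a single carrier Pauli error $Z^bX^a$, which occurs with probability $p_{ab}$. I will redo the computation leading to \eqref{eq:outputstate} with the error inserted between $U_{CN}$ and $U_{CN'}$.

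\textbf{Step 1: one-round transfer rule.} The carrier enters the channel in state $\ket{j}_T$, correlated with $\ket{j}_A\ket{j+n}_B$.
\begin{itemize}
\item The shift $X^a$ moves the carrier to $\ket{j+a}$. After $U_{CN'}$ the carrier holds $\ket{a-n}$, independently of $j$. So the syndrome is $0$ if and only if $a=n$.
\item The phase $Z^b$ contributes $\omega^{bj}$ (up to the global phase from ordering). Because the carrier is perfectly correlated with $j$, this phase kicks back onto the pair and sends $\ket{\Phi^{n,m}}\mapsto\ket{\Phi^{n,m+b}}$.
\end{itemize}
Since the carrier ends in a product state after $U_{CN'}$, the pair stays Bell-diagonal. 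Conditioned on success, the weights therefore update as
\begin{equation*}
q'_{n,m}\;\propto\;\sum_{b=0}^{2} q_{n,m-b}\,p_{n,b}.
\end{equation*}
I will check the signs and the index convention carefully, but the statement only needs the structure: the shift index $n$ is preserved, and block $n$ is reweighted by $P_n:=\sum_b p_{nb}$.

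\textbf{Step 2: solving the recurrence.} Let $Q_n^{(N)}:=\sum_m q^{(N)}_{nm}$. Summing the update over $m$ gives $Q_n^{(N+1)}\propto Q_n^{(N)}P_n$. Since $\mathbf q^{(0)}=\mathbf p$, we get $Q_n^{(N)}\propto P_n^{N+1}$. Inside the $n=0$ block we have $p_{01}=p_{02}=0$, so only $b=0$ contributes. Hence $q_{01},q_{02}$ remain $0$ for all rounds and $q^{(N)}_{00}=Q^{(N)}_0$. This yields
\begin{equation*}
F_N=\frac{p_{00}^{N+1}}{p_{00}^{N+1}+P_1^{N+1}+P_2^{N+1}},
\qquad P_n=p_{n0}+p_{n1}+p_{n2}.
\end{equation*}

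\textbf{Step 3: the limit.}
\begin{itemize}
\item If $p_{00}>\max\{P_1,P_2\}$, the ratios $(P_n/p_{00})^{N+1}$ tend to $0$, so $F_N\to1$.
\item Conversely, suppose $p_{00}\le\max\{P_1,P_2\}$, say $P_1\ge p_{00}$ (note $p_{00}>0$ is implicit). Then $F_N\le p_{00}^{N+1}/(p_{00}^{N+1}+P_1^{N+1})\le 1/2$ for every $N$, so $F_N\not\to1$.
\end{itemize}

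The main obstacle I expect is Step 1. I must verify rigorously that the $Z$-part of the carrier error only permutes phase labels within the retained block and never mixes shift labels or produces off-diagonal terms. This requires checking the phase kickback through $U_{CN'}$, including the global phase from $Z^bX^a$ versus $X^aZ^b$. Once that transfer rule is established, the rest is elementary.
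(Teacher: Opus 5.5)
Your proposal is correct and follows essentially the same route as the paper. In both, each successful round multiplies the weight of shift sector $n$ by the carrier marginal $p_{X=n}$. Because $p_{01}=p_{02}=0$, the $n=0$ block never leaks into $q_{01},q_{02}$. Together these give $F_N=p_{00}^{N+1}/(p_{00}^{N+1}+p_{X=1}^{N+1}+p_{X=2}^{N+1})$, from which the threshold follows.

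Your Step~1 transfer rule also checks out. The carrier ends in $\ket{a-n}$, and the pair becomes $\omega^{ab}\ket{\Phi^{n,m+b}}$. Your explicit bound $F_N\le 1/2$ for the converse is slightly more careful than the paper's one-line justification of the ``only if'' direction.
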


\begin{proof}
For $i \in \{1,2\}$, suppose the shared state has an error with $n=i$ and $m=j$, which occurs with probability $p_{ij}$. To survive the final measurement, the ancillary carrier must independently acquire an error with the identical shift index $n=i$. This occurs with the marginal $X$-error probability $p_{X=i} := \sum_{m=0}^2 p_{im}$. Let $P_i$ denote the probability that the protocol succeeds with a residual shift error $n=i$. The error-free branch requires both the shared state and the ancillary carrier to independently remain flawless, yielding $P_0 = p_{00}^2$. 

For error branches $i \in \{1,2\}$, survival dictates that the ancillary carrier must independently acquire a shift error $n=i$ identical to that of the shared state. Defining the marginal $X$-error probability as $p_{X=i} := \sum_{m=0}^2 p_{im}$, and summing over all possible initial phase errors $j$, we obtain,
\begin{equation}
    P_i = \sum_{j=0}^2 p_{ij} p_{X=i} = p_{X=i}^2.
\end{equation}
The post-selection fidelity $F_1$ is then simply the conditional probability,
\begin{equation}
\label{eq:fidelity2}
    F_1 = \frac{P_0}{P_0 + P_1 + P_2} = \frac{p_{00}^2}{p_{00}^2 + p_{X=1}^2 + p_{X=2}^2}.
\end{equation}

To evaluate the fidelity $F_2$ after the second round of CAEPP, we first express the residual error probabilities $p'_{nm}$ generated from the first round for $n \in \{1,2\}$ and $m \in \{0,1,2\}$ as
\begin{equation}
\label{eq:fidelity3}
    p_{nm}' = \frac{\sum_{j=0}^2 p_{n,j} p_{n,(m-j)\bmod 3}}{p_{00}^2 + p_{X=1}^2 + p_{X=2}^2}.
\end{equation}

In the second round, the protocol operates exclusively on the purified state obtained from the first round, assisted by a new ancillary carrier traversing the identical original channel. The updated fidelity is then given by
\begin{equation}
\label{eq:fidelity4}
    F_2 = \frac{p_{00}'p_{00}}{p_{00}'p_{00} + \sum_{n=1}^2 \left( \sum_{m=0}^2 p_{nm}' \right) p_{X=n}}.
\end{equation}

Substituting \eqref{eq:fidelity2} and \eqref{eq:fidelity3} into \eqref{eq:fidelity4} and performing the algebraic simplification, the fidelity after the second round compactly becomes
\begin{equation}
    F_2 = \frac{p_{00}^3}{p_{00}^3 + p_{X=1}^3 + p_{X=2}^3}.
\end{equation}

By mathematical induction, this analytical structure generalizes to the $N$-th round. The fidelity $F_N$ can then be directly expressed as
\begin{equation}
\label{eq:fidelity5}
    F_N = \frac{p_{00}^{N+1}}{p_{00}^{N+1} + p_{X=1}^{N+1} + p_{X=2}^{N+1}} = \left[ 1 + \left(\frac{p_{X=1}}{p_{00}}\right)^{N+1} + \left(\frac{p_{X=2}}{p_{00}}\right)^{N+1} \right]^{-1}.
\end{equation}

To guarantee an asymptotic convergence of $\lim_{N \to \infty} F_N = 1$, the exponential bases in the denominator must be strictly less than $1$, yielding the fundamental algebraic constraint
\begin{equation}
    p_{00} > \max\{ p_{X=1}, p_{X=2} \}.
\end{equation}
\end{proof}

This exposes a critical vulnerability to noise asymmetry. Under the shift-symmetric Pauli-noise condition $p_{X=1}=p_{X=2}$, which gives $p_{X=1}=p_{X=2}=(1-p_{00})/2$ under the restricted assumptions of Proposition~\ref{pr:inf}, the convergence condition reduces to $p_{00}>1/3$, as illustrated in Figs.~\ref{fig:f=0.33,A=0.5} and~\ref{fig:f=0.34,A=0.5}. However, an imbalance between the two marginal shift-error probabilities can violate the convergence condition even when $p_{00}>1/3$, as shown in Fig.~\ref{fig:f=0.34,A=0.48}. In the highly asymmetric limit $p_{X=1}\to1-p_{00}$, convergence requires $p_{00}>1/2$, as demonstrated in Fig.~\ref{fig:f=0.51,A=0.01}.

These observations motivate the MUB preprocessing introduced in Section~\ref{subsec:ada}. By aligning a maximum-weight phase-space direction with the primary purification axis, the resulting MUB-adapted mCAEPP establishes convergence for every $p_{00}>1/3$ within the qutrit Pauli-channel setting considered there.

\begin{figure}[htbp]
    \centering
    \includegraphics[width=0.8\textwidth]{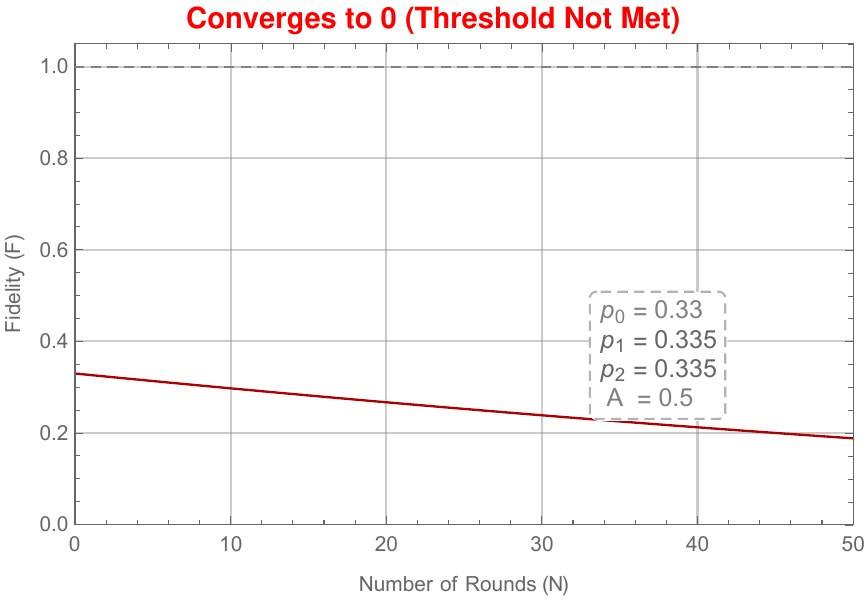} 
    \caption{Asymptotic fidelity $F_N$ of the single-carrier CAEPP without preprocessing. Inset parameters: $p_0 := p_{00}$, $p_{1,2} := p_{X=1,2}$ and $A := p_1/(p_1+p_2)$. For symmetric noise with $A = 0.5$, $p_0 = 0.33$, and $p_1 = p_2 = 0.335$, the convergence threshold $p_0 > \max\{p_1, p_2\}$ is violated, causing the fidelity to converge to $0$.}
    \label{fig:f=0.33,A=0.5}
\end{figure}

\begin{figure}[htbp]
    \centering
    \includegraphics[width=0.8\textwidth]{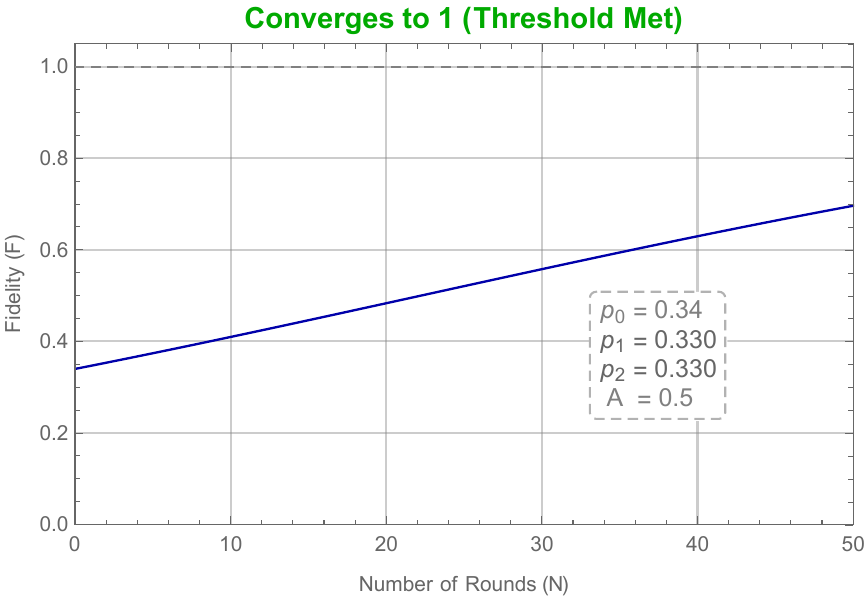} 
    \caption{Asymptotic fidelity $F_N$ without preprocessing. For symmetric noise with $A = 0.5$, $p_0 = 0.34$, and $p_1 = p_2 = 0.330$, the threshold condition $p_0 > \max\{p_1, p_2\}$ is satisfied, enabling the fidelity to asymptotically converge to $1$.}
    \label{fig:f=0.34,A=0.5}
\end{figure}

\begin{figure}[htbp]
    \centering
    \includegraphics[width=0.8\textwidth]{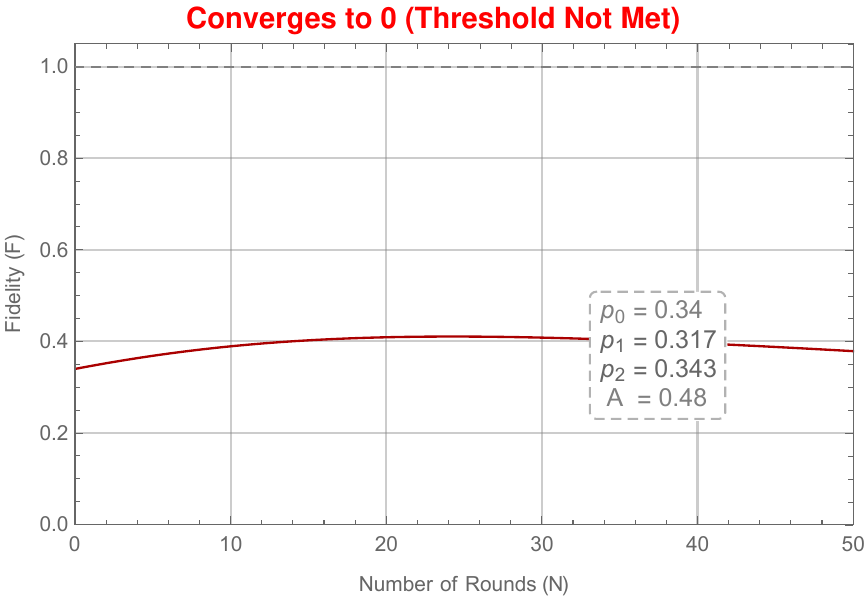} 
    \caption{Asymptotic fidelity $F_N$ without preprocessing. Under slightly asymmetric noise with $A = 0.48$ and the same initial $p_0 = 0.34$, the marginal errors shift to $p_1 = 0.317$ and $p_2 = 0.343$. The threshold condition $p_0 > \max\{p_1, p_2\}$ is consequently violated, causing the fidelity to converge to $0$ despite $p_0 > 1/3$.}
    \label{fig:f=0.34,A=0.48}
\end{figure}

\begin{figure}[htbp]
    \centering
    \includegraphics[width=0.8\textwidth]{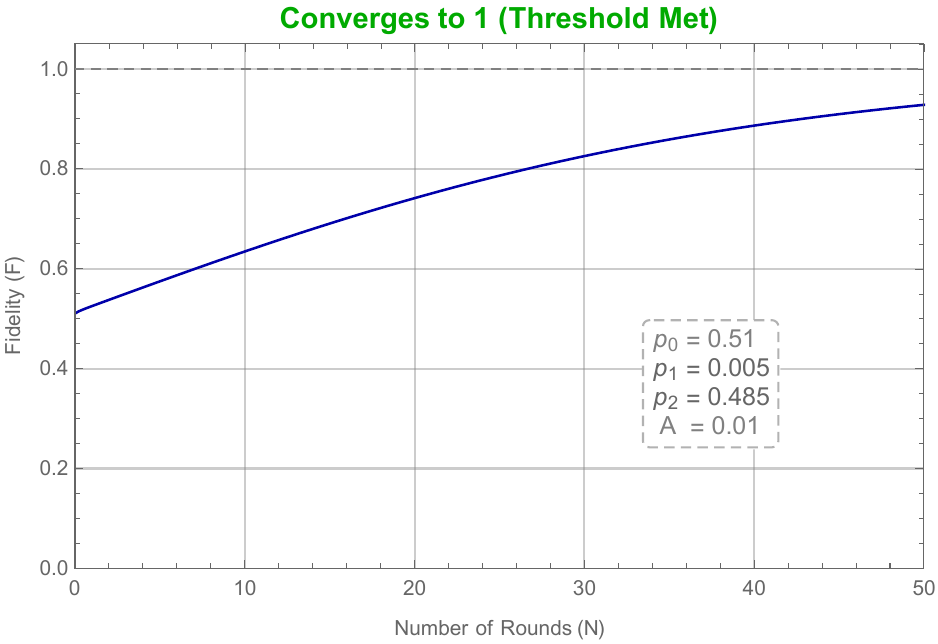} 
\caption{Asymptotic fidelity $F_N$ without preprocessing. Under highly asymmetric noise with $A = 0.01$ and $p_0 = 0.51$, the marginal errors are $p_1 = 0.005$ and $p_2 = 0.485$. The threshold condition $p_0 > \max\{p_1, p_2\}$ is satisfied, enabling the fidelity to asymptotically converge to $1$.}
    \label{fig:f=0.51,A=0.01}
\end{figure}

\section{Multi-Carrier Extensions and Stabilizer Code Implementations}
\label{sec:multi}

We now extend the single-carrier CAEPP of Section~\ref{sec:single} to $m$ carriers. The resulting mCAEPP uses a star stabilizer code to couple one shared qutrit pair to $m$ independently transmitted carriers. We first specify the stabilizer and encoding operations, illustrate the construction for $m=2$, and then state the general convergence result.

\subsection{Stabilizer Code for Qutrit Purification}

The purification protocol with multiple carriers can be constructed using a stabilizer code specified by a commuting set of multi-qutrit Pauli generators, each of which is a tensor product of local operators drawn from $\{I_3,Z^k,X^l,Z^kX^l\}$, where $k,l\in\{1,2\}$. In each round, Alice initializes the $m$ carrier qutrits in $\ket{0}^{\otimes m}$ and applies the encoding operation jointly to her qutrit of the shared pair and the carriers. She then transmits the carriers independently through the Pauli channel $\mathcal{N}$. Upon receiving them, Bob applies the corresponding decoding operation jointly to his qutrit of the shared pair and the carriers, and subsequently measures all carriers in the $Z$ basis.

Environmental noise during transmission induces quantum state alterations, which are mathematically characterized by the generalized qutrit Pauli operators $X$, $Y$, and $Z$. In the stabilizer formalism, a Pauli error $E$ is detected if it fails to commute with at least one generator $S_i$, i.e., $ES_i=\omega^kS_iE,\ \ \ k \bmod 3\ne0$ for some $i$. Errors commuting with all generators pass the stabilizer check and remain undetected.

\subsection{Encoding and Decoding Operations}

All scenarios discussed in this section are based on the three-dimensional case, i.e., qutrits. The carrier qutrits are initially prepared as $\ket{0}^{\otimes m}$, which is a code state of the stabilizer generators
\begin{equation}
\label{eq:stabilizer}
    \{Z_1,Z_2,\dots,Z_m\}.
\end{equation}
In this notation, $Z_i$ denotes the phase-shift operator $Z=\mathrm{diag}\{1,\omega,\omega^2\}$ acting exclusively on the $i$-th subsystem. The identity operator $I_3$ is implicitly applied to the remaining subsystems.

Here the code state is a common eigenvector of \eqref{eq:stabilizer}. Specifically, the single-qutrit basis state $\ket{0}$ is an eigenstate of the phase-shift operator $Z$ with eigenvalue $1$. Consequently, the tensor product state $\ket{0}^{\otimes m}$ remains invariant under the action of any local operator $Z_i$, making it a simultaneous eigenstate of the entire stabilizer group with eigenvalue $1$.

For a generator set $\{S_i\}$, there exists a unitary $U_{enc}$ such that
\begin{equation}
\label{eq:encoding_stabilizer_relation}
    U_{enc} Z_i U_{enc}^\dg = S_i \ \ \forall i \in \{1,2,\dots,m\}.
\end{equation}
Any unitary satisfying \eqref{eq:encoding_stabilizer_relation} is an encoding operation, and its adjoint serves as the corresponding decoding operation.

\begin{lemma}
\label{le:cnot1}
    The three-dimensional CNOT gate $U_{CN}$ satisfies
\begin{equation}
    U_{CN} (I_3 \otimes Z) U_{CN}^\dg = Z^2 \otimes Z.
\end{equation}
\end{lemma}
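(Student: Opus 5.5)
The plan is to verify the conjugation identity directly on the computational product basis $\{\ket{j}\otimes\ket{k}\}_{j,k\in\{0,1,2\}}$. Both sides are operators on $\mathbb{C}^3\otimes\mathbb{C}^3$, so it suffices to show that they agree on every basis vector. I will use only the defining action of $U_{CN}$, namely $U_{CN}\ket{j}_C\ket{k}_T=\ket{j}_C\ket{(j+k)\bmod 3}_T$, together with $Z\ket{k}=\omega^k\ket{k}$.

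First I record the inverse gate. Since $U_{CN}$ permutes the computational basis, it is unitary, and its adjoint is
\begin{equation}
U_{CN}^\dagger\ket{j}\ket{k}=\ket{j}\ket{(k-j)\bmod 3}.
\end{equation}

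Next I apply the three factors of the left-hand side in order to $\ket{j}\ket{k}$:
\begin{enumerate}[(i)]
\item $U_{CN}^\dagger$ sends $\ket{j}\ket{k}$ to $\ket{j}\ket{k-j}$.
\item $I_3\otimes Z$ multiplies this vector by the phase $\omega^{k-j}$. This phase is well defined modulo $3$ because $\omega^3=1$.
\item $U_{CN}$ maps $\ket{j}\ket{k-j}$ back to $\ket{j}\ket{k}$.
\end{enumerate}
Hence $U_{CN}(I_3\otimes Z)U_{CN}^\dagger$ is diagonal in the product basis, with eigenvalue $\omega^{k-j}$ on $\ket{j}\ket{k}$.

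Finally, I evaluate the right-hand side on the same vector. We have $(Z^2\otimes Z)\ket{j}\ket{k}=\omega^{2j}\omega^{k}\ket{j}\ket{k}$. Because $\omega^{2j}=\omega^{-j}$ (again using $\omega^3=1$), the phase is $\omega^{k-j}$, which matches the left-hand side. Linearity then gives the operator identity.

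There is no real obstacle here. The only point requiring care is the sign convention. The controlled gate uses $+j$, which yields $Z^{-1}=Z^2$ on the control rather than $Z$. One must also keep $U_{CN}^\dagger$ on the right of the conjugation, consistent with \eqref{eq:encoding_stabilizer_relation}; reversing the order would produce the same result here only by coincidence of diagonal structure. So I would double-check that order explicitly.
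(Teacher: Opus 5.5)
Your proof is correct and is essentially the paper's own argument: a direct check on the computational basis showing that both sides act on $\ket{j,k}$ by the phase $\omega^{k-j}$. One correction to your closing aside: reversing the conjugation order does \emph{not} give the same result, since $U_{CN}^\dagger(I_3\otimes Z)U_{CN}$ multiplies $\ket{j,k}$ by $\omega^{j+k}$ and hence equals $Z\otimes Z$. The order therefore genuinely matters, and the order you used is the correct one.
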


\begin{proof}
For a computational-basis vector $\ket{j,k}$, where $j$ and $k$ label the control and target qutrits, respectively, we have
    \begin{align}
    \nonumber
        U_{CN} (I_3 \otimes Z) U_{CN}^\dg \ket{j,k} &= U_{CN} (I_3 \otimes Z) (U_{CN}^\dg \ket{j,k})\\
        &= U_{CN} (\ket{j} \otimes Z \ket{(k-j) \bmod 3})\\
        \nonumber
        &= \omega^{k-j}\ket{j,k},
    \end{align}
    \begin{align}
        Z^2 \otimes Z\ket{j,k} &= Z^{-1} \otimes Z\ket{j,k}= \omega^{k-j}\ket{j,k}.
    \end{align}

Thus the two operators act identically on every computational-basis vector, proving the stated identity.
\end{proof}

For completeness, applying Lemma~\ref{le:cnot1} successively to a parallel family of forward SUM gates gives the following conjugation identity. 

\begin{lemma}
\label{le:cnot2}
    Let the first qutrit be the common control and let the $m$ carrier qutrits be the respective targets. Define
    \begin{equation}
        U_{\mathrm{par}}^{(m)} = U_{CN}^{(1)} \circ U_{CN}^{(2)} \circ\cdots\circ U_{CN}^{(m)},
    \end{equation}
    where
    \begin{equation}
        U_{CN}^{(i)} \ket{j,k_1,k_2,\dots,k_m} = \ket{j,k_1,\dots,(j+k_i)\bmod 3,\dots,k_m}.
    \end{equation}
    Then
    \begin{equation}
        U_{\mathrm{par}}^{(m)} \left( I_3\otimes\bigotimes_{i=1}^{m}Z_i \right) \left(U_{\mathrm{par}}^{(m)}\right)^\dg = Z_0^{2m}\otimes\bigotimes_{i=1}^{m}Z_i.
    \end{equation}
\end{lemma}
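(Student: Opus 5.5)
The plan is to prove the identity by verifying it on computational-basis vectors $\ket{j,k_1,\dots,k_m}$, in the same way as the proof of Lemma~\ref{le:cnot1}. We could also conjugate through one gate at a time using Lemma~\ref{le:cnot1}, but the direct basis computation is cleaner.

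First observe that the gates $U_{CN}^{(i)}$ mutually commute. Each acts as the permutation $k_i\mapsto k_i+j$ on a distinct target, and all share the control $j$, which none of them alters. Hence the ordering in $U_{\mathrm{par}}^{(m)}$ is immaterial. The adjoint acts by
\begin{equation}
\left(U_{\mathrm{par}}^{(m)}\right)^\dg\ket{j,k_1,\dots,k_m}=\ket{j,(k_1-j)\bmod 3,\dots,(k_m-j)\bmod 3}.
\end{equation}

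Next apply $I_3\otimes\bigotimes_i Z_i$ to this vector. This yields the phase $\prod_{i=1}^m\omega^{k_i-j}=\omega^{\sum_i k_i-mj}$; since $\omega^3=1$, the reduction mod $3$ does not affect the phase. Applying $U_{\mathrm{par}}^{(m)}$ then restores $\ket{j,k_1,\dots,k_m}$, so the left-hand side acts as multiplication by $\omega^{\sum_i k_i-mj}$. On the right-hand side, $Z_0^{2m}$ contributes $\omega^{2mj}=\omega^{-mj}$ because $\omega^{3mj}=1$, and $\bigotimes_i Z_i$ contributes $\omega^{\sum_i k_i}$. The two diagonal phases agree on every basis vector, which proves the identity.

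There is no real obstacle. The only points needing care are the reduction $Z^{2m}=Z^{-m}$, which uses $Z^3=I$, and the commutativity of the parallel SUM gates. Equivalently, one can induct on $m$: each conjugation by $U_{CN}^{(i)}$ leaves the $Z_{i'}$ factors for $i'\neq i$ untouched and, by Lemma~\ref{le:cnot1}, converts $Z_i$ into $Z_0^2Z_i$. Since the $Z$ factors on the control all commute, these contributions multiply to $Z_0^{2m}$.
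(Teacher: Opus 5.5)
Your proposal is correct and follows essentially the same route as the paper: both compute the action of $\left(U_{\mathrm{par}}^{(m)}\right)^\dagger$ on $\ket{j,k_1,\dots,k_m}$ and read off the diagonal phase $\omega^{\sum_i k_i-mj}$ on both sides, using $\omega^{3mj}=1$. Your inductive alternative via Lemma~\ref{le:cnot1} is also valid. It only additionally needs the fact that $Z_0$ commutes with every SUM gate, since each gate is diagonal on the control.
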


\begin{proof}
For any computational-basis state $\ket{j,k_1,k_2,\dots,k_m}$, the definition of $U_{\mathrm{par}}^{(m)}$ gives
\begin{equation}
    \left(U_{\mathrm{par}}^{(m)}\right)^\dg \ket{j,k_1,k_2,\dots,k_m} = \ket{j,(k_1-j)\bmod 3,(k_2-j)\bmod 3,\dots,(k_m-j)\bmod 3}.
\end{equation}
Therefore,
\begin{align}
    \nonumber
    &U_{\mathrm{par}}^{(m)} \left( I_3\otimes\bigotimes_{i=1}^{m}Z_i \right) \left(U_{\mathrm{par}}^{(m)}\right)^\dg \ket{j,k_1,k_2,\dots,k_m} \\
    \nonumber
    &\qquad= \omega^{\sum_{i=1}^{m}(k_i-j)} \ket{j,k_1,k_2,\dots,k_m} \\
    &\qquad= \omega^{\sum_{i=1}^{m}k_i-mj} \ket{j,k_1,k_2,\dots,k_m}.
\end{align}
On the other hand,
\begin{align}
    \nonumber
    &\left( Z_0^{2m}\otimes\bigotimes_{i=1}^{m}Z_i \right) \ket{j,k_1,k_2,\dots,k_m} \\
    \nonumber
    &\qquad= \omega^{2mj+\sum_{i=1}^{m}k_i} \ket{j,k_1,k_2,\dots,k_m} \\
    &\qquad= \omega^{\sum_{i=1}^{m}k_i-mj} \ket{j,k_1,k_2,\dots,k_m},
\end{align}
where the last equality follows from $\omega^{3mj}=1$. Since the two operators have the same action on every computational-basis state, the stated identity follows.
\end{proof}

\subsection{Case Study: Two-Carrier Instance}

To illustrate the mechanism of the three-dimensional mCAEPP, we consider a two-carrier qutrit protocol with $m=2$. We choose the stabilizer generators
\begin{equation}
\label{eq:X1X22}
    \{X_1X_2^2,\; Z_0Z_1Z_2\}.
\end{equation}
Here, $X_i$ and $Z_i$ denote the qutrit shift and phase operators acting on the $i$-th subsystem, respectively, with the identity applied to all other subsystems. The index $0$ refers to the shared qutrit, while indices $1$ and $2$ label the two carrier qutrits.

For these operators to form a valid stabilizer group, they must commute mutually. This is readily verified by explicitly calculating their product,
\begin{equation}
    X_1X_2^2Z_0Z_1Z_2 = \omega \cdot \omega^2 Z_0Z_1Z_2X_1X_2^2 = Z_0Z_1Z_2X_1X_2^2.
\end{equation}
Having established their commutativity, we now proceed to evaluate their error detection capabilities. 

The generator $S_1 = X_1X_2^2$ precisely detects single-carrier $Z$- and $Y$-type errors. The corresponding nontrivial commutation relations are explicitly given by
\begin{align}
    Z_1S_1 &= \omega S_1Z_1,\\
    Z_2S_1 &= \omega^2 S_1Z_2.
\end{align}
Given that $Y=ZX$, the detection signature of $Y$-errors is mathematically equivalent to that of pure $Z$-errors.

The generator $S_2=Z_0Z_1Z_2$ detects an error configuration whenever the sum of its shift exponents is nonzero modulo $3$. The corresponding nontrivial commutation relations generating the nontrivial phases are
\begin{align}
    S_2 X_i &= \omega X_iS_2, \quad &i&\in\{0,1,2\},\\
    S_2X_kX_l &= \omega^2 X_kX_lS_2, \quad &k,l&\in\{0,1,2\}, k\ne l.
\end{align}

\subsection{Scalability: Beyond Two Carriers}

\label{subsec:sca}

We now analyze the scalable mCAEPP for qutrit depolarizing channels above the non-entanglement-breaking threshold. The star-code encoder $U_{enc}$ considered below is defined independently by the circuit in Fig.~\ref{fig:m_carrier_caepp}; it is distinct from the parallel-SUM unitary $U_{\mathrm{par}}^{(m)}$ in Lemma~\ref{le:cnot2}.

To describe the corresponding multi-carrier stabilizer construction, we define the `star' stabilizer generators for $m$ carriers as
\begin{equation}
\label{eq:generators}
    \{X_1X_m^2, X_2X_m^2, \dots, X_{m-1}X_m^2, Z_0Z_1\cdots Z_m\}.
\end{equation}

The connection between the stabilizer generators in \eqref{eq:generators} and the circuit in Fig.~\ref{fig:m_carrier_caepp} follows directly from the conjugation of the initial carrier stabilizers in \eqref{eq:stabilizer}. For each $i=1,\ldots,m-1$, the gate $H_3^\dg$ first transforms $Z_i$ into $X_i$. The subsequent reverse CNOT gate $U_{CN'}$, with the $i$-th carrier as the control and the $m$-th carrier as the target, transforms it into $X_iX_m^2$. Similarly, the successive reverse CNOT gates $U_{CN'}$, with qutrits $0,1,\ldots,m-1$ as controls and the $m$-th carrier as the common target, transform $Z_m$ into $Z_0Z_1\cdots Z_m$. Therefore,
\begin{align}
    U_{enc}Z_iU_{enc}^\dg &= X_iX_m^2, \qquad i=1,\ldots,m-1,\\
    U_{enc}Z_mU_{enc}^\dg &= Z_0Z_1\cdots Z_m.
\end{align}
Thus, the encoding circuit maps the initial stabilizer set \eqref{eq:stabilizer} exactly onto the star stabilizer set \eqref{eq:generators}. Since Bob applies $U_{dec}=U_{enc}^\dg$, measuring all carriers in the $Z$ basis after decoding implements the corresponding stabilizer checks. The round is accepted when all measurement outcomes are zero.

The conditional convergence of the resulting protocol is stated in the following theorem.

\begin{theorem}
\label{thm:mcaepp_convergence}
    Let $q_{00}^{(n,m)}$ denote the fidelity of the shared state conditioned on $n$ consecutive successful rounds of the mCAEPP with $m$ carriers per round, where each round begins with the bilateral Clifford preprocessing $W_A^*\otimes W_B$, with $W=H_3^\dagger P_3^\dagger H_3$ and $P_3=\operatorname{diag}(1,1,\omega)$. For a qutrit depolarizing channel with initial fidelity $p_{00}>1/3$, the conditional recurrence converges for every fixed finite $m$. Moreover,
    \begin{equation}
    \label{eq:ordered_limit_theorem}
        \lim_{m\to\infty} \left( \lim_{n\to\infty}q_{00}^{(n,m)} \right) =1.
    \end{equation}
\end{theorem}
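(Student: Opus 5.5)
The plan is to reduce the conditional recurrence to a fixed linear map on the nine Bell-diagonal weights, followed by normalization. I would then get fixed-$m$ convergence from Perron--Frobenius theory, and the ordered large-$m$ limit from comparing the identity entry of the matrix with the largest competing eigenvalue.

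\textbf{Step 1: linear form of one round.} Because the carriers pass through a Pauli channel and all gates are Clifford, each Bell-diagonal component $\ket{\Phi^{a,b}}\bra{\Phi^{a,b}}$ is mapped to a Bell-diagonal component. Its label is an affine function of $(a,b)$ and of the carrier error labels $(n_i,m_i)$. The preprocessing $W_A^*\otimes W_B$ with $W=H_3^\dagger P_3^\dagger H_3$ acts on labels by a fixed $\mathbb{Z}_3$-linear permutation $\pi$. I expect $\pi$ to be a shear $(a,b)\mapsto(a,b+a)$ in a suitable convention, so that pure phase errors acquire a shift component and become visible to $Z_0Z_1\cdots Z_m$.

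Propagating the errors through $U_{dec}=U_{enc}^\dagger$ and using the star generators \eqref{eq:generators}, acceptance (all outcomes $0$) should impose the following constraints.
\begin{itemize}
\item The shift labels satisfy $a+\sum_i n_i\equiv 0$.
\item The carrier errors agree pairwise, $m_i\equiv m_m$ for $i<m$ (up to the signs produced by the $X_iX_m^2$ generators).
\end{itemize}
The output label is then determined by $(a,b)$ together with the common carrier phase. This gives $q^{(n+1)}=T_m q^{(n)}/\mathbf 1^{\top}T_m q^{(n)}$, where $T_m=D_m\pi$ is nonnegative and $D_m$ has entries that are explicit sums of products of the $p_{nm}$. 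Hence $q^{(n)}\propto T_m^n q^{(0)}$. The Lemmas~\ref{le:cnot1} and~\ref{le:cnot2}-type conjugation rules supply the label bookkeeping.

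\textbf{Step 2: fixed $m$.} For the depolarizing channel, $p_{00}=F$ and $p_{nm}=(1-F)/8$ otherwise. The entries of $T_m$ then depend only on coarse label classes, so $T_m$ is a Markov-like matrix with a large symmetry group and can be block-diagonalized by characters of $\mathbb{Z}_3^2$.

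I would check two things. First, some power of $T_m$ is entrywise positive on the relevant support; this is where the shear $\pi$ is needed. Then Perron--Frobenius gives a simple dominant eigenvalue $\lambda_1(m)$ with a positive eigenvector $v^{(m)}$, and $q^{(n)}\to v^{(m)}/\|v^{(m)}\|_1$ for every positive initial vector. Second, if primitivity fails, I would fall back on the explicit spectrum: show that the block containing $e_{00}$ has a strictly largest modulus eigenvalue and that $q^{(0)}$ has a nonzero component along it.

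\textbf{Step 3: large-$m$ limit.} This is the main obstacle. The entry of $T_m$ on the $00$ label is of order $F^{m}$ from the error-free branch, plus corrections. Any surviving error label requires all $m$ carriers to conspire: a common phase and a shift sum cancelling $a$. For $a=0$, the common nonzero carrier phase with zero total shift contributes roughly $\bigl(3\cdot\tfrac{1-F}{8}\bigr)^m$-type terms, up to polynomial counting factors. Errors in the $a\ne0$ classes contribute at most of order $\bigl(\max(F,\tfrac{1-F}{2})\bigr)^{m}$ times a factor at most $(1-F)$ that fails to grow.

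I would use a Collatz--Wielandt bound with a test vector concentrated on $e_{00}$ to prove two estimates:
\begin{itemize}
\item $\lambda_1(m)\ge F^m(1-o(1))$;
\item the off-$00$ components of $v^{(m)}$ are bounded by $C\,(\rho/F)^m$ relative to $v^{(m)}_{00}$, for some $\rho<F$.
\end{itemize}
The condition $F>1/3$ is exactly what makes $F>(1-F)/2$, which matches the single-carrier threshold in Proposition~\ref{pr:inf} for symmetric marginals. Hence $v^{(m)}_{00}/\|v^{(m)}\|_1\to1$, which gives \eqref{eq:ordered_limit_theorem}.

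The delicate part is controlling the combinatorial multiplicity of accepted carrier configurations, which grows like $3^{m}$ against $F^m$. I would first try computing the generating function over carrier configurations via a discrete Fourier transform on the shift-sum constraint. That yields closed forms such as $\tfrac13\bigl[(F+\cdots)^m+2\,\mathrm{Re}(\cdots)^m\bigr]$, whose dominant base can be compared with $F$ directly.
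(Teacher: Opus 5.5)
Your Steps 1 and 2 match the paper. One round is $q\mapsto T_mq/\mathbf 1^{T}T_mq$, where $T_m$ is a nonnegative block matrix composed with the shear permutation $\Pi_S$. The shear is what makes $T_m$ primitive (the paper checks that its square is entrywise positive), and Perron--Frobenius then gives fixed-$m$ convergence. (The shear you need is $(s,t)\mapsto(s+t,t)$ on (shift, phase) labels. Your guess $(a,b)\mapsto(a,b+a)$ would leave pure phase errors $(0,b)$ invisible, contrary to the purpose you state.)

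The gap is in Step 3, which rests on a wrong picture of the entries of $T_m$. Carry out the Fourier computation you mention at the end: the phase-$0$ carrier column has total weight $(3F+1)/4$ and nontrivial Fourier coefficient $(9F-1)/8$, while the other columns are flat with weight $3(1-F)/8$. Set $A_m=\bigl((3F+1)/4\bigr)^m$, $B_m=\bigl((9F-1)/8\bigr)^m$, $C_m=\bigl(3(1-F)/8\bigr)^m$. Then an input in sector $s=0$ passes with common carrier phase $0$ with weight $(A_m+2B_m)/3$. An input with $s\neq0$ passes with weight $(A_m-B_m)/3$. Every common nonzero carrier phase contributes $C_m/3$.

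Since $(3F+1)/4>F$, the $00$ entry is not of order $F^m$: carrier strings whose shift errors cancel outweigh the error-free branch exponentially. More importantly, shift errors $s\neq0$ on the shared pair are \emph{not} suppressed to order $\rho^m$ with $\rho<F$. They survive with the same leading weight $A_m/3$, because the shift sum of many carriers is nearly uniform modulo $3$. Hence $3T_m/A_m=(I+\eta_mK+\varepsilon_mL)\Pi_S\to\Pi_S$, where $\eta_m=B_m/A_m$, $\varepsilon_m=C_m/A_m$, $K$ equals $2$ on the sector $s=0$ and $-1$ elsewhere, and $L$ couples the three phase labels within a sector.

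At leading order the one-round map is therefore a permutation whose eigenvalues are all cube roots of unity. The vector $e_{00}$ gains only a factor $1+O(\eta_m)$ per round over the other sectors. A Collatz--Wielandt bound with a test vector on $e_{00}$ yields only $\lambda_1\geq(A_m+2B_m)/3$. The eigen-equation for an off-$00$ component merely passes it along its shear orbit with gain $1+O(\eta_m)$ per step. So the entrywise comparison with $F^m$ you plan cannot produce a bound of the form $C(\rho/F)^m$.

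The missing idea is a degenerate perturbation argument at the subleading scale $\eta_m$. The paper uses $\Pi_S^3=I$ to write $(3T_m/A_m)^3=I+\eta_mG+o(\eta_m)$, with
$G=\sum_{k=0}^{2}\Pi_S^kK\Pi_S^{-k}=\mathrm{diag}(6,0,0,-3,0,0,-3,0,0)$,
which is the first-order correction averaged over a full shear orbit. The remainder is $o(\eta_m)$ precisely because $\varepsilon_m/\eta_m=\bigl(3(1-F)/(9F-1)\bigr)^m\to0$. This, rather than $F>(1-F)/2$, is where $F>1/3$ actually enters. Stability of the simple isolated eigenvalue $6$ of $G$ then forces the Perron vector of $T_m^3$, which is also that of $T_m$, to converge to $e_{00}$.

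Your target estimate does hold a posteriori: the off-$00$ weight of the fixed point is of relative size about $\varepsilon_m/\eta_m$. But it comes from an eigenvalue gap of relative size $O(\eta_m)$ competing with phase-mixing couplings of size $\varepsilon_m$, not from entrywise domination of the $00$ entry.
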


\begin{proof}
    The exact conditional recurrence, its convergence for fixed $m$, and the subsequent large-$m$ limit of its attracting fixed point are proved in Appendix~\ref{app:sca}.
\end{proof}

Theorem~\ref{thm:mcaepp_convergence} is a conditional asymptotic statement. Cumulative success probability and restart-aware finite-resource costs at a prescribed target fidelity are defined in Section~\ref{sec:finite_resource_comparison}.

\begin{figure}[htbp]
\centering
\resizebox{0.8\textwidth}{!}{
$
\Qcircuit @C=1.15em @R=2.2em {
A_{\mathrm{in}}
 & \mbox{\hspace{0.2em}}
 & \gate{W^*}
 & \qw & \qw & \qw & \qw & \qw
 & \ctrl{6}
 & \qw
 & \qw & \qw & \qw & \qw & \qw & \qw
 & \qw & \qw & \qw
\\
\ket{0}_{1}
 & \mbox{\hspace{0.2em}}
 & \gate{H_3^{\dagger}}
 & \ctrl{5} & \qw & \qw & \qw & \qw
 & \qw
 & \gate{\mathcal{N}}
 & \qw & \qw & \qw & \qw & \qw & \ctrl{5}
 & \gate{H_3}
 & \meter
   \ar@{}[u]|-{\raisebox{-2.2em}{
   $\mathrm{out}_1=0$}}
 & \qw
\\
\ket{0}_{2}
 & \mbox{\hspace{0.2em}}
 & \gate{H_3^{\dagger}}
 & \qw & \ctrl{4} & \qw & \qw & \qw
 & \qw
 & \gate{\mathcal{N}}
 & \qw & \qw & \qw & \qw & \ctrl{4} & \qw
 & \gate{H_3}
 & \meter
   \ar@{}[u]|-{\raisebox{-2.2em}{
   $\mathrm{out}_2=0$}}
 & \qw
\\
\ket{0}_{3}
 & \mbox{\hspace{0.2em}}
 & \gate{H_3^{\dagger}}
 & \qw & \qw & \ctrl{3} & \qw & \qw
 & \qw
 & \gate{\mathcal{N}}
 & \qw & \qw & \qw & \ctrl{3} & \qw & \qw
 & \gate{H_3}
 & \meter
   \ar@{}[u]|-{\raisebox{-2.2em}{
   $\mathrm{out}_3=0$}}
 & \qw
\\
\vdots
 & \mbox{\hspace{0.2em}}
 & \mbox{}
 & \mbox{} & \mbox{} & \mbox{} & \cdots & \mbox{}
 & \mbox{}
 & \vdots
 & \mbox{} & \mbox{} & \cdots
 & \mbox{} & \mbox{} & \mbox{}
 & \mbox{}
 & \vdots
 & \mbox{\hspace{2em}
   $\xrightarrow{\ \forall i:\,
   \mathrm{out}_i=0\ }\rho_{\mathrm{out}}$}
\\
\ket{0}_{m-1}
 & \mbox{\hspace{0.2em}}
 & \gate{H_3^{\dagger}}
 & \qw & \qw & \qw & \qw & \ctrl{1}
 & \qw
 & \gate{\mathcal{N}}
 & \qw & \ctrl{1} & \qw & \qw & \qw & \qw
 & \gate{H_3}
 & \meter
   \ar@{}[u]|-{\raisebox{-2.2em}{
   $\mathrm{out}_{m-1}=0$}}
 & \qw
\\
\ket{0}_{m}
 & \mbox{\hspace{0.2em}}
 & \qw
 & \gate{X^{-1}}
 & \gate{X^{-1}}
 & \gate{X^{-1}}
 & \qw
 & \gate{X^{-1}}
 & \gate{X^{-1}}
 & \gate{\mathcal{N}}
 & \targ
 & \targ
 & \qw
 & \targ
 & \targ
 & \targ
 & \qw
 & \meter
   \ar@{}[u]|-{\raisebox{-1.6em}{
   $\mathrm{out}_m=0$}}
 & \qw
\\
B_{\mathrm{in}}
 & \mbox{\hspace{0.2em}}
 & \gate{W}
 & \qw & \qw & \qw & \qw & \qw
 & \qw
 & \qw
 & \ctrl{-1}
 & \qw & \qw & \qw & \qw & \qw
 & \qw & \qw & \qw
}
$
}
\vspace{0.4em}

\caption{Explicit encoding and decoding circuit for the scalable mCAEPP with $m$ carriers. Alice applies $H_3^\dg$ to carriers $1,\ldots,m-1$, followed by the reverse qutrit CNOT gates $U_{CN'}$ that implement $U_{enc}$. After transmission through the carrier channels, Bob applies the inverse decoding sequence $U_{dec}=U_{enc}^\dg$ using the qutrit CNOT gates $U_{CN}$ and $H_3$, and then measures all carriers in the $Z$ basis.}
\label{fig:m_carrier_caepp}

\vspace{0.3em}
\small
\begin{minipage}{0.96\textwidth}
\textbf{Notation description.}
\begin{itemize}

\item $A_{\mathrm{in}}$ and $B_{\mathrm{in}}$ are the two qutrits of the initial noisy state shared by Alice and Bob.

\item $\ket{0}_{1},\ldots,\ket{0}_{m}$ are the initial carrier states.

\item The local preprocessing is the bilateral qutrit Clifford $W_A^*\otimes W_B$, where $P_3:=\operatorname{diag}(1,1,\omega)$ and $W:=H_3^\dagger P_3^\dagger H_3$. It induces the Bell-label shear $(s,t)\mapsto(s\oplus t,t)$.

\item $H_3^{\dagger}$ and $H_3$ are inverse qutrit Fourier transforms satisfying $H_3^{\dagger}ZH_3=X$.

\item A control dot connected to an $X^{-1}$ box denotes the reverse qutrit CNOT gate $U_{CN'}$, while a control dot connected to a circled-plus target denotes the qutrit CNOT gate $U_{CN}$.

\item $\mathcal{N}$ is the identical qutrit Pauli channel acting independently on each carrier.

\item The round is accepted only when $\mathrm{out}_i=0$ for every $i=1,\ldots,m$; $\rho_{\mathrm{out}}$ denotes the corresponding conditionally purified output state. For each fixed $m$, the accepted recurrence converges as the number of successful rounds tends to infinity, and the fidelity of its attracting fixed point approaches one as $m\to\infty$, as stated in Theorem~\ref{thm:mcaepp_convergence}.

\item The gates $W_A^*$ and $W_B$ are included in each complete round considered in Theorem~\ref{thm:mcaepp_convergence}. Omitting these two gates gives the raw star-stabilizer check $\mathcal C_Z^{(m)}$ used as a building block in the alternating-basis protocols of Theorems~\ref{thm:universal_purification} and~\ref{thm:prime_power_mcaepp}.

\end{itemize}
\end{minipage}
\end{figure}

\subsection{MUB-Guided preprocessing for General Pauli Channels}
\label{subsec:ada}

We now consider a memoryless qutrit Pauli channel with Bell-error probabilities $\{p_{xz}\}_{x,z\in\mathbb{Z}_3}$. Instead of completely averaging the eight nonidentity Pauli errors, we first apply an MUB-preserving pair symmetrization that equalizes the two opposite nonzero points on every MUB line while retaining the total weight of each line.

Define the single-qutrit Clifford operator $J$ by
\begin{equation}
    J\ket{j}=\ket{-j\bmod 3}.
\end{equation}
It satisfies
\begin{equation}
    JXJ^\dagger=X^{-1},
    \qquad
    JZJ^\dagger=Z^{-1}.
\end{equation}

For each shared pair, Alice and Bob jointly draw one uniformly random bit $r\in\{0,1\}$ and apply $J_A^r\otimes J_B^r$. For each carrier transmission, they jointly draw a new random bit, independently of all other carrier transmissions. Alice applies $J^r$ to the carrier immediately before the channel, and Bob applies $J^r$ immediately after the channel and before decoding. The two operations cancel in the absence of a channel error. After averaging over the shared random bit, the effective Pauli probabilities become
\begin{equation}
\label{eq:mub_pair_symmetrization}
    \widetilde p_{xz} = \frac{p_{xz}+p_{-x,-z}}{2},
\end{equation}
where all indices are evaluated modulo $3$. This operation preserves $p_{00}$ and the total probability carried by every MUB line.

The four MUB lines through the origin have weights
\begin{align}
    L_1&=p_{00}+p_{10}+p_{20},\notag\\
    L_2&=p_{00}+p_{01}+p_{02},\notag\\
    L_3&=p_{00}+p_{11}+p_{22},\notag\\
    L_4&=p_{00}+p_{12}+p_{21}.
\end{align}

\begin{lemma}
\label{lemma:mub_dominance}
If $p_{00}>1/3$, then
\begin{equation}
    L_{\max}:=\max\{L_1,L_2,L_3,L_4\}>\frac{1}{2}.
\end{equation}
\end{lemma}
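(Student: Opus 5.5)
The plan is to use the line-sum identity for the qutrit phase space $\mathbb{Z}_3^2$. The four lines through the origin, $\{(x,0)\}$, $\{(0,z)\}$, $\{(x,x)\}$ and $\{(x,2x)\}$, each contain the origin. Their punctured parts $\ell\setminus\{(0,0)\}$ partition the eight nonzero points: every nonzero $(x,z)$ lies on exactly one line through the origin, namely the line spanned by it. Summing the four weights therefore counts $p_{00}$ four times and every other probability exactly once. This gives
\begin{equation}
    \sum_{k=1}^4 L_k = 4p_{00} + \sum_{(x,z)\neq(0,0)} p_{xz} = 4p_{00} + (1-p_{00}) = 1+3p_{00}.
\end{equation}
Before using it, I will verify the partition directly from the listed index pairs, $\{10,20\}$, $\{01,02\}$, $\{11,22\}$ and $\{12,21\}$. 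These are disjoint and together exhaust the eight nonzero labels.

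Next I bound the maximum by the average:
\begin{equation}
    L_{\max}\ \ge\ \frac{1}{4}\sum_{k=1}^4 L_k = \frac{1+3p_{00}}{4}.
\end{equation}
When $p_{00}>1/3$, the right-hand side is strictly greater than $(1+1)/4 = 1/2$. This yields $L_{\max}>1/2$, with strict inequality inherited from the strict hypothesis.

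I do not expect a serious obstacle. The only point that needs care is the combinatorial fact that the punctured lines partition the nonzero points, and this holds because $3$ is prime: every nonzero vector generates a one-dimensional subspace. I will also note that the identity holds for the raw $p_{xz}$ and equally for the symmetrized $\widetilde p_{xz}$ in \eqref{eq:mub_pair_symmetrization}. Pair symmetrization preserves both $p_{00}$ and each line weight, so the lemma applies before or after that step. This observation is what later justifies aligning a maximum-weight line with the primary axis.
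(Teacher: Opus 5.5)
Your proof is correct and essentially the paper's argument: the four punctured lines partition the eight nonzero labels, so $\sum_{i=1}^4 L_i = 1+3p_{00} > 2$ when $p_{00}>1/3$, forcing some line weight above $1/2$. Your averaging bound $L_{\max}\ge (1+3p_{00})/4$ is just the explicit form of the paper's pigeonhole step.
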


\begin{proof}
The four MUB lines intersect only at $(0,0)$ and partition the eight nonzero points of $\mathbb{Z}_3^2$. Consequently,
\begin{equation}
    \sum_{i=1}^{4}L_i = 4p_{00}+\sum_{(x,z)\neq(0,0)}p_{xz} = 1+3p_{00}.
\end{equation}
If $p_{00}>1/3$, then $\sum_{i=1}^{4}L_i>2$, and hence at least one of the four line weights is strictly larger than $1/2$.
\end{proof}

The pair symmetrization in \eqref{eq:mub_pair_symmetrization} leaves all four line weights unchanged. Alice and Bob next apply a bilateral Clifford transformation $U_A\otimes U_B^*$ that maps a line attaining $L_{\max}$ to the line $L_1$ associated with $z=0$. The corresponding conjugation is also applied to every carrier channel use so that the carrier Pauli labels undergo the same phase-space permutation. After this alignment, the effective Pauli probabilities can be written as
\begin{equation}
\label{eq:mub_aligned_distribution}
    P=
    \begin{pmatrix}
        p_{00}&p_{01}&p_{02}\\
        p_{10}&p_{11}&p_{12}\\
        p_{20}&p_{21}&p_{22}
    \end{pmatrix}
    =
    \begin{pmatrix}
        a&c&c\\
        b&d&e\\
        b&e&d
    \end{pmatrix},
\end{equation}
where
\begin{equation}
    a=p_{00},
    \qquad
    a+2(b+c+d+e)=1,
    \qquad
    b=\max\{b,c,d,e\}.
\end{equation}
In particular, the aligned dominant line has weight
\begin{equation}
    a+2b=L_{\max}>\frac{1}{2}.
\end{equation}

Let $\mathcal C_Z^{(m)}$ denote one successful raw $m$-carrier star-stabilizer check in the aligned coordinate system. It consists of the carrier encoding, carrier transmission, decoding, and all-zero syndrome post-selection, and does not include the bilateral shear $W_A^*\otimes W_B$ used in Theorem~\ref{thm:mcaepp_convergence}. A complete adaptive cycle consists of a first check $\mathcal{C}_Z^{(m)}$, the bilateral Fourier rotation
\begin{equation}
    R_{AB}=H_{3,A}\otimes H_{3,B}^*,
\end{equation}
a second check $\mathcal{C}_Z^{(m)}$ using a fresh independent set of $m$ carriers, and the inverse rotation
\begin{equation}
    R_{AB}^{-1}=H_{3,A}^\dagger\otimes H_{3,B}^T.
\end{equation}
The inverse rotation returns the accepted output to the original Bell coordinate system. The complete cycle is accepted only when every carrier measurement in both checks gives outcome zero.

\begin{theorem}
\label{thm:universal_purification}
Let $\mathcal{N}$ be a memoryless qutrit Pauli channel with entanglement fidelity $p_{00}>1/3$. Apply the MUB-preserving pair symmetrization independently to the initial shared pair and to every carrier channel use, and map a maximum-weight MUB line to the aligned line $z=0$. Let $q_{00}^{(n,m)}$ be the fidelity conditioned on $n$ consecutive successful complete cycles, with $m$ fresh carriers used in each of the two checks of every cycle. Then there exists an integer $m_0$ such that, for every fixed $m\geq m_0$, the conditional recurrence starting from the pre-processed channel state converges to a fixed point $\mathbf{q}_*^{(m)}$. Moreover,
\begin{equation}
\label{eq:ordered_limit_general_channel}
    \lim_{m\to\infty} \left( \lim_{n\to\infty}q_{00}^{(n,m)} \right) = \lim_{m\to\infty}q_{*,00}^{(m)} = 1.
\end{equation}
\end{theorem}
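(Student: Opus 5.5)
The plan is to write one complete adaptive cycle as a fixed linear map on Bell-label distributions, followed by normalization. Then I would study its Perron eigenvector for large $m$.

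\textbf{Step 1: one raw check.} First I would compute the action of a single raw check $\mathcal C_Z^{(m)}$ on a shared Bell-diagonal state with label distribution $\mathbf q=(q_{xz})$. Because the carriers are independent and the channel is Pauli, I would track the Pauli frame through the star encoder of Fig.~\ref{fig:m_carrier_caepp}.

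The generators $X_iX_m^2$ accept exactly when all carrier phase labels agree, $z_1=\cdots=z_m$. The generator $Z_0Z_1\cdots Z_m$ accepts exactly when $x_0+\sum_i x_i\equiv0 \pmod 3$. Propagating the carrier errors back through $U_{dec}$, the accepted shared label becomes $(x_0,\,z_0+\kappa)$, where $\kappa$ is a fixed linear function of the common carrier phase. The outcome I aim for is
\begin{equation*}
\mathbf q\mapsto \frac{K^{(m)}\mathbf q}{\|K^{(m)}\mathbf q\|_1},
\end{equation*}
where $K^{(m)}$ is a $9\times9$ nonnegative matrix depending only on the preprocessed channel matrix $P$ in \eqref{eq:mub_aligned_distribution}. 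Its entries are explicit sums of products of column and row weights of $P$. Conjugating the second check by $R_{AB}$ only permutes labels via the Fourier action $(x,z)\mapsto(z,-x)$ on Bell labels. So the full cycle is $\mathbf q\mapsto M_m\mathbf q/\|M_m\mathbf q\|_1$, with
\begin{equation*}
M_m=\Pi^{-1}K^{(m)}\Pi K^{(m)},
\end{equation*}
where $\Pi$ is that permutation matrix. Since the recurrence is projectivized power iteration, its convergence for fixed $m$ reduces to $M_m$ having a simple dominant eigenvalue, strictly separated from the others, with an eigenvector that has nonzero overlap with the preprocessed initial state.

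\textbf{Step 2: asymptotics of the carrier acceptance probabilities.} Next I would use the aligned form of $P$, the symmetrization \eqref{eq:mub_pair_symmetrization}, and Lemma~\ref{lemma:mub_dominance}. The aligned column $z=0$ has weight $a+2b=L_{\max}>1/2$, while each other column has weight $c+d+e<1/4$. The probability that all $m$ carrier phases are equal is therefore dominated by the common phase $0$, with relative corrections of order $\bigl((c+d+e)/(a+2b)\bigr)^m\to0$. Conditioned on common phase $0$, the carrier shifts are i.i.d.\ with law $(a,b,b)/(a+2b)$. The sum of $m$ such shifts is uniform up to a bias $\lambda^m$ toward $0$, where $\lambda=(a-b)/(a+2b)\in(-1,1)$.

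Each entry of $K^{(m)}$ is then an explicit exponential sum in $m$. After normalizing by the leading factor $(a+2b)^m/3$, I expect $K^{(m)}$ to tend to a simple limit. In that limit, the shift label is essentially passed through unchanged and the phase label is essentially unshifted. The purification then has to come from the interplay between the two rotated checks and the shared-pair weights themselves.

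\textbf{Step 3: spectral gap.} This is the main obstacle. Working in the symmetry-reduced coordinates $(a,b,c,d,e)$, which the pair symmetrization makes invariant, I would show that for $m\ge m_0$ the matrix $M_m$ restricted to the reachable support is primitive, or at least has a simple Perron root. I would also show that the Perron eigenvector $\mathbf q_*^{(m)}$ concentrates on the label $(0,0)$. The first attempt is a perturbation argument: compute the $m\to\infty$ limit matrix, normalized appropriately, and check that its dominant eigenvalue is simple with eigenvector $e_{00}$. Then continuity of simple eigenvalues and eigenvectors gives the gap for all large $m$, together with $q^{(m)}_{*,00}\to1$. This yields \eqref{eq:ordered_limit_general_channel}.

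If the limit matrix turns out degenerate, because the shift filter becomes uninformative as $m\to\infty$, I would instead compare the leading correction terms $\lambda^m$ and $\bigl((c+d+e)/(a+2b)\bigr)^m$. The goal would be to show that the $(0,0)$-weight is multiplied per cycle by a factor exceeding every competing eigenvalue by a ratio bounded away from $1$. Here the inequality $a+2b>1/2$, equivalently $p_{00}>1/3$ via Lemma~\ref{lemma:mub_dominance}, should be exactly what makes the diagonal entry for $(0,0)$ win.

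\textbf{Step 4: convergence for each fixed $m\ge m_0$.} Finally, the preprocessed initial vector has $q_{00}=p_{00}>0$. Hence it has positive overlap with the Perron eigenvector, and the power iteration converges to $\mathbf q_*^{(m)}$ for each fixed $m\ge m_0$.
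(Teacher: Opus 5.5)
Your Steps~1, 2 and~4 match the paper's setup, but Step~3, which carries the whole theorem, is missing its key idea, and your primary route there fails. After rescaling by $(\alpha^m/3)^2$ with $\alpha=a+2b$, the cycle matrix converges to $I_9$. So the ``$m\to\infty$ limit matrix'' has a completely degenerate spectrum and says nothing about $\mathbf e_{00}$.

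What is missing is a first-order degenerate perturbation argument. Set $\beta=a-b$, $\gamma=c+d+e$ and $r_m=(\beta/\alpha)^m$. Your own Step~2 computation then gives the exact decomposition $\frac{3}{\alpha^m}K^{(m)}=I+r_mD_Z+E_m$. Here $D_Z$ is diagonal with entry $2$ on the sector $s=0$ and $-1$ on $s\neq0$. The matrix $E_m$ collects the phase-shifting contributions of the columns $l=1,2$, and $\lVert E_m\rVert=O((\gamma/\alpha)^m)$. Conjugation by the Fourier permutation turns $D_Z$ into $D_X$, with entry $2$ on $t=0$ and $-1$ on $t\neq0$. Hence the rescaled cycle is $I+r_m(D_Z+D_X)+o(r_m)$.

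The matrix $G=D_Z+D_X$ has simple top eigenvalue $4$ with eigenvector $\mathbf e_{00}$; its other eigenvalues are $1$ and $-2$. Apply isolated-eigenvalue perturbation to $(\widehat M^{(m)}_{\mathrm{cyc}}-I)/r_m\to G$. This yields a simple real eigenvalue $1+4r_m+o(r_m)$ that strictly dominates $1+r_m+o(r_m)$ and $1-2r_m+o(r_m)$. By nonnegativity it is the Perron root, and both the right and left Perron vectors converge to $\mathbf e_{00}$. This is how the paper obtains the fixed-$m$ convergence for $m\ge m_0$ and the ordered limit. It also shows why two Fourier-related checks are needed: $D_Z$ alone has a threefold degenerate top eigenspace. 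Note that the ratio $(1+4r_m)/(1+r_m)$ tends to $1$, so a separation ``bounded away from $1$'' uniformly in $m$ is neither available nor needed.

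You also misidentify the inequality that makes $\mathbf e_{00}$ win. The bound $\lVert E_m\rVert=o(r_m)$ requires $\gamma<\beta$. By the normalization $a+2(b+c+d+e)=1$, one has $\beta-\gamma=(3a-1)/2$, so this condition is exactly $p_{00}>1/3$, independently of $b$. The line dominance $a+2b>1/2$ is not equivalent to $p_{00}>1/3$, since Lemma~\ref{lemma:mub_dominance} gives only one implication. Nor does it suffice on its own. For instance, $a=0.3$, $b=0.15$, $c=d=e=1/15$ gives $a+2b=0.6$ but $\gamma=0.2>\beta=0.15$, so the cross-column leakage would swamp the purifying correction.

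You also need $\beta>0$, so that $r_m>0$ and $1+4r_m$ sits at the top of the spectrum rather than the bottom. Your $\lambda\in(-1,1)$ leaves the sign open, although $\beta>0$ follows from $b<1/3<a$. Finally, in Step~4 the relevant overlap is with the left Perron vector. Having $q^{(0)}_{00}=a>0$ suffices only because $\mathbf w_*^{(m)}\to\mathbf e_{00}$, and that again comes from the perturbation argument above.
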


The proof of Theorem \ref{thm:universal_purification} is given in Appendix \ref{app:ada}. The conclusion is conditional on successful post-selection and uses the order of limits displayed in \eqref{eq:ordered_limit_general_channel}. It does not assert unit fidelity for an arbitrary fixed finite carrier number.

\begin{figure}[htbp]
\centering
\hspace*{3em}
\resizebox{0.8\columnwidth}{!}{
$
\Qcircuit @C=1.25em @R=2.9em {
\lstick{A_{\mathrm{in}}}
& \gate{U_A}
& \multigate{2}{\mathcal{C}_Z^{(m)}}
& \qw
& \qw
& \qw
& \qw
& \gate{H_3}
& \multigate{2}{\mathcal{C}_Z^{(m)}}
& \qw
& \qw
& \qw
& \gate{H_3^\dagger}
& \qw
& \qw
& \mbox{}
& \mbox{}
\\
\lstick{\ket{0}_T^{\otimes m}}
& \qw
& \ghost{\mathcal{C}_Z^{(m)}}
& \qw
& \qw
& \meter
\ar@{}[u]|-{\raisebox{-2.0em}{$\forall i:\mathrm{out}_i=0$}}
& \push{\ket{0}_{T'}^{\otimes m}}
& \qw
& \ghost{\mathcal{C}_Z^{(m)}}
& \qw
& \qw
& \meter
\ar@{}[u]|-{\raisebox{-2.0em}{$\forall i:\mathrm{out}'_i=0$}}
& \mbox{}
& \mbox{}
& \mbox{}
& \mbox{}
& \hspace{-2em}\rho_{\mathrm{out}}
\\
\lstick{B_{\mathrm{in}}}
& \gate{U_B^*}
& \ghost{\mathcal{C}_Z^{(m)}}
& \qw
& \qw
& \qw
& \qw
& \gate{H_3^{*}}
& \ghost{\mathcal{C}_Z^{(m)}}
& \qw
& \qw
& \qw
& \gate{H_3^{T}}
& \qw
& \qw
& \mbox{}
& \mbox{}
\gategroup{1}{15}{3}{15}{0.6em}{\}}
}
$
}
\vspace{0.5em}
\caption{Quantum circuit for one alternating-basis cycle of the adaptive mCAEPP. The one-time local Clifford operations $U_A$ and $U_B^*$ align the maximum-weight MUB axis $L_{\max}$ with the primary axis $L_1$. The first block $\mathcal C_Z^{(m)}$ performs the raw star-stabilizer check corresponding to the encoding, carrier-transmission, decoding, and syndrome-post-selection part of Fig.~\ref{fig:m_carrier_caepp}; it excludes the bilateral shear $W_A^*\otimes W_B$ used in Theorem~\ref{thm:mcaepp_convergence}. Conditioned on all first-stage syndrome outcomes being zero, the bilateral Fourier rotation $R_{AB}=H_{3,A}\otimes H_{3,B}^*$ exchanges the shift- and phase-error axes. The same raw check is then repeated using a fresh set of $m$ carriers. Finally, the inverse rotation $R_{AB}^{-1}=H_{3,A}^{\dagger}\otimes H_{3,B}^{T}$ returns the accepted state to the original Bell coordinate system. The output is retained only when both sets of syndrome measurements are zero.}
\label{fig:ada_mcaepp}
\end{figure}

\subsection{Clifford-Twirled Extension}

Complete single-qutrit Clifford twirling provides an alternative fully symmetric preprocessing strategy. Let $\mathrm{Cliff}_3$ denote the single-qutrit Clifford group modulo global phases. For a memoryless qutrit Pauli channel $\mathcal{N}$, define its Clifford-twirled channel by
\begin{equation}
    \overline{\mathcal{N}}(\cdot):=\frac{1}{|\mathrm{Cliff}_3|}\sum_{C\in\mathrm{Cliff}_3}C^\dagger\mathcal{N}\!\left(C(\cdot)C^\dagger\right)C.
\label{eq:clifford_twirl}
\end{equation}

\begin{theorem}
\label{thm:clifford_twirled_mcaepp}
Let $\mathcal{N}$ be a memoryless qutrit Pauli channel with entanglement fidelity $p_{00}>1/3$. Apply the Clifford twirl in \eqref{eq:clifford_twirl} independently to the channel use employed to distribute the initial shared pair and to every subsequent carrier transmission. Then the effective channel is the qutrit depolarizing channel
\begin{equation}
\label{eq:twirled_depolarizing_channel}
    \overline{\mathcal{N}}(\cdot)=p_{00}(\cdot)+\frac{1-p_{00}}{8}\sum_{\substack{x,z\in\mathbb{Z}_3\\(x,z)\neq(0,0)}}Z^zX^x(\cdot)X^{-x}Z^{-z}.
\end{equation}
After the Clifford twirl, Alice and Bob apply in every purification round the bilateral Clifford preprocessing $W_A^*\otimes W_B$, where $W=H_3^\dagger P_3^\dagger H_3$, before performing the $m$-carrier check. For every fixed finite carrier number $m$, the conditional mCAEPP resulting recurrence generated by $\overline{\mathcal{N}}$ converges to a fixed point $\mathbf{q}_*^{(m)}$. Moreover,
\begin{equation}
\label{eq:twirled_unit_fidelity}
    \lim_{m\to\infty}\left(\lim_{n\to\infty}q_{00}^{(n,m)}\right)=\lim_{m\to\infty}q_{*,00}^{(m)}=1.
\end{equation}
Consequently, for every prescribed target fidelity $F_{\mathrm{tar}}<1$, there exist finite integers $m$ and $N$ such that $q_{00}^{(N,m)}\geq F_{\mathrm{tar}}$, with a strictly positive cumulative success probability.
\end{theorem}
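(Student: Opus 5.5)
The plan has three parts: compute the effective channel, reduce the purification claims to Theorem~\ref{thm:mcaepp_convergence}, and derive the finite-resource consequence.

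\textbf{Step 1: the twirled channel.} Write $\mathcal{N}(\rho)=\sum_{x,z}p_{xz}Z^zX^x\rho X^{-x}Z^{-z}$. Each Clifford $C$ conjugates a Pauli label to $C^\dagger Z^zX^xC\propto Z^{z'}X^{x'}$, where $(x',z')=S_C(x,z)$ and $S_C\in SL(2,\mathbb{Z}_3)$ is the symplectic image of $C$. Global phases cancel in the conjugation, so
\[
\overline{\mathcal N}(\rho)=\sum_{x,z}p_{xz}\,\frac{1}{|\mathrm{Cliff}_3|}\sum_{C}Z^{z'}X^{x'}\rho X^{-x'}Z^{-z'}.
\]
Next, use that $C\mapsto S_C$ is onto $SL(2,\mathbb{Z}_3)$, with the Pauli group as kernel up to phase. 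Pauli elements fix labels, so each fibre contributes equally. It remains to check that $SL(2,\mathbb{Z}_3)$ acts transitively on the eight nonzero vectors of $\mathbb{Z}_3^2$. Any nonzero vector extends to a basis, and the second vector can be rescaled to make the determinant $1$. Hence every nonidentity label is sent uniformly to each of the eight nonzero labels, while $(0,0)$ is fixed. This gives \eqref{eq:twirled_depolarizing_channel} with identity weight $p_{00}$ and weight $(1-p_{00})/8$ on each of the other eight labels. Because the twirl is applied independently to the distribution use and to each carrier use, the shared pair's CJ state and every carrier channel are all this same depolarizing channel.

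\textbf{Step 2: reduction to Theorem~\ref{thm:mcaepp_convergence}.} The twirled channel is a qutrit depolarizing channel with fidelity $p_{00}>1/3$. The protocol applied afterwards, namely $W_A^*\otimes W_B$ in every round followed by the star-code $m$-carrier check, is exactly the protocol of Theorem~\ref{thm:mcaepp_convergence}. That theorem, via Appendix~\ref{app:sca}, therefore gives convergence of the conditional recurrence to a fixed point $\mathbf q_*^{(m)}$ for every finite $m$, and the ordered limit \eqref{eq:twirled_unit_fidelity}. The one point to verify is that the appendix's recurrence describes the accepted state when the twirl randomness is independent across uses. Since the twirl is Pauli-diagonal on average, the averaged channel is the same across uses and the Bell-diagonal recurrence is unchanged. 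I would also confirm that the fixed point in the appendix is identified as $\lim_n \mathbf q^{(n,m)}$.

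\textbf{Step 3: finite $m$ and $N$.} Given $F_{\mathrm{tar}}<1$, pick $m$ with $q_{*,00}^{(m)}>F_{\mathrm{tar}}$, which is possible by \eqref{eq:twirled_unit_fidelity}. Then pick $N$ with $q_{00}^{(N,m)}\ge F_{\mathrm{tar}}$, which is possible by convergence in $n$. The cumulative success probability is a finite product of per-round acceptance probabilities. For each round I would lower-bound acceptance by the event that the shared state has identity label and every carrier is error-free. This event has probability at least $q_{00}^{(k,m)}p_{00}^m>0$, where $q_{00}^{(k,m)}>0$ by induction since the recurrence keeps the identity weight positive. The product is therefore strictly positive.

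\textbf{Main obstacle.} I expect the main obstacle to be the clean identification in Step 2. One must check that independent per-use twirling commutes with the post-selected Bell-diagonal recurrence, so that averaging before or after conditioning agrees. The argument I would use is that all operations are Pauli-covariant stabilizer maps. By linearity, the unnormalized accepted operator depends only on the averaged channel, and normalization happens afterwards.
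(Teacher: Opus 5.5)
Your proposal is correct and follows the paper's proof. Twirling fixes the identity label and, by transitivity on the eight nonzero labels, spreads $1-p_{00}$ uniformly; fixed-$m$ convergence and the ordered limit then follow from Theorem~\ref{thm:mcaepp_convergence}, and one chooses first $m$ and then $N$. Your direct $SL(2,\mathbb{Z}_3)$ transitivity argument, the linearity argument that independent per-use twirling commutes with post-selection, and the per-round acceptance bound $q_{00}^{(k,m)}p_{00}^{m}>0$ make explicit steps that the paper cites or only asserts.
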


\begin{proof}
For every $C\in\mathrm{Cliff}_3$, conjugation by $C$ maps each qutrit Pauli operator to another qutrit Pauli operator up to an irrelevant phase,
\begin{equation}
\label{eq:clifford_pauli_permutation}
    C^\dagger Z^zX^xC=e^{\mathrm{i}\theta_C(x,z)}Z^{z'}X^{x'}.
\end{equation}
The identity label $(0,0)$ is fixed, whereas the Clifford action is transitive on the eight nonidentity labels in $\mathbb{Z}_3^2\setminus\{(0,0)\}$~\cite{Gross2007UnitaryDesigns}. Therefore, uniform averaging over $\mathrm{Cliff}_3$ leaves the identity-error probability $p_{00}$ unchanged and distributes the total nonidentity probability $1-p_{00}$ uniformly among the remaining eight Pauli errors. This proves \eqref{eq:twirled_depolarizing_channel}. The effective channel thus satisfies exactly the depolarizing-channel assumptions of Theorem~\ref{thm:mcaepp_convergence}. Since the recurrence specified above also includes the same per-round bilateral preprocessing $W_A^*\otimes W_B$, the fixed-$m$ convergence and the ordered limit in \eqref{eq:twirled_unit_fidelity} follow directly from Theorem~\ref{thm:mcaepp_convergence}. Finally, since the attracting fixed-point fidelity approaches one as $m\to\infty$, for every $F_{\mathrm{tar}}<1$ one can first choose a finite $m$ such that $q_{*,00}^{(m)}>F_{\mathrm{tar}}$ and then choose a finite number of successful rounds $N$ such that $q_{00}^{(N,m)}\geq F_{\mathrm{tar}}$. Each of these finitely many rounds has a nonzero success probability, so the corresponding cumulative success probability is strictly positive.
\end{proof}

Equation~\eqref{eq:clifford_twirl} defines the target Pauli-label map by a full Clifford average. For the finite-resource comparison, we implement the same label map by uniformly sampling the eight Clifford representatives induced by the following $Q_8$ matrices, whose Clifford lifts exist by the symplectic--Clifford correspondence \cite{Hostens2005Clifford}:
\begin{equation}
    Q_8=\{\pm I,\pm A,\pm B,\pm AB\}\subset SL(2,\mathbb F_3),
    \quad
    A=\begin{pmatrix}0&1\\2&0\end{pmatrix},
    \quad
    B=\begin{pmatrix}1&1\\1&2\end{pmatrix}.
\end{equation}
Direct multiplication gives $A^2=B^2=-I$ and $AB=-BA$, so the displayed eight matrices form a subgroup isomorphic to $Q_8$. Every nonidentity element is either $-I$ or has square $-I$. The element $-I$ fixes no nonzero vector over $\mathbb F_3$. For any remaining nonidentity element $G$, if $Gv=v$, then $G^2=-I$ gives $v=G^2v=-v$, and hence $v=0$. Therefore, the action on the eight nonzero labels is free. Since both the group and the set of nonzero labels have eight elements, the action is regular. The uniform $Q_8$ average therefore maps every nonzero Pauli label to the uniform distribution and produces the same Pauli-label map as the full Clifford average in \eqref{eq:clifford_twirl}. A fresh independent three-bit shared seed is used for every randomized channel use, including the use that distributes the initial shared pair, while Alice and Bob use the same seed within that use. The three-bit value is the cost of this specified exact implementation; no channel-specific minimum is claimed.

\begin{remark}
    Theorem~\ref{thm:universal_purification} and Theorem~\ref{thm:clifford_twirled_mcaepp} provide two alternative conditional purification strategies under the same fidelity assumption $p_{00}>1/3$. For the specific Clifford twirl considered here, the identity-error probability $p_{00}$ is preserved, and the resulting Choi state belongs to the qutrit isotropic family. This family contains no bound-entangled interval: the twirled state is separable for $p_{00}\leq1/3$, whereas it is NPT and distillable for $p_{00}>1/3$~\cite{Horodecki1999Teleportation,Horodecki1997Reduction}. Hence, within the parameter regime assumed in this work, Clifford twirling cannot convert the initial state into an undistillable bound-entangled state, and no stronger fidelity threshold is required to preserve distillability. This conclusion is specific to the fidelity-preserving Clifford twirl and does not imply that arbitrary noise-tailoring maps preserve entanglement. Complete twirling replaces all eight nonidentity Pauli probabilities by $(1-p_{00})/8$ and therefore erases the directional information contained in the original asymmetric error distribution. In contrast, the MUB-preserving pair symmetrization retains the four MUB-line weights and permits a maximum-weight error axis to be aligned with the primary purification axis. The equal-target finite-resource comparison below gives opposite cost orderings among the audited deterministic profiles under the stated $2\leq m\leq100$ optimization, so neither strategy uniformly dominates the other on this audited set.
\end{remark}

\subsection{Equal-target finite-resource comparison}
\label{sec:finite_resource_comparison}

We restrict the comparison to $F_{\mathrm{tar}}>p_{00}$. For a candidate that reaches the target, let $N$ be the first index such that $q_{00}^{(N,m)}\geq F_{\mathrm{tar}}$; candidates for which no such finite $N$ exists are regarded as infeasible. Explicitly,
\begin{equation}
    q_{00}^{(j,m)}<F_{\mathrm{tar}}\quad(0\leq j<N),
    \qquad 
    q_{00}^{(N,m)}\geq F_{\mathrm{tar}}.
\end{equation}
For MUB preprocessing this condition is checked separately for every legal maximum-line alignment, but no alignment parameter is attached to the Clifford protocol. Let $R_0=1$, and let $R_n$ be the probability that one attempt reaches check $n$ for $\mathsf C$, or complete cycle $n$ for $\mathsf M$. Let $s_n$ be the conditional success probability of that check or complete cycle. Thus $R_{n+1}=R_ns_n$, and
\begin{equation}
\label{eq:total_success_probability}
    P_{\mathrm{tot}}^{(N,m)} :=R_N =\prod_{n=0}^{N-1}s_n.
\end{equation}
When the protocol must be distinguished, we write $P_{\mathrm{tot},\mathsf C}^{(N,m)}$ or $P_{\mathrm{tot},\mathsf M}^{(N,m)}$ for the corresponding value. If $s_{n,1}$ is the conditional success probability of the first check in MUB cycle $n$, early abort gives
\begin{equation}
\label{eq:finite_attempt_costs}
    E_{\mathrm{att},\mathsf C}=m\sum_{n=0}^{N-1}R_n,
    \qquad
    E_{\mathrm{att},\mathsf M}=m\sum_{n=0}^{N-1}R_n(1+s_{n,1}).
\end{equation}
After any failure the pair is discarded and the protocol restarts from a newly distributed input pair. Hence
\begin{equation}
\label{eq:finite_restart_costs}
    \overline C_{\mathrm{car},\mathsf P} =\frac{E_{\mathrm{att},\mathsf P}} {P_{\mathrm{tot},\mathsf P}^{(N,m)}},
    \qquad
    \overline C_{\mathrm{all},\mathsf P} =\frac{1+E_{\mathrm{att},\mathsf P}} {P_{\mathrm{tot},\mathsf P}^{(N,m)}},
\end{equation}
where $\mathsf P\in\{\mathsf C,\mathsf M\}$ and the additional one is the initial-pair distribution in every attempt. For the specified universal, exact, independently randomized per-use implementations,
\begin{equation}
\label{eq:finite_shared_bits}
    B_{\mathsf M}=\overline C_{\mathrm{all},\mathsf M},
    \qquad
    B_{\mathsf C}=3\overline C_{\mathrm{all},\mathsf C}.
\end{equation}
These are implementation-specific charges, not unconditional minimum randomness costs for a known channel.

We optimize the three reported costs separately over every integer $2\leq m\leq100$ and, for MUB, over all legal alignments of a maximum-weight line. The three objectives select the same canonical parameters in the deterministic cases shown below; equality of the carrier and total-use selections is a certified data-set fact, while the fixed implementation factors in \eqref{eq:finite_shared_bits} make the shared-bit and total-use minimizers coincide within each protocol. Clifford $N$ counts single checks, while MUB $N$ counts complete two-check cycles. All costs include early abort and full restart, and equal target fidelity does not require equal attained fidelity. In the table header, $P_{\mathrm{tot}}$ is shorthand for the protocol-specific $P_{\mathrm{tot},\mathsf P}^{(N,m)}$ at the reported $(m^*,N^*)$.

Order the qutrit MUB lines as $L_1:z=0$, $L_2:x=0$, $L_3:z=x$, and $L_4:z=-x$, and write $w_i=(L_i-p_{00})/(1-p_{00})$. The inversion-symmetric inputs in Table~\ref{tab:finite_resources_corrected} are determined by
\begin{equation}
    p_{10}=p_{20}=\frac{(1-p_{00})w_1}{2},\quad
    p_{01}=p_{02}=\frac{(1-p_{00})w_2}{2},\quad
    p_{11}=p_{22}=\frac{(1-p_{00})w_3}{2},\quad
    p_{12}=p_{21}=\frac{(1-p_{00})w_4}{2}.
\end{equation}
The four profiles are deliberately selected witnesses rather than a statistical sample.

\begin{table}[t]
\centering
\caption{Certified equal-target comparison at $F_{\rm tar}=0.99$ for four deliberately selected deterministic witnesses. Costs are expected per accepted output; the optimization domain, restart model, meanings of $N$, and randomization model are specified in the surrounding text.}
\label{tab:finite_resources_corrected}
\small
\setlength{\tabcolsep}{4pt}
\renewcommand{\arraystretch}{1.08}
\begin{tabular}{ccclrrrrr}
\hline
$p_{00}$ & $(w_1,w_2,w_3,w_4)$ & Protocol & $(m^*,N^*)$ &
$F_{\rm out}$ & $P_{\rm tot}$ & $\overline C_{\rm car}$ &
$\overline C_{\rm all}$ & $B$\\
\hline
0.7 & $(0.55,0.15,0.15,0.15)$ & Clifford & $(3,4)$ & 0.992526957 & $1.04663\times10^{-2}$ & 402.398 & 497.942 & 1493.83\\
    &                             & MUB      & $(3,4)$ & 0.993297822 & $2.62427\times10^{-4}$ & 16953.8 & 20764.4 & 20764.4\\
\hline
0.7 & $(0.45,0.45,0.05,0.05)$ & Clifford & $(3,4)$ & 0.992526957 & $1.04663\times10^{-2}$ & 402.398 & 497.942 & 1493.83\\
    &                           & MUB      & $(2,3)$ & 0.991795496 & $1.09129\times10^{-2}$ & 316.672 & 408.306 & 408.306\\
\hline
0.7 & $(0.25,0.25,0.25,0.25)$ & Clifford & $(3,4)$ & 0.992526957 & $1.04663\times10^{-2}$ & 402.398 & 497.942 & 1493.83\\
    &                           & MUB      & $(3,2)$ & 0.990098227 & $1.04922\times10^{-2}$ & 401.512 & 496.821 & 496.821\\
\hline
0.9 & $(0.25,0.25,0.25,0.25)$ & Clifford & $(2,2)$ & 0.997151877 & 0.592643 & 5.91309 & 7.60044 & 22.8013\\
    &                           & MUB      & $(2,1)$ & 0.997151877 & 0.592643 & 5.91309 & 7.60044 & 7.60044\\
\hline
\end{tabular}
\end{table}

The first two profiles have the same $p_{00}=0.7$ but opposite cost orderings, which rules out a uniform ordering among the audited deterministic profiles under the stated $2\leq m\leq100$ optimization. The third profile gives nearly equal carrier costs, and the fourth gives equal channel-use costs but different charges under the two specified randomization implementations. For the already depolarizing fourth input, this randomness difference is not unavoidable if redundant channel-specific randomization is omitted. No conclusion is drawn here for $m>100$, arbitrary protocols, random channel ensembles, or experimental gate and memory costs.

\section{Extension to Higher Dimensions}
\label{sec:higher}

In this section, we first extend the algebraic single-carrier CAEPP and its marginal-error convergence condition to arbitrary integer dimensions $d\geq2$ using the generalized qudit Pauli operators defined over indices modulo $d$. We then develop a separate MUB-adapted multi-carrier construction for prime-power dimensions. The latter requires a finite-field formulation of the Pauli operators, Bell states, carrier operations, and discrete phase space.

\subsection{Generalization of Algebraic and State Spaces}

The single-qudit generalized Pauli operators are constructed from the cyclic-shift operator $X$ and the phase-shift operator $Z$. Their actions on the computational basis states $\ket{j}$ for $j \in \{0, 1, \dots, d-1\}$ are governed by $X\ket{j} = \ket{(j+1) \bmod d}$ and $Z\ket{j} = \omega^j\ket{j}$ where the phase factor is $\omega = e^{2\pi i/d}$. These operators obey the canonical commutation relation $ZX = \omega XZ$. 

The bipartite state space is spanned by $d^2$ maximally entangled generalized Bell states. We define these basis states uniformly as 
\begin{equation}
    \ket{\Phi^{n,m}} = \dfrac{1}{\sqrt{d}} \sum_{j=0}^{d-1} \omega^{mj} \ket{j, (j+n) \bmod d}
\end{equation}
with indices $n, m \in \{0, 1, \dots, d-1\}$. A general $d$-dimensional Bell-diagonal state is therefore expressed as 
\begin{equation}
    \rho = \sum_{n,m=0}^{d-1} p_{nm} \ket{\Phi^{n,m}} \bra{\Phi^{n,m}}
\end{equation}
under the normalization condition $\sum_{n,m=0}^{d-1} p_{nm} = 1$. The initial entanglement fidelity corresponds exactly to the target state probability $F(\rho) = p_{00}$.

The bilateral operations utilize the modulo-$d$ SUM gate 
\begin{equation}
    U_{\mathrm{SUM}}\ket{j, k} = \ket{j, (j+k) \bmod d}.
\end{equation}

\subsection{Convergence Bottleneck under Asymmetric Noise}

Following the established framework in Section \ref{subsec:singleCAEPPunderNoise}, we evaluate the asymptotic performance of the qudit single-carrier protocol to demonstrate how environmental noise asymmetry limits purification. This analysis directly extends the three-dimensional failure threshold established in Proposition \ref{pr:inf} to arbitrary $d$-dimensional Hilbert spaces.

\begin{proposition}
\label{pr:inf_generalized}
    Consider a memoryless $d$-dimensional Pauli channel, with arbitrary integer $d\geq2$, that is used both to distribute the initial shared Bell pair and to transmit every carrier. Let its Pauli-error probabilities be $\{p_{xz}\}_{x,z=0}^{d-1}$, and suppose that $p_{0z}=0$ for every $z\neq0$. Define the marginal shift-error probabilities by
    \begin{equation}
        p_{X=x}:=\sum_{z=0}^{d-1}p_{xz}.
    \end{equation}
    In the absence of preprocessing, the fidelity $F_N$ conditioned on $N$ consecutive successful rounds of the single-carrier CAEPP satisfies
    \begin{equation}
    \label{eq:generalized_single_carrier_fidelity}
        F_N=\frac{p_{00}^{N+1}}{p_{00}^{N+1}+\displaystyle\sum_{x=1}^{d-1}p_{X=x}^{N+1}}.
    \end{equation}
    Consequently,
    \begin{equation}
        \lim_{N\to\infty}F_N=1
    \end{equation}
    if and only if
    \begin{equation}
    \label{eq:generalized_single_carrier_condition}
        p_{00}>\max_{x\in\{1,\ldots,d-1\}}p_{X=x}.
    \end{equation}
\end{proposition}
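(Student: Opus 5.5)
The plan is to track the full Bell-label distribution through the rounds, not just the fidelity, mirroring the qutrit proof of Proposition~\ref{pr:inf} but with an explicit invariant that makes the induction clean.

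\textbf{Single-round map.} For a basis component $\ket{\Phi^{x,z}}$ and a carrier Pauli error $Z^{z'}X^{x'}$, Alice's $U_{\mathrm{SUM}}$ places $\ket{j}$ on the carrier. The channel shifts this to $\ket{j+x'}$ up to a phase. Bob's reverse SUM, controlled by $\ket{j+x}$, then leaves the carrier in $\ket{x'-x}$. The carrier phase $\omega^{z'(j+x')}$ factors as a global phase times $\omega^{z'j}$, which adds $z'$ to the phase label of the pair. Hence, on the syndrome-$0$ branch, the map is
\begin{equation}
(x,z),(x',z') \mapsto (x,\,z+z')\quad\text{accepted iff } x'=x,
\end{equation}
with all arithmetic modulo $d$. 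Here I must check that the carrier's shift phase does not corrupt the $j$-dependence, and that distinct $j$ still give a coherent Bell state; this is the same computation as \eqref{eq:outputstate}.

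\textbf{Invariant and recurrence.} If the input distribution is $q^{(N)}$, then
\begin{equation}
q^{(N+1)}_{xz}\propto \sum_{z'}q^{(N)}_{x,z-z'}\,p_{xz'}.
\end{equation}
Summing over $z$ gives the shift marginal $Q^{(N+1)}_x\propto Q^{(N)}_x\,p_{X=x}$. Starting from $Q^{(0)}_x=p_{X=x}$, induction yields $Q^{(N)}_x\propto p_{X=x}^{N+1}$.

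\textbf{Fidelity.} Since $p_{0z}=0$ for $z\neq0$, the $x=0$ row of the initial distribution is concentrated at $z=0$. The convolution keeps this row concentrated at $z=0$, because a carrier accepted in the $x=0$ branch has $z'=0$. Therefore $q^{(N)}_{00}=Q^{(N)}_0$, and normalizing gives \eqref{eq:generalized_single_carrier_fidelity}.

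\textbf{Limit.} Dividing numerator and denominator by $p_{00}^{N+1}$ gives
\begin{equation}
F_N=\Bigl[1+\sum_{x\ge1}(p_{X=x}/p_{00})^{N+1}\Bigr]^{-1}.
\end{equation}
If $p_{00}>\max_x p_{X=x}$, every ratio is below one and $F_N\to1$. Conversely, if some $p_{X=x}\ge p_{00}$, that term is at least $1$ for all $N$, so $F_N\le1/2$. This is where the argument needs care: $p_{00}>0$ is required to divide at all, and if $p_{00}=0$ then $F_N=0$ trivially.

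\textbf{Main obstacle.} The key step is the phase bookkeeping in the single-round map for general $d$. I need to confirm that the accepted carrier contributes only an additive phase shift $z'$ and no $x$-dependent phase that would mix the rows. Given the structure, I expect it to hold.
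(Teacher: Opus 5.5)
Your proposal is correct and follows essentially the same route as the paper. Each successful round multiplies the weight of shift sector $x$ by $p_{X=x}$, induction from the initial channel distribution gives $p_{X=x}^{N+1}$, and normalizing yields \eqref{eq:generalized_single_carrier_fidelity}. Two of your steps are not spelled out in the paper and make it rigorous. First, your explicit single-round map $(x,z)\mapsto(x,z+z')$, accepted iff the carrier shift is $x'=x$, is correct. Second, you observe that the $x=0$ row stays concentrated at $z=0$, which justifies identifying $F_N=q_{00}^{(N)}$ with the $x=0$ sector weight; the paper uses that identification only implicitly.
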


\begin{proof}
Because $p_{0z}=0$ for every $z\neq0$, the marginal probability of the zero-shift sector is $p_{X=0}=p_{00}$. In one successful round, a shared-state component with shift label $x$ is retained precisely when the independently transmitted carrier acquires the same shift label $x$. Summing over all phase labels, the unnormalized total weight of the shift sector $x$ is therefore multiplied by $p_{X=x}$ in every successful round. Since the initial shared state is distributed through the same channel, induction over the number of successful rounds shows that the unnormalized weight of the sector $x$ after $N$ rounds is $p_{X=x}^{N+1}$. Normalizing these $d$ sector weights gives
\begin{equation}
F_N=\frac{p_{X=0}^{N+1}}{\displaystyle\sum_{x=0}^{d-1}p_{X=x}^{N+1}}=\left[1+\sum_{x=1}^{d-1}\left(\frac{p_{X=x}}{p_{00}}\right)^{N+1}\right]^{-1},
\end{equation}
which proves \eqref{eq:generalized_single_carrier_fidelity}. If $p_{00}>p_{X=x}$ for every $x\neq0$, every ratio in the denominator converges to zero and hence $F_N\to1$. Conversely, if $p_{X=x}\geq p_{00}$ for some $x\neq0$, the corresponding term does not converge to zero, so $F_N$ cannot converge to $1$. This proves \eqref{eq:generalized_single_carrier_condition}.
\end{proof}

Proposition~\ref{pr:inf_generalized} shows that the marginal-shift-error bottleneck persists in arbitrary integer dimensions. The construction below addresses this bottleneck in prime-power dimensions by combining finite-field MUB alignment with an axis-preserving multiplicative symmetrization.

\subsection{MUB-Adapted mCAEPP in Prime-Power Dimensions}
\label{subsec:prime_power_mcaepp}

The generalized qudit formulation based on indices modulo $d$ applies to arbitrary integer dimensions. We now restrict attention to prime-power dimensions $d=\ell^r$, where $\ell$ is prime and $r$ is a positive integer. In this case, the computational basis is indexed by the elements of the finite field $\mathbb{F}_d$, and the construction below uses finite-field arithmetic rather than arithmetic modulo $d$. A complete set of $d+1$ mutually unbiased bases exists in every prime-power dimension~\cite{WoottersFields1989,GibbonsHoffmanWootters2004}.

For $u\in\mathbb{F}_d$, define the field trace and the corresponding additive character by
\begin{equation}
    \operatorname{Tr}(u):=\operatorname{Tr}_{\mathbb{F}_d/\mathbb{F}_\ell}(u)=u+u^\ell+\cdots+u^{\ell^{r-1}},
\end{equation}
and
\begin{equation}
\label{eq:finite_field_character}
    \chi(u):=\exp\left[\frac{2\pi\mathrm{i}}{\ell}\operatorname{Tr}(u)\right],
\end{equation}
where the value of $\operatorname{Tr}(u)\in\mathbb{F}_\ell$ is identified with an integer in $\{0,\ldots,\ell-1\}$. The finite-field shift and phase operators are defined by
\begin{equation}
\label{eq:finite_field_pauli}
    X(a)\ket{u}=\ket{u+a},
    \qquad
    Z(b)\ket{u}=\chi(bu)\ket{u},
    \qquad
    a,b,u\in\mathbb{F}_d.
\end{equation}
They satisfy
\begin{equation}
    Z(b)X(a)=\chi(ab)X(a)Z(b).
\end{equation}
The finite-field additive-character and Pauli-operator formulation applies to both odd and even prime-power dimensions~\cite{Durt2005PrimePowerMUB}. A finite-field Pauli channel therefore takes the form
\begin{equation}
\label{eq:finite_field_pauli_channel}
    \mathcal{N}(\rho)=\sum_{x,z\in\mathbb{F}_d}p_{xz}X(x)Z(z)\rho Z(z)^\dagger X(x)^\dagger,
    \qquad
    \sum_{x,z\in\mathbb{F}_d}p_{xz}=1.
\end{equation}

The corresponding generalized Bell basis is
\begin{equation}
\label{eq:finite_field_bell_states}
    \ket{\Phi^{s,t}}=\frac{1}{\sqrt d}\sum_{u\in\mathbb{F}_d}\chi(tu)\ket{u,u+s},
    \qquad
    s,t\in\mathbb{F}_d.
\end{equation}
In particular, transmission of the second subsystem of $\ket{\Phi^{0,0}}$ through \eqref{eq:finite_field_pauli_channel} produces the Bell-diagonal state
\begin{equation}
    \rho=\sum_{s,t\in\mathbb{F}_d}p_{st}\ket{\Phi^{s,t}}\bra{\Phi^{s,t}},
\end{equation}
whose entanglement fidelity is $p_{00}$.

The carrier encoding and decoding operations use the finite-field SUM gate
\begin{equation}
\label{eq:finite_field_sum}
    U_{\mathrm{SUM}}\ket{u,v}=\ket{u,u+v},
    \qquad
    u,v\in\mathbb{F}_d.
\end{equation}
Its inverse acts as
\begin{equation}
    U_{\mathrm{SUM}}^{-1}\ket{u,v}=\ket{u,v-u}.
\end{equation}
Together with computational-basis measurements whose outcomes are labeled by $\mathbb{F}_d$, these operations implement the finite-field generalization of the carrier syndrome checks used in the qutrit mCAEPP.

Appendix~\ref{app:prime_power_mcaepp} gives the complete state propagation for this raw star check, including the syndrome signs and accepted Bell label.

In the finite-field phase-space formulation, the $d+1$ MUB directions are represented by the $d+1$ lines through the origin in $\mathbb{F}_d^2$~\cite{GibbonsHoffmanWootters2004},
\begin{equation}
\label{eq:finite_field_mub_lines}
    \mathcal{L}_\infty:=\{(0,z):z\in\mathbb{F}_d\},
    \qquad
    \mathcal{L}_a:=\{(x,ax):x\in\mathbb{F}_d\},
    \quad
    a\in\mathbb{F}_d.
\end{equation}
Any two distinct lines in \eqref{eq:finite_field_mub_lines} intersect only at $(0,0)$, and every nonzero point of $\mathbb{F}_d^2$ belongs to exactly one of them. Let $L_\infty$ and $L_a$ denote the corresponding sums of Pauli-error probabilities. Their total satisfies
\begin{equation}
\label{eq:finite_field_line_sum}
    L_\infty+\sum_{a\in\mathbb{F}_d}L_a=1+dp_{00}.
\end{equation}

We select a line of maximum weight and map it to the axis $z=0$. For $\mathcal L_a$ and $\mathcal L_\infty$, respectively, the required phase-space maps are
\begin{equation}
\label{eq:finite_field_line_alignment_matrices}
    S_a=\begin{pmatrix}1&0\\-a&1\end{pmatrix},\qquad
    S_\infty=\begin{pmatrix}0&1\\-1&0\end{pmatrix},
\end{equation}
which satisfy $S_a(x,ax)^{\mathsf T}=(x,0)^{\mathsf T}$ and $S_\infty(0,z)^{\mathsf T}=(z,0)^{\mathsf T}$. The standard symplectic--Clifford correspondence supplies the corresponding unitary implementation; the finite-field embedding, including even characteristic, is given in Appendix~\ref{app:prime_power_mcaepp} \cite{Hostens2005Clifford}. The same label map is applied to the initial pair and every carrier use, so $p_{00}$ is unchanged. We relabel the aligned probabilities as $\{p_{xz}\}$.

The line alignment alone does not provide the symmetry needed for the multi-carrier spectral analysis. After the alignment, we therefore apply an additional multiplicative symmetrization that preserves the axes $x=0$ and $z=0$. For every $\lambda\in\mathbb{F}_d^\times$, define the permutation unitary
\begin{equation}
\label{eq:multiplicative_unitary}
    V_\lambda\ket{u}=\ket{\lambda u}.
\end{equation}
It conjugates the finite-field Pauli operators according to
\begin{equation}
\label{eq:multiplicative_pauli_action}
    V_\lambda X(x)V_\lambda^\dagger=X(\lambda x),
    \qquad
    V_\lambda Z(z)V_\lambda^\dagger=Z(\lambda^{-1}z).
\end{equation}
Consequently, the corresponding Bell-error labels are permuted as
\begin{equation}
\label{eq:multiplicative_label_action}
    (x,z)\longmapsto(\lambda x,\lambda^{-1}z).
\end{equation}
In particular,
\begin{equation}
    \mathcal L_a\longmapsto\mathcal L_{\lambda^{-2}a}.
\end{equation}
Thus this symmetrization generally permutes, rather than separately preserves, the non-axis MUB-line weights. The properties used below are the preserved aligned-axis weight and the equalized total probabilities of the nonzero $z$-columns.

Alice and Bob use shared classical randomness to choose $\lambda$ uniformly from $\mathbb{F}_d^\times$ and apply the corresponding coordinated bilateral operation to the shared Bell pair. The corresponding channel conjugation is applied independently to every carrier transmission. After averaging over $\lambda$, the effective Pauli probabilities are
\begin{equation}
\label{eq:multiplicative_symmetrization}
    \widetilde p_{xz}=\frac{1}{d-1}\sum_{\lambda\in\mathbb{F}_d^\times}p_{\lambda^{-1}x,\lambda z}.
\end{equation}
This operation fixes the origin and preserves the total probability of the aligned axis $z=0$. It also makes all nonzero points on that axis equiprobable and equalizes the total probabilities of the $d-1$ nonzero $z$-columns.

Let
\begin{equation}
\label{eq:prime_power_a_b}
    a:=\widetilde p_{00}=p_{00},
    \qquad
    b:=\widetilde p_{x0}\quad\text{for every }x\neq0,
\end{equation}
and define
\begin{equation}
\label{eq:prime_power_alpha_beta_gamma}
    \alpha:=a+(d-1)b,
    \qquad
    \beta:=a-b,
    \qquad
    \gamma:=\frac{1-\alpha}{d-1}.
\end{equation}
Here $\alpha$ is the total weight of the aligned axis, whereas
\begin{equation}
\label{eq:prime_power_column_weight}
    \sum_{x\in\mathbb{F}_d}\widetilde p_{xz}=\gamma
    \qquad
    \text{for every }z\neq0.
\end{equation}
The difference between the dominant nontrivial Fourier coefficient and the remaining column weight is
\begin{equation}
\label{eq:prime_power_spectral_inequality}
    \beta-\gamma =a-b-\frac{1-a-(d-1)b}{d-1} =\frac{da-1}{d-1}.
\end{equation}
Therefore,
\begin{equation}
\label{eq:prime_power_beta_gamma_order}
    p_{00}>\frac{1}{d}
    \quad
    \Longrightarrow
    \quad
    \beta>\gamma.
\end{equation}
If $b=0$, the maximality of the aligned line implies that every nonzero phase-space probability vanishes, so the channel is noiseless. In every nontrivial noisy case, $b>0$, and equations \eqref{eq:prime_power_alpha_beta_gamma} and \eqref{eq:prime_power_beta_gamma_order} give
\begin{equation}
\label{eq:prime_power_parameter_order}
    0\leq\gamma<\beta<\alpha.
\end{equation}

Define the finite-field Fourier transformation by
\begin{equation}
\label{eq:finite_field_fourier}
    H_d\ket{u}=\frac{1}{\sqrt d}\sum_{v\in\mathbb{F}_d}\chi(uv)\ket{v}.
\end{equation}
The bilateral operation
\begin{equation}
    R_{AB}:=H_{d,A}\otimes H_{d,B}^*
\end{equation}
acts on the Bell labels as
\begin{equation}
\label{eq:finite_field_fourier_bell_action}
    R_{AB}\ket{\Phi^{s,t}}=\chi(-st)\ket{\Phi^{t,-s}}.
\end{equation}
The phase factor in \eqref{eq:finite_field_fourier_bell_action} has no effect on Bell-state probabilities. One complete purification cycle consists of an $m$-carrier check, the bilateral Fourier transformation $R_{AB}$, a second check using $m$ fresh carriers, and the inverse transformation $R_{AB}^{-1}$.

\begin{theorem}
\label{thm:prime_power_mcaepp}
    Let $d=\ell^r$ be a prime-power dimension, and let $\mathcal{N}$ be a memoryless finite-field Pauli channel with entanglement fidelity $p_{00}>1/d$. Apply a finite-field Clifford transformation that maps a maximum-weight MUB line to the axis $z=0$, followed by the multiplicative symmetrization in \eqref{eq:multiplicative_symmetrization}, to the initial shared pair and to every carrier channel use. Let $q_{00}^{(N,m)}$ denote the fidelity conditioned on $N$ consecutive successful complete cycles, with $m$ fresh carriers used in each of the two checks of every cycle. Then there exists an integer $m_0$ such that, for every fixed $m\geq m_0$, the conditional recurrence converges to a fixed point $\mathbf{q}_*^{(m)}$. Moreover,
    \begin{equation}
    \label{eq:prime_power_ordered_limit}
        \lim_{m\to\infty}\left(\lim_{N\to\infty}q_{00}^{(N,m)}\right) =\lim_{m\to\infty}q_{*,00}^{(m)} =1.
    \end{equation}
    Consequently, for every prescribed target fidelity $F_{\mathrm{tar}}<1$, there exist finite integers $m$ and $N$ such that $q_{00}^{(N,m)}\geq F_{\mathrm{tar}}$, with a strictly positive cumulative success probability.
\end{theorem}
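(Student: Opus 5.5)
Our plan is to make the conditional recurrence explicit on the Bell-label distribution, reduce its long-time behaviour to a finite-dimensional linear problem, and then control that problem with the spectral ordering \eqref{eq:prime_power_parameter_order}.

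\emph{Single-check map.} We first derive the action of one raw finite-field star check $\mathcal C_Z^{(m)}$ on a Bell-diagonal input $\mathbf q$, using the state propagation in Appendix~\ref{app:prime_power_mcaepp}. A shared label $(s,t)$ survives exactly when the carrier errors $(x_i,z_i)$ satisfy three conditions. The check $Z_0Z_1\cdots Z_m$ forces the shift labels to cancel, so $s$ is matched by the sum of the carrier shifts with the appropriate signs. The checks $X_iX_m^{-1}$ force equal phase labels on the carriers. The accepted phase label then acquires a convolution of the carrier phase labels. Hence the unnormalized update has the form
\[
q'_{st}=\sum_{t'}q_{st'}K^{(m)}_{s}(t-t'),
\]
with a kernel that depends only on the symmetrized probabilities $\widetilde p$. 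We then pass to the additive characters of $\mathbb F_d$ in the convolved variable. Because of the multiplicative symmetrization, the relevant Fourier data of $\widetilde p$ reduce to the three numbers $\alpha,\beta,\gamma$ of \eqref{eq:prime_power_alpha_beta_gamma}:
\begin{itemize}
\item the $(0,0)$ mode carries $\alpha$;
\item every nonzero character restricted to the aligned axis gives $\beta$;
\item every nonzero $z$-column gives $\gamma$, by \eqref{eq:prime_power_column_weight}.
\end{itemize}
The $m$-th powers of these numbers are what appear in the check's transfer matrix.

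\emph{Complete cycle as a linear map.} A complete cycle is check, then $R_{AB}$, then check, then $R_{AB}^{-1}$. By \eqref{eq:finite_field_fourier_bell_action} the rotation acts on labels as $(s,t)\mapsto(t,-s)$. Composing gives the unnormalized linear map $T_m=R^{-1}C_mRC_m$, which is entrywise nonnegative. The conditional recurrence is the projectivized iteration $\mathbf q\mapsto T_m\mathbf q/\|T_m\mathbf q\|_1$.

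\emph{Fixed-$m$ convergence.} We would show that $T_m$ has a simple dominant eigenvalue, with eigenvector $\mathbf q_*^{(m)}$ and a spectral gap. The initial vector must also have a nonzero component along $\mathbf q_*^{(m)}$; this holds because it is entrywise nonnegative with $p_{00}>0$ and the Perron vector can be taken positive. Convergence of the normalized iterates then follows by the power method. To get the gap, we compare eigenvalues. The mode feeding the $(0,0)$ label scales like $\alpha^m\beta^m$ (up to factors polynomial or constant in $d$). Every competing mode carries at least one factor $\gamma^m$ in place of $\beta^m$, or $\beta^{2m}$ instead of $\alpha^m\beta^m$. Since $0\le\gamma<\beta<\alpha$ by \eqref{eq:prime_power_parameter_order}, the ratios are bounded by $\max\{(\gamma/\beta)^m,(\beta/\alpha)^m\}$, which tends to $0$. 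This yields the threshold $m_0$ beyond which the gap is strict.

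\emph{Large-$m$ limit.} With the gap in hand, we expand $\mathbf q_*^{(m)}$ by perturbation around $\mathbf e_{00}$. The off-target mass is $O\bigl((\gamma/\beta)^m+(\beta/\alpha)^m\bigr)\to0$, which gives \eqref{eq:prime_power_ordered_limit}. For the final assertion, we argue as in the proof of Theorem~\ref{thm:clifford_twirled_mcaepp}. First pick $m$ with $q_{*,00}^{(m)}>F_{\mathrm{tar}}$, then a finite $N$ with $q_{00}^{(N,m)}\ge F_{\mathrm{tar}}$. Each cycle succeeds with positive probability, since the no-error branch has weight at least $p_{00}^{2m+1}>0$, so the cumulative success probability is strictly positive.

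\emph{Main obstacle.} The hard step is the exact spectral description of $T_m$. The multiplicative symmetrization equalizes only the column totals and the axis points, not individual off-axis probabilities, so the symmetry does not obviously diagonalize the full $d^2\times d^2$ map. Our first attempt would restrict attention to the invariant coarse-grained quantities that $T_m$ actually reads: the $s$-sector masses, and after the rotation the $t$-sector masses. We would show that $T_m$ is block-triangular with respect to these quantities, so that only $\alpha,\beta,\gamma$ enter the diagonal blocks. If that block structure fails, we would fall back on a Perron--Frobenius/Hilbert-metric contraction estimate for large $m$.
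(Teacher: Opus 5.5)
\paragraph{The gap estimate rests on a false scaling claim.} The aligned column gives $\kappa_0^{(m)}(0)=(\alpha^m+(d-1)\beta^m)/d$ and $\kappa_s^{(m)}(0)=(\alpha^m-\beta^m)/d$ for $s\neq0$. So \emph{every} Bell label, not just $(0,0)$, carries the common leading factor $\alpha^m/d$ per check and $\alpha^{2m}/d^2$ per cycle.

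Take the simplest case $\gamma=0$. The cycle matrix is then diagonal with entries $\kappa_s^{(m)}(0)\kappa_t^{(m)}(0)$. For a label with exactly one zero index, its entry divided by the $(0,0)$ entry is $(1-r_m)/(1+(d-1)r_m)\to1$, where $r_m=(\beta/\alpha)^m$. Moreover, for $d\geq3$ such labels carry an $\alpha^m\beta^m$ term with coefficient $d-2$. This is the same order as the coefficient $2(d-1)$ for $(0,0)$, not a $\gamma^m$ or $\beta^{2m}$ replacement as you claim.

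Hence the ratio of the second eigenvalue to the Perron eigenvalue tends to $1$, not to $0$. No dominant-term comparison of the kind you describe can produce the gap. The quantity $(\gamma/\beta)^m$ you identify is the right one, but only as the ratio of the off-axis perturbation to an $O(r_m)$ splitting, not as an eigenvalue ratio.

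\paragraph{The missing device.} You need the paper's way of stripping off this common identity part. Write $\frac{d}{\alpha^m}M_Z^{(m)}=I+r_mD_Z+E_m$, with $D_Z$ diagonal ($d-1$ on $s=0$, $-1$ otherwise) and $\|E_m\|=O((\gamma/\alpha)^m)=o(r_m)$. This is exactly where $\gamma<\beta$, i.e.\ $p_{00}>1/d$, enters. Only the bound $\kappa_s^{(m)}(l)\le\gamma^m$ for $l\neq0$ is used, so the off-axis probabilities never need to be diagonalized, which answers your ``main obstacle.''

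Then $\frac{d^2}{\alpha^{2m}}M_{\mathrm{cyc}}^{(m)}=I+r_m(D_Z+D_X)+o(r_m)$. The generator $K_m=(\widehat M_{\mathrm{cyc}}^{(m)}-I)/r_m$ converges to the diagonal matrix $G=D_Z+D_X$, whose top eigenvalue $2(d-1)$ is simple on $\mathbf e_{00}$ with gap $d$. Perturbing this isolated eigenvalue gives two things:
\begin{itemize}
\item strict Perron dominance for $m\geq m_0$, namely $1+2(d-1)r_m+o(r_m)$ against $1+(d-2)r_m+o(r_m)$ or $1-2r_m+o(r_m)$;
\item convergence $\mathbf q_\ast^{(m)},\mathbf w_\ast^{(m)}\to\mathbf e_{00}$.
\end{itemize}

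\paragraph{Your fallbacks do not substitute for this.}
\begin{itemize}
\item The Fourier-rotated check maps $q_{st}\mapsto\sum_l\kappa_t^{(m)}(l)q_{s-l,t}$, shifting the $s$-label inside each $t$-sector with $t$-dependent weights. So the $s$- and $t$-sector masses do not in general form a closed or block-triangular system.
\item A Hilbert-metric estimate needs primitivity, which can fail. Even when it holds, its contraction coefficient tends to $1$ because $\widehat M_{\mathrm{cyc}}^{(m)}\to I$, so it says nothing about where the fixed point sits as $m\to\infty$.
\item The claim that ``the Perron vector can be taken positive'' needs irreducibility, which fails for instance when $\gamma=0$ (diagonal cycle matrix). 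The paper instead gets $(\mathbf w_\ast^{(m)})^T\mathbf q^{(0)}>0$ from $\mathbf w_\ast^{(m)}\to\mathbf e_{00}$ and $q^{(0)}_{00}=a>0$.
\end{itemize}

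\paragraph{Minor points.} The per-cycle success bound is $q_{00}^{(n,m)}a^{2m}$, with $q_{00}^{(n,m)}>0$ by induction, rather than $p_{00}^{2m+1}$. The noiseless case $b=0$, where $r_m=1$, must be handled separately.
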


The proof of Theorem~\ref{thm:prime_power_mcaepp} is given in Appendix~\ref{app:prime_power_mcaepp}. The proof first derives the exact finite-field convolution generated by one successful $m$-carrier check. Equation~\eqref{eq:prime_power_parameter_order} then makes the contributions from the nonzero $z$-columns asymptotically smaller than the leading correction generated by the aligned column. Combining two checks related by \eqref{eq:finite_field_fourier_bell_action} produces a unique dominant Bell-sector eigendirection associated with $\ket{\Phi^{0,0}}$. Finite-dimensional matrix perturbation and Perron--Frobenius analysis then give the fixed-$m$ convergence and the ordered limit in \eqref{eq:prime_power_ordered_limit}.

For $d=3$, the multiplicative group is $\mathbb{F}_3^\times=\{1,-1\}$. The multiplicative symmetrization therefore reduces to the pair symmetrization used in the qutrit protocol, and the condition $p_{00}>1/d$ reproduces the qutrit threshold $p_{00}>1/3$. For composite dimensions that are not prime powers, the finite-field phase-space construction used above is unavailable. Theorem~\ref{thm:prime_power_mcaepp} therefore does not directly apply to those dimensions.

\section{Conclusion}

We have studied carrier-assisted entanglement purification in high-dimensional systems subject to general asymmetric Pauli noise. The main lesson is that noise asymmetry is not merely a quantitative perturbation of the symmetric case; it can create a convergence obstruction for the unprocessed single-carrier recurrence analyzed here. In particular, even when the target Bell component exceeds the basic qutrit threshold, unfavorable marginal error distributions may prevent single-carrier purification from converging. The MUB-guided preprocessing introduced here addresses this obstruction through randomized inversion-pair symmetrization and deterministic maximum-line alignment. For every qutrit Pauli channel with $p_{00}>1/3$, the accepted recurrence converges for every sufficiently large fixed carrier number, and its fixed-point fidelity approaches one in the subsequent large-carrier limit.

The higher-dimensional analysis further separates two aspects of the protocol. The algebraic carrier-assisted construction and the associated marginal-error bottleneck extend naturally to arbitrary qudit dimensions. By contrast, the high-dimensional preprocessing uses the finite-field line structure available in prime-power dimensions, followed by multiplicative symmetrization. This shows that high-dimensional purification is not obtained by a direct dimensional replacement of qubit or qutrit protocols; the discrete phase-space structure of the Bell-error indices becomes part of the protocol design. The qutrit inversion average preserves every MUB-line weight, whereas the general prime-power multiplicative symmetrization preserves the aligned-axis weight and equalizes the nonzero column totals while generally permuting the other MUB lines. This structure yields the sufficient condition $p_{00}>1/d$ and the stated ordered large-$m$ limit; composite dimensions that are not prime powers require different ideas.

At equal target fidelity, the certified deterministic comparison over all legal maximum-line alignments and $2\leq m\leq100$ gives witnesses with opposite finite-resource orderings. Thus neither the MUB-adapted strategy nor complete Pauli-label twirling uniformly dominates the other among the audited deterministic profiles under the stated optimization; the preferred strategy depends on the full Pauli-error profile. The shared-bit comparison refers to the specified exact, independently randomized per-use implementations and is not an unconditional minimum for a known channel.

Several issues remain open. For composite dimensions that are not prime powers, a complete set of $d+1$ MUBs is not generally available, so alternative preprocessing principles may be needed. It is also important to move beyond Pauli-diagonal noise and examine whether analogous adaptive strategies can be developed for continuous or experimentally reconstructed noise models. Finally, the present thresholds and finite-resource comparison assume ideal local operations and perfect syndrome extraction. Quantifying the tolerance to noisy gates, imperfect measurements, storage errors, and experimental implementation costs will be necessary before the protocol can be assessed in realistic high-dimensional quantum-network settings; a fully certified random-ensemble comparison is also left for future work.

\section*{Acknowledgment} 

ZHS and LC were supported by the NNSF of China (Grant No. 12471427). The authors thank Joonwoo Bae for his helpful feedback and stimulating discussions.

\appendix

\section{Proof of Theorem~\ref{thm:mcaepp_convergence}}

\label{app:sca}

Fix the number of carriers $m$. Let the Bell-error distribution of the shared state after $n$ consecutive successful rounds be
\begin{equation}
\label{eq:qnm_definition}
    \mathbf{q}^{(n,m)} =
    \big(
    q_{00}^{(n,m)},q_{01}^{(n,m)},q_{02}^{(n,m)},
    q_{10}^{(n,m)},q_{11}^{(n,m)},q_{12}^{(n,m)},
    q_{20}^{(n,m)},q_{21}^{(n,m)},q_{22}^{(n,m)}
    \big)^T.
\end{equation}
For notational simplicity, the superscripts $(n,m)$ are suppressed throughout the derivation of a single conditional round.

At the preprocessing step, Alice and Bob apply the bilateral
qutrit Clifford operation
\begin{equation}
    W_A^*\otimes W_B,
    \qquad
    W:=H_3^\dagger P_3^\dagger H_3,
    \qquad
    P_3:=\operatorname{diag}(1,1,\omega).
\end{equation}
The conjugation action of $W$ on the qutrit Pauli generators is
\begin{equation}
    W X W^\dagger=X,
    \qquad
    W Z W^\dagger=XZ.
\end{equation}
Consequently,
\begin{equation}
    W X^sZ^tW^\dagger
    \doteq X^{s\oplus t}Z^t,
\end{equation}
where $\doteq$ denotes equality up to an irrelevant global phase.
Under the Bell-state convention used in this paper, the bilateral
operation therefore induces the symplectic shear
\begin{equation}
    S:(s,t)\longmapsto(s\oplus t,t),
    \qquad
    S=
    \begin{pmatrix}
        1&1\\
        0&1
    \end{pmatrix}
    \in SL(2,\mathbb F_3).
\end{equation}
Hence the probability distribution after preprocessing is the
pullback of this label transformation,
\begin{equation}
\label{eq:r=q}
    r_{s,t}=q_{s\ominus t,t},
\end{equation}
where $\ominus$ denotes subtraction modulo $3$. Explicitly,
\begin{align}
\nonumber
    \mathbf r
    &=(r_{00},r_{01},r_{02},
       r_{10},r_{11},r_{12},
       r_{20},r_{21},r_{22})\\
    &=(q_{00},q_{21},q_{12},
       q_{10},q_{01},q_{22},
       q_{20},q_{11},q_{02}).
\end{align}

The Pauli errors acting on the shared state and the \(m\) carriers can be specified by a \(2(m+1)\)-component label vector over \(\mathbb{Z}_3\),
\begin{equation}
    \boldsymbol{e} := (x_0,z_0,x_1,z_1,\ldots,x_m,z_m) \in \mathbb{Z}_3^{2(m+1)}.
\end{equation}
The corresponding joint error operator is
\begin{equation}
\label{eq:errorstring1}
    E(\boldsymbol{e}) = E_0\otimes E_1\otimes\cdots\otimes E_m,
    \qquad
    E_j=Z_j^{z_j}X_j^{x_j},
    \quad j=0,1,\ldots,m.
\end{equation}
We write $E:=E(\boldsymbol e)$ when no ambiguity arises.

For $S_i=X_iX_m^2,i=1,\dots,m-1$ in \eqref{eq:generators}, $ES_i=S_iE$ holds provided that the following condition is satisfied,
\begin{align}
&\omega^{z_i+2z_m}=\omega^0, \\
    \Leftrightarrow&z_i+2z_m\equiv0\mod3,\\
    \Leftrightarrow&z_i=z_m \mod3.
\end{align}

For $S_m=Z_0Z_1 \cdots Z_m$, $ES_m=S_mE$ holds provided that the following condition is satisfied,
\begin{equation}
\omega^{\sum_{j=0}^mx_j}=\omega^0\Leftrightarrow\sum_{j=0}^mx_j\equiv0\mod3.
\end{equation}

To sum up, the general conditions for the success of the purification protocol are as follows,
\begin{equation}
\label{condi:succ1}
    \begin{cases}
        z_i=z_m \mod3, i=1,2,\dots,m-1.\\
        \sum_{j=0}^mx_j\equiv0\mod3.
    \end{cases}
\end{equation}

Each of the $m$ carrier qutrits is transmitted independently through the same qutrit depolarizing channel with $p:=p_{00}$. Therefore, for every $k=1,\ldots,m$, the carrier error pair $(x_k,z_k)$ has the joint probability mass function
\begin{equation}
\label{eq:carrier_distribution}
    P_k(i,j) := P(x_k=i,z_k=j) =
    \begin{cases}
        p, & (i,j)=(0,0),\\
        \dfrac{1-p}{8}, & \text{otherwise}.
    \end{cases}
\end{equation}
The carrier error pairs $\{(x_k,z_k)\}_{k=1}^{m}$ are independent and identically distributed. They are also independent of the error pair $(x_0,z_0)$ associated with the current shared state, whose distribution after preprocessing is
\begin{equation}
\label{eq:shared_distribution}
    P_0(i,j) := P(x_0=i,z_0=j) = r_{ij}.
\end{equation}
In general, $P_0$ need not coincide with $P_k$.

To evaluate the success probability $P_{\mathrm{succ}}$ and the conditional output distribution after Round $n+1$, define
\begin{equation}
    \mathbf{q}' :=
    \mathbf{q}^{(n+1,m)} =
    (q'_{00},q'_{01},q'_{02},
    q'_{10},q'_{11},q'_{12},
    q'_{20},q'_{21},q'_{22})^T.
\end{equation}
we introduce a success indicator $\mathbf{1}_{\mathrm{succ}}$ defined over the error strings specified in \eqref{eq:errorstring1}. This indicator takes the value $1$ if the round ends in \textbf{Success}, and $0$ otherwise. 

We can now proceed to construct the three-dimensional success indicator $\mathbf{1}_{\mathrm{succ}}$ by \eqref{condi:succ1},
\begin{equation}
\label{eq:indi4}
    \mathbf{1}_{\mathrm{succ}} = \left( \prod_{i=1}^{m-1} \delta_0(z_i-z_m)\right) \cdot \delta_0(x_0+x_1+\cdots+x_m).
\end{equation}

Afterwards, based on the properties of the indicator function, to calculate the probability of protocol success, we only need to compute the expectation of $\mathbf{1}_{\mathrm{succ}}$,
\begin{equation}
    P_{\mathrm{succ}}=\mathbb{E}[\mathbf{1}_{\mathrm{succ}}]=P(\bigcap_{i=1}^{m-1}\{ z_i=z_m \mod3 \} \cap \{ \sum_{k=0}^m x_k\equiv0\mod3 \}).
\end{equation}

We will calculate $P_{\mathrm{succ}}$ in the following steps.

(i) Variable distribution analysis; 

(ii) Decomposition of the total probability formula;

(iii) Simplify the expression;

(iv) Calculate each part separately;

(v) Calculate $P_{\mathrm{succ}}$.

\vspace{1cm}

Here is the first step.

(i) Variable distribution analysis. 

For the carrier error pairs $\{(x_k,z_k)\}_{k=1}^{m}$, the common joint probability mass function is given by \eqref{eq:carrier_distribution}. We refer to the marginal pmf as the sum of the joint pmf over all possible values of another random variable.  

The marginal pmf of $z_k$ for $c\in\{0,1,2\}$ is
\begin{align}
\label{eq:pmfc0}
    C_c:=P(z_k=c) = \sum_{i=0}^2P_k(i,c)=
    \begin{cases}
        \dfrac{3p+1}{4}, &c=0\\
        \dfrac{3(1-p)}{8}, &c=1,2\\
    \end{cases} 
\end{align}
and the marginal pmf of $x_k$ for $t\in\{0,1,2\}$ is
\begin{align}
\label{eq:pmft0}
    T_t:=P(x_k=t) = \sum_{j=0}^2P_k(t,j)=
    \begin{cases}
        \dfrac{3p+1}{4}, &t=0\\
        \dfrac{3(1-p)}{8}, &t=1,2\\
    \end{cases} 
\end{align}

Let the conditional pmf $r_c(t)=P(x_k=t|z_k=c)$ for $c,t\in\{0,1,2\}$, then
\begin{align}
\label{eq:vda2}
    r_c(t)=
    \begin{cases}
        r_0(t)=
        \begin{cases}
            r_0(0)=\dfrac{4p}{3p+1},\\
            r_0(1)=\dfrac{1-p}{2(3p+1)},\\
            r_0(2)=\dfrac{1-p}{2(3p+1)},
        \end{cases}\\
        \\
        r_1(t)=\dfrac{1}{3},\ t\in\{0,1,2\},\\
        \\
        r_2(t)=\dfrac{1}{3},\ t\in\{0,1,2\}.
    \end{cases}
\end{align}

(ii) Decomposition of the total probability formula.

\begin{align}
    P_{\mathrm{succ}} &=P(\bigcap_{i=1}^{m-1}\{ z_i=z_m \mod3 \} \cap \{ \sum_{k=0}^m x_k\equiv0\mod3 \})\\
    &= \sum_{c=0}^2 \left( \prod_{j=1}^m P(z_j=c) P( \sum_{k=0}^m x_k\equiv0\mod3|z_1=\cdots=z_m=c)\right)\\
    &=\sum_{c=0}^2 C_c^m P( \sum_{k=0}^m x_k\equiv0\mod3|z_1=\cdots=z_m=c).
\end{align}

(iii) Simplify the expression.

\begin{align}
    &P( \sum_{k=0}^m x_k\equiv0\mod3|z_1=\cdots=z_m=c)\\
    &=\sum_{t_0=0}^2 \left[P(x_0=t_0)  \sum_{\mathbf{t} \in \mathbb{Z}_3^m} \delta_0(\sum_{j=0}^mt_j)\prod_{i=1}^mP(x_i=t_i|z_i=c)\right],
\end{align}
where $\mathbf{t} = (t_1, \dots, t_m)$ and
\begin{align}
    \delta_0(\sum_{j=0}^mt_j)&=\dfrac{1}{3}\sum_{k=0}^2\omega^{k \sum_{j=0}^mt_j}=\dfrac{1}{3}\sum_{k=0}^2\prod_{j=0}^m\omega^{k \cdot t_j}.
\end{align}

Then
\begin{align}
    P_{\mathrm{succ}} &= \sum_{c=0}^2 C_c^m P \left( \sum_{k=0}^m x_k\equiv0\mod3|z_1=\cdots=z_m=c \right)\\
    &= \sum_{c=0}^2 C_c^m \sum_{t_0=0}^2 \left[P(x_0=t_0)  \sum_{\mathbf{t} \in \mathbb{Z}_3^m}\left(\dfrac{1}{3}\sum_{k=0}^2\prod_{j=0}^m \omega^{k t_j} \prod_{i=1}^mP(x_i=t_i|z_i=c)\right)\right]\\
    &=\dfrac{1}{3}\sum_{c=0}^2 C_c^m \sum_{k=0}^2\sum_{t_0=0}^2P(x_0=t_0)\omega^{kt_0} \sum_{\mathbf{t} \in \mathbb{Z}_3^m}\left[\prod_{i=1}^m\omega^{k t_i}P(x_i=t_i|z_i=c)\right],
\end{align}
where the inner sum simplifies as
\begin{align}
     \sum_{\mathbf{t} \in \mathbb{Z}_3^m}\left[\prod_{i=1}^m\omega^{k t_i}P(x_i=t_i|z_i=c)\right]&=\prod_{i=1}^m\sum_{t_i=0}^2\left[\omega^{k t_i}P(x_i=t_i|z_i=c)\right]\\
    &=\left[\sum_{t=0}^2\omega^{k t}P(x=t|z=c)\right]^m.
\end{align}

For a random variable $Y$ taking values in $\{0,1,2\}$, the modulo-3 characteristic function is defined as
\begin{equation}
    \phi_Y(k):=\mathbb{E}(\omega^{k Y})=\sum_{y=0}^2P(Y=y)\omega^{k y}, k\in \{0,1,2\}.
\end{equation}

Then, recognizing the definitions within our equation, we obtain
\begin{equation}
\label{eq:psucc}
    P_{\mathrm{succ}}=\dfrac{1}{3}\sum_{c=0}^2C_c^m \sum_{k=0}^2 \phi_{x_0}(k) [\phi_{x|z=c}(k)]^m.
\end{equation}

(iv) Calculate each part separately.

Using the specific values of $C_c$ from \eqref{eq:pmfc0}, we evaluate the carrier conditional characteristic function, 
\begin{equation}
    \phi_{x|z=c}(k) = \frac{1}{C_c}\sum_{t=0}^2 P_k(t,c) \omega^{kt}.
\end{equation}
Evaluating this for each $c \in \{0,1,2\}$ yields
\begin{equation}
    \phi_{x|z=c}(k) =
    \begin{cases}
        1, & k=0 \ (\text{for all } c), \\
        \frac{9p-1}{2(3p+1)}, & k=1,2 \ \text{and} \ c=0, \\
        0, & k=1,2 \ \text{and} \ c \in \{1,2\}.
    \end{cases}
\end{equation}

Next, we evaluate the characteristic function $\phi_{x_0}(k)$ of the shared state $x_0$. We define the probabilities $\widetilde{R}_i$ for $i \in \{0,1,2\}$ as
\begin{equation}
\label{def:R}
\widetilde{R}_i := P(x_0=i) = \sum_{j=0}^2 r_{ij}.
\end{equation}
The characteristic function is then directly expressed as
\begin{equation}
\label{eq:phi x0}
    \phi_{x_0}(k) = \sum_{i=0}^2 \widetilde{R}_i \omega^{ki}.
\end{equation}

Evaluating \eqref{eq:phi x0} at $k=0$ trivially yields the normalization condition $\phi_{x_0}(0) = \sum_{i=0}^2 \widetilde{R}_i = 1$. For the nontrivial modes $k \in \{1,2\}$, we algebraically decouple the sum into its symmetric and antisymmetric components. Expanding the characteristic functions yields
\begin{align}
    \phi_{x_0}(1) &= \widetilde{R}_0 + \frac{1}{2}(\omega+\omega^2)(\widetilde{R}_1+\widetilde{R}_2) + \frac{1}{2}(\omega-\omega^2)(\widetilde{R}_1-\widetilde{R}_2), \\
    \phi_{x_0}(2) &= \widetilde{R}_0 + \frac{1}{2}(\omega+\omega^2)(\widetilde{R}_1+\widetilde{R}_2) - \frac{1}{2}(\omega-\omega^2)(\widetilde{R}_1-\widetilde{R}_2).
\end{align}
This explicit decomposition clearly demonstrates that $\phi_{x_0}(1)$ and $\phi_{x_0}(2)$ are complex conjugates, where the term proportional to $(\omega-\omega^2)$ constitutes the pure imaginary component.

When evaluating the single-round success probability $P_{\mathrm{succ}}$, the pure imaginary components proportional to $(\omega-\omega^2)$ yield no net contribution. They either vanish upon multiplication (since $\phi_{x|z=c}(k) = 0$ for $c \neq 0$) or cancel perfectly in pairs due to the symmetry of the carrier characteristic function, $\phi_{x|z=0}(1) = \phi_{x|z=0}(2)$. By retaining only the effective real part and applying the root identity $\omega + \omega^2 = -1$ alongside the probability normalization $\sum_i \widetilde{R}_i = 1$, the characteristic function directly simplifies to
\begin{equation}
    \phi_{x_0}(0) = 1, \qquad \phi_{x_0}(1)+\phi_{x_0}(2) = 3\widetilde{R}_0-1.
\end{equation}

(v) Calculate $P_{\mathrm{succ}}$.

Substituting the previously derived characteristic functions into the expression for $P_{\mathrm{succ}}$, we evaluate the sum to obtain
\begin{equation}
    P_{\mathrm{succ}} = \frac{1}{3}\left(\frac{3p+1}{4}\right)^m + \frac{3\widetilde{R}_0-1}{3}\left(\frac{9p-1}{8}\right)^m + \frac{2}{3}\left(\frac{3(1-p)}{8}\right)^m.
\end{equation}

To render the subsequent analysis more readable, we introduce three auxiliary exponential decay parameters:
\begin{equation}
\label{def:ABC_m}
    A_m := \left(\frac{3p+1}{4}\right)^m, \quad 
    B_m := \left(\frac{9p-1}{8}\right)^m, \quad 
    C_m := \left(\frac{3(1-p)}{8}\right)^m.
\end{equation}

The single-round conditional success probability then takes the compact form
\begin{equation}
\label{def:Psucc}
    P_{\mathrm{succ}}=\dfrac{1}{3}[A_m+(3\widetilde{R}_0-1)B_m+2C_m].
\end{equation}

\vspace{1cm}

To determine the output Bell label $(s,t)$, we introduce indicator functions for the measurement outcomes. For the bit-flip label $s=i$, we define
\begin{equation}
    \delta(s=i) := \delta_0(i-x_0) = \frac{1}{3}\sum_{k=0}^2 \omega^{k(i-x_0)}.
\end{equation}

By the single-carrier phase-propagation rule in Lemma \ref{le:cnot1}, and using the successful syndrome condition that all carrier phase-error labels coincide, the output phase-flip label satisfies $t\equiv z_0-z_1 \pmod 3$. Accordingly, its indicator is defined as
\begin{equation}
    \delta(t=j) := \delta_0(j-z_0+z_1) = \frac{1}{3}\sum_{k=0}^2 \omega^{k(j-z_0+z_1)}.
\end{equation}

The joint Bell indicator factors as
\begin{equation}
\label{eq:q'st}
    \delta_{ij} := \mathbf{1}_{\mathrm{succ}} \delta(s=i) \delta(t=j),
\end{equation}
yielding the conditional probability $q'_{st} = \mathbb{E}[\delta_{st}] / P_{\mathrm{succ}}$.

Our primary objective is to evaluate $q'_{00}$ and establish that $q'_{00} \to 1$ in the asymptotic limit. Other required conditional probabilities $q'_{st}$ will be computed as necessary to facilitate this proof.

By substituting the explicit forms of the indicators into \eqref{eq:q'st}, the joint indicator for the target outcome $(s,t)=(0,0)$ expands as
\begin{equation}
    \delta_{00} = \underbrace{\left( \prod_{i=1}^{m-1} \delta_0(z_i-z_m)\right) \delta_0\left(\sum_{k=0}^m x_k\right)}_{\mathbf{1}_{\mathrm{succ}}} \underbrace{\delta_0(x_0)}_{s=0} \underbrace{\delta_0(z_0-z_1)}_{t=0}.
\end{equation}

The target conditions $s=0$ and $t=0$ deterministically enforce $x_0=0$ and $z_0=z_1$, respectively. Absorbing these constraints into the success indicator $\mathbf{1}_{\mathrm{succ}}$ simplifies the expression to
\begin{equation}
    \delta_{00} = \left( \prod_{i=0}^{m-1} \delta_0(z_i-z_m) \right) \delta_0\left(\sum_{k=1}^m x_k\right) \delta_0(x_0).
\end{equation}

Consequently, the joint indicator $\delta_{00}$ is non-vanishing if and only if all particles (index $0$ for the shared state and indices $1,\dots,m$ for the carriers) simultaneously satisfy the following constraints:
\begin{equation}
    \begin{cases}
        z_0 = z_1 = \cdots = z_m = \gamma, \quad \gamma \in \{0,1,2\}, \\
        x_0 = 0 \quad \text{and} \quad \sum_{k=1}^m x_k \equiv 0 \pmod 3.
    \end{cases}
\end{equation}

Next, we decompose the expectation value $\mathbb{E}[\delta_{00}]$ by conditioning on the mutually exclusive values of the common $Z$-error, $\gamma \in \{0,1,2\}$. Since the shared state and the carriers are independent after preprocessing, the joint probability factorizes. Letting $r_{ij} := P_0(i,j)$ denote the initial error probabilities of the shared state, the expectation expands as
\begin{align}
    \mathbb{E}[\delta_{00}] &= \sum_{\gamma=0}^2 P_0(0,\gamma) P\left( \sum_{k=1}^m x_k \equiv 0 \pmod 3 \;\middle|\; \forall k, z_k=\gamma \right) P(\forall k, z_k=\gamma) \notag \\
    &= \frac{1}{3} \sum_{\gamma=0}^2 r_{0\gamma} C_{\gamma}^m \sum_{k=0}^2 [\phi_{x|z=\gamma}(k)]^m \notag \\
    &= \frac{r_{00}}{3}(A_m + 2B_m) + \frac{r_{01} + r_{02}}{3}C_m.
\end{align}
Dividing by $P_{\mathrm{succ}}$ yields the normalized conditional probability,
\begin{equation}
    q'_{00} = \frac{1}{3P_{\mathrm{succ}}} \big[ r_{00}(A_m+2B_m) + (r_{01}+r_{02})C_m \big].
\end{equation}

For the outcomes $q'_{01}$ and $q'_{02}$, the target bit-flip label $s=0$ remains fixed, but the phase-flip target $t=j$ (where $j \in \{1,2\}$) modifies the constraint to $z_0 - \gamma \equiv j \pmod 3$. Physically, this simply induces a cyclic permutation of the shared state's $Z$-error indices $r_{0\gamma}$. Leaving the rest of the derivation structurally identical, we immediately obtain
\begin{align}
    q'_{01} &= \frac{1}{3P_{\mathrm{succ}}} \big[ r_{01}(A_m+2B_m) + (r_{02}+r_{00})C_m \big], \\
    q'_{02} &= \frac{1}{3P_{\mathrm{succ}}} \big[ r_{02}(A_m+2B_m) + (r_{00}+r_{01})C_m \big].
\end{align}

When evaluating the cases where $s \neq 0$, the $X$-error constraint shifts. For instance, fixing $s=1$ requires $x_0=1$, which modifies the carrier sum condition to $\sum_{k=1}^m x_k \equiv -1 \pmod 3$. The joint indicator $\delta_{10}$ becomes
\begin{equation}
    \delta_{10} = \left( \prod_{i=0}^{m-1} \delta_0(z_i-z_m) \right) \delta_0\left(\sum_{k=1}^m x_k + 1\right) \delta_0(x_0 - 1).
\end{equation}
Expanding the indicator for the carrier sum, $\delta_0\left(\sum_{k=1}^m x_k + 1\right)$, explicitly factors out an extra phase $\omega^k$ due to the constant offset $+1$. Since the characteristic function is symmetric, $\phi_{x|z=0}(1)=\phi_{x|z=0}(2)$, applying the identity $\omega+\omega^2=-1$ precisely flips the sign of the $B_m$ term in the expectation,
\begin{align}
    \mathbb{E}[\delta_{10}] &= \frac{1}{3} \sum_{\gamma=0}^2 r_{1\gamma} C_{\gamma}^m \sum_{k=0}^2 \omega^{k} [\phi_{x|z=\gamma}(k)]^m \notag \\
    &= \frac{r_{10}}{3}(A_m - B_m) + \frac{r_{11} + r_{12}}{3}C_m.
\end{align}
Normalizing yields $q'_{10}$. Applying the same cyclic permutation argument for $t \in \{1,2\}$ provides the complete set for $s=1$,
\begin{align}
    q'_{10} &= \frac{1}{3P_{\mathrm{succ}}} \big[ r_{10}(A_m-B_m) + (r_{11}+r_{12})C_m \big], \\
    q'_{11} &= \frac{1}{3P_{\mathrm{succ}}} \big[ r_{11}(A_m-B_m) + (r_{12}+r_{10})C_m \big], \\
    q'_{12} &= \frac{1}{3P_{\mathrm{succ}}} \big[ r_{12}(A_m-B_m) + (r_{10}+r_{11})C_m \big].
\end{align}

Finally, the analysis for $s=2$ proceeds in complete analogy. Fixing $s=2$ requires $x_0=2$, which shifts the carrier sum condition to $\sum_{k=1}^m x_k \equiv 1 \pmod 3$. Expanding the corresponding indicator yields a conjugate phase factor $\omega^{-k}$ (or equivalently $\omega^{2k}$). Due to the symmetry of the characteristic function, this phase factor again evaluates to $\omega^2 + \omega = -1$, yielding the identical $(A_m - B_m)$ dependence. Applying the cyclic permutation of the $Z$-error indices $r_{2\gamma}$ for the phase-flip target $t \in \{0,1,2\}$ provides the final set of conditional probabilities,
\begin{align}
    q'_{20} &= \frac{1}{3P_{\mathrm{succ}}} \big[ r_{20}(A_m-B_m) + (r_{21}+r_{22})C_m \big], \\
    q'_{21} &= \frac{1}{3P_{\mathrm{succ}}} \big[ r_{21}(A_m-B_m) + (r_{22}+r_{20})C_m \big], \\
    q'_{22} &= \frac{1}{3P_{\mathrm{succ}}} \big[ r_{22}(A_m-B_m) + (r_{20}+r_{21})C_m \big].
\end{align}

For convenience, we summarize the conditional probabilities $q'_{st}$ derived above. Recognizing the cyclic permutation in the phase-flip index $t$, the nine expressions can be compactly written as a generalized formula,
\begin{equation}
\label{eq:arrange}
    q'_{st} = \frac{1}{3P_{\mathrm{succ}}}
    \begin{cases}
        r_{st}(A_m+2B_m) + (r_{s, t\oplus 1} + r_{s, t\oplus 2})C_m, & s=0, \\
        r_{st}(A_m-B_m) + (r_{s, t\oplus 1} + r_{s, t\oplus 2})C_m, & s \in \{1,2\},
    \end{cases}
\end{equation}
where the addition in the second subscript is performed modulo 3 ($\oplus$), and the parameters $A_m, B_m, C_m$, and $P_{\mathrm{succ}}$ are given by \eqref{def:ABC_m} and \eqref{def:Psucc}. One can readily verify that these probabilities are properly normalized, $\sum_{s,t=0}^2 q'_{st} = 1$.

\vspace{1cm}

To investigate whether the target state probability $q'_{00}$ asymptotically approaches unity as $m$ increases, we treat the transformation as a discrete dynamical map. By substituting the shared state's initial error probabilities $r_{ij}$ with the iterative parameters $q_{ij}$ according to the permutation rules in \eqref{eq:r=q}, the nine recursion relations can be compactly grouped by the output bit-flip index $s$ and expressed in matrix form. Defining the mapped input state vectors as 
\begin{equation}
\label{eq:mapped_input_vectors}
\begin{aligned}
    \mathbf v^{(0)}
        &=(q_{00},q_{21},q_{12})^T,\\
    \mathbf v^{(1)}
        &=(q_{10},q_{01},q_{22})^T,\\
    \mathbf v^{(2)}
        &=(q_{20},q_{11},q_{02})^T.
\end{aligned}
\end{equation}
the updated probability vectors $\mathbf{q}'^{(s)} = (q'_{s0}, q'_{s1}, q'_{s2})^T$ are given by
\begin{equation}
\label{eq:matrix_map}
    \mathbf{q}'^{(s)} = \frac{1}{3P_{\mathrm{succ}}}
    \begin{pmatrix}
        \alpha_s & C_m & C_m \\
        C_m & \alpha_s & C_m \\
        C_m & C_m & \alpha_s
    \end{pmatrix}
    \mathbf{v}^{(s)},
\end{equation}
where the diagonal parameter $\alpha_s$ is determined by the target subspace, defined as
\begin{equation}
    \alpha_s =
    \begin{cases}
        A_m + 2B_m, & s=0, \\
        A_m - B_m, & s \in \{1,2\}.
    \end{cases}
\end{equation}

We now prove that the conditional recurrence actually converges for every fixed finite $m$. Define
\begin{equation}
    D_0:=
    \begin{pmatrix}
        A_m+2B_m & C_m & C_m\\
        C_m & A_m+2B_m & C_m\\
        C_m & C_m & A_m+2B_m
    \end{pmatrix},
    \qquad
    D_1:=
    \begin{pmatrix}
        A_m-B_m & C_m & C_m\\
        C_m & A_m-B_m & C_m\\
        C_m & C_m & A_m-B_m
    \end{pmatrix}.
\end{equation}
Let $\Pi_S$ denote the permutation matrix induced by the
symplectic shear in \eqref{eq:r=q}. Thus,
\begin{equation}
\label{eq:PiS_definition}
    \Pi_S\mathbf q=\mathbf r
    =
    (q_{00},q_{21},q_{12},
     q_{10},q_{01},q_{22},
     q_{20},q_{11},q_{02})^T.
\end{equation}
Equivalently,
\begin{equation}
    \Pi_S\mathbf e_{s,t}
    =\mathbf e_{s\oplus t,t}.
\end{equation}
Define
\begin{equation}
\label{eq:Mm_definition}
    M_m:=\operatorname{diag}(D_0,D_1,D_1)\Pi_S.
\end{equation}
The transition relations in \eqref{eq:matrix_map} can then be written as the normalized linear recurrence
\begin{equation}
\label{eq:normalized_recurrence}
    \mathbf{q}^{(n+1,m)} = \frac{M_m\mathbf{q}^{(n,m)}} {\mathbf{1}^{T}M_m\mathbf{q}^{(n,m)}},
\end{equation}
where $\mathbf{1}^{T}=(1,1,\dots,1)$ denotes the $9$‑dimensional row vector with all entries equal to one. Indeed, using \eqref{def:Psucc}, one verifies that
\begin{equation}
    \mathbf{1}^{T}M_m\mathbf{q}^{(n,m)} = 3P_{\mathrm{succ}}\!\left(\mathbf{q}^{(n,m)}\right).
\end{equation}
Iterating \eqref{eq:normalized_recurrence} gives
\begin{equation}
\label{eq:normalized_matrix_power}
    \mathbf{q}^{(n,m)} = \frac{M_m^n\mathbf{q}^{(0,m)}} {\mathbf{1}^{T}M_m^n\mathbf{q}^{(0,m)}}.
\end{equation}

For the cumulative success probability defined in \eqref{eq:total_success_probability}, the identity $\mathbf{1}^{T}M_m\mathbf{q}^{(n,m)}=3P_{\mathrm{succ}}(\mathbf{q}^{(n,m)})$ and \eqref{eq:normalized_matrix_power} give
\begin{equation}
    P_{\mathrm{succ}}\!\left(\mathbf{q}^{(n,m)}\right)=\frac{\mathbf{1}^{T}M_m^{n+1}\mathbf{q}^{(0,m)}}{3\mathbf{1}^{T}M_m^n\mathbf{q}^{(0,m)}}.
\end{equation}
Consequently, the product in \eqref{eq:total_success_probability} telescopes to
\begin{equation}
\label{eq:total_success_closed_form}
    P_{\mathrm{tot}}^{(N,m)}=\frac{\mathbf{1}^{T}M_m^N\mathbf{q}^{(0,m)}}{3^N}.
\end{equation}

We next prove convergence for every fixed finite $m$. Assume first that $1/3<p<1$. From the definitions of $A_m,B_m,C_m$, we have
\begin{equation}
    A_m>B_m>C_m>0.
\end{equation}
Therefore, every entry of both $D_0$ and $D_1$ is strictly positive.

We claim that
\begin{equation}
\label{eq:Mm_primitive}
    M_m^2>0
\end{equation}
entrywise. Indeed, for every Bell label $(x,z)$,
\begin{equation}
    \Pi_S\mathbf e_{x,z} =\mathbf e_{x\oplus z,z}.
\end{equation}
Since the corresponding block $D_0$ or $D_1$ is strictly positive, one application of $M_m$ produces positive components along every vector
\begin{equation}
    \mathbf e_{x\oplus z,u},
    \qquad 
    u\in\mathbb Z_3.
\end{equation}
After the second preprocessing step, these labels become
\begin{equation}
    (x\oplus z\oplus u,u).
\end{equation}
As $u$ ranges over $\mathbb Z_3$, the first index ranges over all three values. The second check then produces every possible second index. Thus every Bell label can reach every other Bell label in two iterations, proving \eqref{eq:Mm_primitive}.

Hence $M_m$ is a primitive non-negative matrix. Let $\rho_m>0$ be its Perron eigenvalue, and let $\mathbf v_m>0$ and $\mathbf w_m>0$ be right and left Perron eigenvectors, respectively, normalized by
\begin{equation}
    \mathbf 1^T\mathbf v_m=1.
\end{equation}
The Perron--Frobenius theorem gives
\begin{equation}
\label{eq:perron_limit}
    \rho_m^{-n}M_m^n
    \longrightarrow
    \frac{\mathbf v_m\mathbf w_m^T}
         {\mathbf w_m^T\mathbf v_m}.
\end{equation}
For $1/3<p<1$, the initial depolarizing distribution is strictly positive, and hence
\begin{equation}
    \mathbf w_m^T\mathbf q^{(0,m)}>0.
\end{equation}
Combining \eqref{eq:perron_limit} with \eqref{eq:normalized_matrix_power} yields
\begin{equation}
\label{eq:fixed_m_convergence}
    \lim_{n\to\infty}\mathbf q^{(n,m)} =\mathbf q_\ast^{(m)} =\mathbf v_m.
\end{equation}

The same Perron--Frobenius limit and
\eqref{eq:total_success_closed_form} give
\begin{equation}
    P_{\mathrm{tot}}^{(N,m)}
    \sim
    \frac{\mathbf w_m^T\mathbf q^{(0,m)}}
         {\mathbf w_m^T\mathbf v_m}
    \left(\frac{\rho_m}{3}\right)^N.
\end{equation}
Right multiplication by $\Pi_S$ does not change the row sums. Therefore,
\begin{align}
    \rho_m
    &\leq A_m+2B_m+2C_m\\
    &\leq A_1+2B_1+2C_1
      =\frac{9p+3}{4}<3.
\end{align}
It follows that
\begin{equation}
    \lim_{n\to\infty} P_{\mathrm{succ}}\!\left(\mathbf q^{(n,m)}\right) =\frac{\rho_m}{3}<1,
    \qquad
    \lim_{N\to\infty}P_{\mathrm{tot}}^{(N,m)}=0.
\end{equation}

The endpoint $p=1$ is immediate. In this case $A_m=B_m=1$, $C_m=0$, and the initial state is $\mathbf e_{00}$. Since $\Pi_S\mathbf e_{00}=\mathbf e_{00}$,
we have
\begin{equation}
    \mathbf q^{(n,m)}=\mathbf e_{00},
    \qquad
    P_{\mathrm{tot}}^{(N,m)}=1
\end{equation}
for all $n,N$, and $m$.

For the remainder of the proof, assume $1/3<p<1$. Define
\begin{equation}
    \eta_m:=\frac{B_m}{A_m},
    \qquad
    \varepsilon_m:=\frac{C_m}{A_m}.
\end{equation}
Explicitly,
\begin{equation}
    \eta_m= \left( \frac{9p-1}{2(3p+1)} \right)^m,
    \qquad
    \varepsilon_m= \left( \frac{3(1-p)}{2(3p+1)} \right)^m.
\end{equation}
For $1/3<p<1$,
\begin{equation}
    \eta_m\longrightarrow0,
    \qquad
    \frac{\varepsilon_m}{\eta_m} =\frac{C_m}{B_m} = \left( \frac{3(1-p)}{9p-1} \right)^m \longrightarrow0.
\end{equation}
Hence $\varepsilon_m=o(\eta_m)$.

Let
\begin{equation}
    K:=\operatorname{diag} (2,2,2,-1,-1,-1,-1,-1,-1)
\end{equation}
in the standard Bell-label ordering, and let
\begin{equation}
    L:=\operatorname{diag} (J_3-I_3,J_3-I_3,J_3-I_3),
\end{equation}
where $J_3$ is the $3\times3$ all-ones matrix. Then
\begin{equation}
\label{eq:normalized_M_expansion}
    \frac{M_m}{A_m} = \left(I_9+\eta_mK+\varepsilon_mL\right)\Pi_S.
\end{equation}

The shear has order three:
\begin{equation}
    \Pi_S^3=I_9.
\end{equation}
Cubing \eqref{eq:normalized_M_expansion} and using $\varepsilon_m=o(\eta_m)$ gives
\begin{equation}
\label{eq:three_round_expansion}
    \left(\frac{M_m}{A_m}\right)^3 = I_9+\eta_mG+o(\eta_m),
\end{equation}
where
\begin{equation}
    G= K+\Pi_SK\Pi_S^{-1} +\Pi_S^2K\Pi_S^{-2}.
\end{equation}
A direct evaluation in the standard ordering $(00,01,02,10,11,12,20,21,22)$ yields
\begin{equation}
\label{eq:G_explicit}
    G= \operatorname{diag} (6,0,0,-3,0,0,-3,0,0).
\end{equation}
Thus the eigenvalues of $G$ are
\begin{equation}
    6\quad\text{(multiplicity one)},\qquad
    0\quad\text{(multiplicity six)},\qquad
    -3\quad\text{(multiplicity two)}.
\end{equation}
The unique eigenspace corresponding to the largest eigenvalue $6$ is
\begin{equation}
    \ker(G-6I_9)=\operatorname{span}\{\mathbf e_{00}\}.
\end{equation}

Define the effective generator
\begin{equation}
\label{eq:effective_generator}
    \mathcal K_m :=
    \frac{
        \left(M_m/A_m\right)^3-I_9
    }{\eta_m}.
\end{equation}
Equation~\eqref{eq:three_round_expansion} implies that
\begin{equation}
\label{eq:effective_generator_limit}
    \|\mathcal K_m-G\|
    \longrightarrow0
\end{equation}
in any fixed matrix norm. The matrices $\mathcal K_m$ and $\left(M_m/A_m\right)^3$ have the same left and right eigenvectors. More precisely, if $\mathcal K_m\mathbf v=\mu\mathbf v$, then
\begin{equation}
\label{eq:effective_generator_eigenvalue_relation}
    \left(\frac{M_m}{A_m}\right)^3\mathbf v = \left(1+\eta_m\mu\right)\mathbf v.
\end{equation}

Since the eigenvalue $6$ of $G$ is simple and isolated, finite-dimensional spectral perturbation theory implies that, for all sufficiently large $m$, $\mathcal K_m$ has a unique eigenvalue $\mu_\ast^{(m)}$ in a fixed neighborhood of $6$, and
\begin{equation}
\label{eq:effective_dominant_eigenvalue}
    \mu_\ast^{(m)}=6+o(1).
\end{equation}
The remaining eigenvalues converge, counting algebraic multiplicity, to either $0$ or $-3$. Because $\mathcal K_m$ is real and the neighborhood of $6$ contains exactly one eigenvalue, $\mu_\ast^{(m)}$ is real.

It follows from \eqref{eq:effective_generator_eigenvalue_relation} that the corresponding eigenvalue of $\left(M_m/A_m\right)^3$ is
\begin{equation}
\label{eq:cubed_dominant_eigenvalue}
    \lambda_\ast^{(m)} = 1+6\eta_m+o(\eta_m),
\end{equation}
whereas the remaining eigenvalues have the forms
\begin{equation}
\label{eq:cubed_remaining_eigenvalues}
    1+o(\eta_m)
    \qquad
    \text{or}
    \qquad
    1-3\eta_m+o(\eta_m).
\end{equation}
Since $\eta_m>0$, for all sufficiently large $m$, $\lambda_\ast^{(m)}$ is the unique eigenvalue of maximum modulus.

The entrywise positivity $M_m^2>0$ established in \eqref{eq:Mm_primitive} implies that $M_m$ is primitive and that $M_m^3>0$. Therefore, $\lambda_\ast^{(m)}$ is the Perron eigenvalue of $\left(M_m/A_m\right)^3$. The associated normalized Perron eigenvector converges to the eigenvector of $G$ corresponding to its isolated eigenvalue $6$. By \eqref{eq:G_explicit}, this eigenspace is $\operatorname{span}\{\mathbf e_{00}\}$. Hence,
\begin{equation}
\label{eq:fixed_point_vector_limit}
    \mathbf q_\ast^{(m)} \longrightarrow \mathbf e_{00}
    \qquad
    (m\to\infty).
\end{equation}
Since $M_m$ and $M_m^3$ have the same normalized Perron eigenvector, $\mathbf q_\ast^{(m)}$ is also the attracting fixed point of the original one-round recurrence. In particular,
\begin{equation}
    q_{\ast,00}^{(m)} \longrightarrow1.
\end{equation}
Combining this result with the fixed-$m$ convergence in \eqref{eq:fixed_m_convergence}, we obtain
\begin{equation}
\label{eq:ordered_limit_final}
    \lim_{m\to\infty} \left( \lim_{n\to\infty}q_{00}^{(n,m)} \right)=1,
\end{equation}
which proves Theorem~\ref{thm:mcaepp_convergence}.

For every $F_{\mathrm{tar}}<q_{\ast,00}^{(m)}$, the fixed-$m$ convergence guarantees that $N_{\mathrm{tar}}(m)$ is finite. Moreover, since $\lim_{m\to\infty}q_{\ast,00}^{(m)}=1$, every prescribed target fidelity $F_{\mathrm{tar}}<1$ can be reached with finite $m$ and finite $N_{\mathrm{tar}}(m)$, although the corresponding cumulative success probability may be small.

The order of limits in \eqref{eq:ordered_limit_final} is essential. Indeed,
\begin{equation}
    \frac{B_m}{A_m} \to 0,
    \qquad
    \frac{C_m}{A_m} \to 0,
\end{equation}
and hence $M_m/A_m\to\Pi_S$. Writing the $m$-independent initial depolarizing distribution as $\mathbf q^{(0)}$, we have $\Pi_S\mathbf q^{(0)}=\mathbf q^{(0)}$. Therefore, for every fixed $n$,
\begin{equation}
    \lim_{m\to\infty}\mathbf q^{(n,m)} = \mathbf q^{(0)}.
\end{equation}
Consequently,
\begin{equation}
    \lim_{n\to\infty} \left( \lim_{m\to\infty}q_{00}^{(n,m)} \right) = p_{00},
\end{equation}
so the two limits cannot be interchanged.

\section{Proof of Theorem \ref{thm:universal_purification}}
\label{app:ada}

All Bell and Pauli indices in this appendix are evaluated modulo $3$. We first derive the exact conditional map generated by one $m$-carrier check and then analyze the complete two-basis cycle. Throughout this appendix, $M_Z^{(m)}$ denotes the unnormalized transition matrix of the raw star-stabilizer check $\mathcal C_Z^{(m)}$. In particular, it does not include the bilateral shear $W_A^*\otimes W_B$ or the corresponding permutation matrix $\Pi_S$ used in Theorem~\ref{thm:mcaepp_convergence}.

Let the initial Pauli probabilities be $\{p_{xz}\}$. The operator $J$ introduced in Section \ref{subsec:ada} satisfies
\begin{equation}
    JX^xZ^zJ^\dagger=X^{-x}Z^{-z}.
\end{equation}
Consequently, the coordinated random choice between $I$ and $J$ maps the channel probabilities to
\begin{equation}
    \widetilde p_{xz} = \frac{p_{xz}+p_{-x,-z}}{2}.
\end{equation}
The same random operation on the two subsystems of the shared Bell pair produces the corresponding Bell-diagonal state. The target probability remains unchanged,
\begin{equation}
    \widetilde p_{00}=p_{00}.
\end{equation}
The random choices used for distinct carrier transmissions are independent. Hence, after averaging over these choices, the carrier errors remain independent and identically distributed according to $\{\widetilde p_{xz}\}$.

The four MUB line weights are unchanged by the pair symmetrization. By Lemma \ref{lemma:mub_dominance}, at least one line has weight strictly larger than $1/2$. After mapping a maximum-weight line to $z=0$, we relabel the resulting probabilities as $\{p_{xz}\}$. They have the form
\begin{equation}
\label{eq:appB_aligned_matrix}
    P=
    \begin{pmatrix}
        a&c&c\\
        b&d&e\\
        b&e&d
    \end{pmatrix},
\end{equation}
where
\begin{equation}
\label{eq:appB_normalization}
    a=p_{00},
    \qquad
    a+2(b+c+d+e)=1,
    \qquad
    b=\max\{b,c,d,e\}.
\end{equation}
The dominant MUB line is the column $z=0$, whose total weight is
\begin{equation}
\label{eq:appB_alpha}
    \alpha:=a+2b=L_{\max}>\frac{1}{2}.
\end{equation}
We further define
\begin{equation}
\label{eq:appB_beta_gamma}
    \beta:=a-b,
    \qquad
    \gamma:=c+d+e=\dfrac{1-a-2b}{2}=\dfrac{1-\alpha}{2}.
\end{equation}
The fidelity assumption $a>1/3$ gives
\begin{equation}
\label{eq:appB_spectral_inequality}
    \beta-\gamma = a-b-\frac{1-a-2b}{2} = \frac{3a-1}{2} > 0.
\end{equation}
If $b=0$, then the maximality of $b$ in \eqref{eq:appB_normalization} implies $c=d=e=0$ and $a=1$, in which case the channel is noiseless and the theorem is immediate. We therefore assume $b>0$ below. Equations \eqref{eq:appB_alpha}-\eqref{eq:appB_spectral_inequality} then imply
\begin{equation}
\label{eq:appB_ordering}
    0\leq\gamma<\beta<\alpha.
\end{equation}

Let
\begin{equation}
    \mathbf{q} = (q_{00},q_{01},q_{02},q_{10},q_{11},q_{12},q_{20},q_{21},q_{22})^T
\end{equation}
be the Bell probability vector of the shared state immediately before one $m$-carrier check. The $m$ carrier error pairs are denoted by $(x_j,z_j)$, $j=1,\ldots,m$, and are independently distributed according to \eqref{eq:appB_aligned_matrix}. If the shared-state input label is $(s,z_0)$, the stabilizer conditions in \eqref{condi:succ1} require
\begin{equation}
    z_1=z_2=\cdots=z_m
\end{equation}
and
\begin{equation}
s+\sum_{j=1}^{m}x_j\equiv0\pmod 3.
\end{equation}
Conditioned on success, the output Bell label is
\begin{equation}
x_{\mathrm{out}}=s,
\qquad
z_{\mathrm{out}}=z_0-z_1.
\end{equation}

For $s,l\in\mathbb{Z}_3$, define
\begin{equation}
\label{eq:appB_kappa_definition}
    \kappa_s^{(m)}(l) := \sum_{\substack{x_1,\ldots,x_m\in\mathbb{Z}_3\\ x_1+\cdots+x_m\equiv-s\pmod 3}} \prod_{j=1}^{m}p_{x_j l}.
\end{equation}
This is the probability that all carriers have phase label $l$ and that their shift labels satisfy the required modulo-$3$ sum. The exact unnormalized output distribution of one successful check is therefore
\begin{equation}
\label{eq:appB_exact_single_block}
    \bigl(M_Z^{(m)}\mathbf{q}\bigr)_{st} = \sum_{l=0}^{2} \kappa_s^{(m)}(l)q_{s,t+l}.
\end{equation}
The conditional output is obtained by dividing \eqref{eq:appB_exact_single_block} by
\begin{equation}
    P_Z^{(m)}(\mathbf{q}) = \mathbf{1}^TM_Z^{(m)}\mathbf{q}.
\end{equation}
Equation \eqref{eq:appB_exact_single_block} also shows that a finite-$m$ check does not leave the phase label invariant: carrier phase errors produce the shifts $t\mapsto t-l$. This is why the proof must analyze a complete alternating-basis cycle rather than two separate infinite sequences of checks.

Using the modulo-$3$ character identity, \eqref{eq:appB_kappa_definition} can equivalently be written as
\begin{equation}
\label{eq:appB_kappa_character}
    \kappa_s^{(m)}(l) = \frac{1}{3} \sum_{k=0}^{2} \omega^{ks} \left( \sum_{x=0}^{2}p_{xl}\omega^{kx} \right)^m.
\end{equation}
For the dominant column $l=0$, the distribution is $(a,b,b)^T$. Hence,
\begin{equation}
    \sum_{x=0}^{2}p_{x0}=\alpha
\end{equation}
and
\begin{equation}
    \sum_{x=0}^{2}p_{x0}\omega^x = \sum_{x=0}^{2}p_{x0}\omega^{2x} = a-b = \beta.
\end{equation}
Substitution into \eqref{eq:appB_kappa_character} yields
\begin{equation}
\label{eq:appB_kappa_zero}
    \kappa_0^{(m)}(0) = \frac{\alpha^m+2\beta^m}{3}
\end{equation}
and
\begin{equation}
\label{eq:appB_kappa_nonzero}
    \kappa_1^{(m)}(0) = \kappa_2^{(m)}(0) = \frac{\alpha^m-\beta^m}{3}.
\end{equation}

Each of the remaining columns $l=1,2$ has total probability $\gamma$. Therefore,
\begin{equation}
\label{eq:appB_kappa_bound}
    0\leq\kappa_s^{(m)}(1)\leq\gamma^m,
    \qquad
    0\leq\kappa_s^{(m)}(2)\leq\gamma^m
\end{equation}
for every $s\in\mathbb{Z}_3$.

Let $\mathbf{e}_{st}$ denote the standard basis vector associated with the Bell label $(s,t)$, and define the diagonal matrix $D_Z$ by
\begin{equation}
\label{eq:appB_DZ}
    D_Z\mathbf{e}_{st} =
    \begin{cases}
        2\mathbf{e}_{st},&s=0,\\
        -\mathbf{e}_{st},&s\in\{1,2\}.
    \end{cases}
\end{equation}

Define 
\begin{equation}
\label{eq:appB_rm}
    r_m := \left(\frac{\beta}{\alpha}\right)^m.
\end{equation}
After removing the common positive factor $\alpha^m/3$ from the single-block map, the exact relation \eqref{eq:appB_exact_single_block} becomes
\begin{align}
    \frac{3}{\alpha^m}\bigl(M_Z^{(m)}\mathbf{q}\bigr)_{st}
    &= \frac{3\kappa_s^{(m)}(0)}{\alpha^m}q_{st} + \frac{3}{\alpha^m} \left[ \kappa_s^{(m)}(1)q_{s,t+1} + \kappa_s^{(m)}(2)q_{s,t+2} \right]\notag\\
    \label{eq:appB_scaled_single_block_component}
    &= q_{st} + r_m\bigl(D_Z\mathbf{q}\bigr)_{st} + \frac{3}{\alpha^m} \left[ \kappa_s^{(m)}(1)q_{s,t+1} + \kappa_s^{(m)}(2)q_{s,t+2} \right],
\end{align}
where the second equality follows from \eqref{eq:appB_kappa_zero} and \eqref{eq:appB_kappa_nonzero}. Define the matrix $E_m$ by
\begin{equation}
    \bigl(E_m\mathbf{q}\bigr)_{st} := \frac{3}{\alpha^m} \left[ \kappa_s^{(m)}(1)q_{s,t+1} + \kappa_s^{(m)}(2)q_{s,t+2} \right].
\end{equation}
Equation \eqref{eq:appB_scaled_single_block_component} can then be written compactly as
\begin{equation}
\label{eq:appB_scaled_single_block}
    \widehat{M}_Z^{(m)} := \frac{3}{\alpha^m}M_Z^{(m)} = I+r_mD_Z+E_m.
\end{equation}
Since the matrix dimension is fixed and \eqref{eq:appB_kappa_bound} gives $\kappa_s^{(m)}(1),\kappa_s^{(m)}(2)\leq\gamma^m$, for any fixed matrix norm,
\begin{equation}
\label{eq:appB_Em_bound}
    \lVert E_m\rVert = O\left( \left(\frac{\gamma}{\alpha}\right)^m \right).
\end{equation}
Because $\gamma<\beta$, equations \eqref{eq:appB_rm} and \eqref{eq:appB_Em_bound} imply
\begin{equation}
\label{eq:appB_Em_small}
    \lVert E_m\rVert=o(r_m).
\end{equation}

Let $\Pi_R$ be the permutation matrix induced by the bilateral Fourier operation
\begin{equation}
    R_{AB} = H_{3,A}\otimes H_{3,B}^*.
\end{equation}
For the Bell-state convention
\begin{equation}
    \ket{\Phi^{s,t}} = \frac{1}{\sqrt{3}} \sum_{j=0}^{2} \omega^{tj}\ket{j,j+s},
\end{equation}
the bilateral Fourier operation gives
\begin{equation}
    R_{AB}\ket{\Phi^{s,t}} = \omega^{-st}\ket{\Phi^{t,-s}}.
\end{equation}
Since the phase factor $\omega^{-st}$ does not affect the Bell-state probability distribution, the corresponding permutation matrix satisfies
\begin{equation}
\label{eq:appB_fourier_permutation}
    \Pi_R\mathbf{e}_{st} = \mathbf{e}_{t,-s}.
\end{equation}
After the second check, the inverse bilateral Fourier operation
\begin{equation}
    R_{AB}^{-1} = H_{3,A}^\dagger\otimes H_{3,B}^T
\end{equation}
returns the state to the original Bell coordinate system. Therefore, the unnormalized map associated with one complete successful cycle is
\begin{equation}
\label{eq:appB_cycle_matrix}
    M_{\mathrm{cyc}}^{(m)} = \Pi_R^{-1}M_Z^{(m)}\Pi_RM_Z^{(m)}.
\end{equation}
The second occurrence of $M_Z^{(m)}$ corresponds to a fresh set of $m$ carriers transmitted independently through the same memoryless Pauli channel.

Define
\begin{equation}
\label{eq:appB_DX}
    D_X := \Pi_R^{-1}D_Z\Pi_R.
\end{equation}
This similarity transformation expresses the leading-order contribution of the second $Z$-basis check in the original Bell coordinate system. More explicitly, since \eqref{eq:appB_DZ} and $\Pi_R\mathbf{e}_{st}=\mathbf{e}_{t,-s}$, we have
\begin{align}
    D_X\mathbf{e}_{st}
    &= \Pi_R^{-1}D_Z\Pi_R\mathbf{e}_{st}\notag\\
    \label{eq:appB_DX_action}
    &= \Pi_R^{-1}D_Z\mathbf{e}_{t,-s}=
    \begin{cases}
        2\mathbf{e}_{st}, & t=0,\\
        -\mathbf{e}_{st}, & t\in\{1,2\}.
    \end{cases}
\end{align}
Thus, $D_Z$ distinguishes the sector $s=0$ from the sectors $s\neq0$, whereas $D_X$ distinguishes the sector $t=0$ from the sectors $t\neq0$. The notation $D_X$ therefore refers to the leading-order action of the Fourier-rotated check; it does not represent an additional physical operation.

We also define the matrix $\widetilde{E}_m$ by
\begin{equation}
\label{eq:appB_Etilde_definition}
    \widetilde{E}_m := \Pi_R^{-1}E_m\Pi_R.
\end{equation}
The matrix $E_m$ contains the subleading contributions to the first check in the original Bell coordinate system. Accordingly, $\widetilde E_m$ contains the same subleading contributions to the second check after the Fourier rotation and the subsequent return to the original Bell coordinate system. Since $\Pi_R$ is a fixed $9\times9$ permutation matrix independent of $m$, for any fixed matrix norm there exists a constant $C_R>0$, independent of $m$, such that
\begin{equation}
\label{eq:appB_Etilde_bound}
    \lVert\widetilde E_m\rVert = \left\lVert\Pi_R^{-1}E_m\Pi_R\right\rVert \leq C_R\lVert E_m\rVert.
\end{equation}
For the standard induced $1$-norm, induced infinity norm, spectral norm, and Frobenius norm, one may take $C_R=1$ because multiplication by a permutation matrix only permutes rows or columns. In particular, \eqref{eq:appB_Em_small} implies
\begin{equation}
\label{eq:appB_Etilde_small}
    \lVert\widetilde E_m\rVert = o(r_m).
\end{equation}

The scaled matrix of the Fourier-rotated second check is therefore
\begin{align}
    \Pi_R^{-1}\widehat M_Z^{(m)}\Pi_R
    &= \Pi_R^{-1}\left(I+r_mD_Z+E_m\right)\Pi_R\notag\\
    \label{eq:appB_rotated_single_block}
    &= I+r_mD_X+\widetilde E_m.
\end{align}
Because one complete cycle consists of the first check, followed by the Fourier-rotated second check, its scaled unnormalized matrix is
\begin{align}
    \widehat M_{\mathrm{cyc}}^{(m)}
    &:= \left(\frac{3}{\alpha^m}\right)^2M_{\mathrm{cyc}}^{(m)}\notag\\
    &= \Pi_R^{-1}\widehat M_Z^{(m)}\Pi_R\widehat M_Z^{(m)}\notag\\
    \label{eq:appB_scaled_cycle_product}
    &= \left(I+r_mD_X+\widetilde E_m\right) \left(I+r_mD_Z+E_m\right).
\end{align}
Expanding this product without assuming that the matrices commute gives
\begin{align}
    \widehat M_{\mathrm{cyc}}^{(m)} ={}
    & I+r_m(D_Z+D_X)\notag\\
    \label{eq:appB_scaled_cycle_expansion}
    &+E_m+\widetilde E_m +r_m^2D_XD_Z +r_mD_XE_m +r_m\widetilde E_mD_Z +\widetilde E_mE_m.
\end{align}
The order of the factors in the last four terms follows directly from the order of the two matrices in \eqref{eq:appB_scaled_cycle_product} and must be retained. In particular, no commutativity between $D_X$, $D_Z$, $E_m$, and $\widetilde E_m$ is required.

Collecting all terms that are asymptotically smaller than the leading correction $r_m(D_Z+D_X)$, define
\begin{equation}
\label{eq:appB_Fm_definition}
    F_m := E_m+\widetilde E_m +r_m^2D_XD_Z +r_mD_XE_m +r_m\widetilde E_mD_Z +\widetilde E_mE_m.
\end{equation}
Equation \eqref{eq:appB_scaled_cycle_expansion} can then be written in the compact form
\begin{equation}
\label{eq:appB_scaled_cycle}
    \widehat M_{\mathrm{cyc}}^{(m)} = I+r_m(D_Z+D_X)+F_m.
\end{equation}

It remains to verify that $F_m$ is indeed negligible compared with $r_m$. Using a fixed submultiplicative matrix norm, \eqref{eq:appB_Fm_definition} gives
\begin{align}
    \lVert F_m\rVert \leq{}
    & \lVert E_m\rVert +\lVert\widetilde E_m\rVert +r_m^2\lVert D_XD_Z\rVert\notag\\
    &+r_m\lVert D_X\rVert\lVert E_m\rVert +r_m\lVert\widetilde E_m\rVert\lVert D_Z\rVert +\lVert\widetilde E_m\rVert\lVert E_m\rVert.
\label{eq:appB_Fm_bound}
\end{align}
The matrices $D_X$ and $D_Z$ are fixed and independent of $m$, so their norms are finite constants. Moreover,
\begin{equation}
    \lVert E_m\rVert=o(r_m),
    \qquad
    \lVert\widetilde E_m\rVert=o(r_m).
\end{equation}
For a nontrivial noisy channel, $0<\beta/\alpha<1$, and hence
\begin{equation}
    r_m = \left(\frac{\beta}{\alpha}\right)^m \longrightarrow0.
\end{equation}
It follows that
\begin{equation}
    r_m^2=o(r_m),
\end{equation}
and consequently
\begin{align}
    r_m^2\lVert D_XD_Z\rVert&=o(r_m),\\
    r_m\lVert D_X\rVert\lVert E_m\rVert&=o(r_m),\\
    r_m\lVert\widetilde E_m\rVert\lVert D_Z\rVert&=o(r_m),\\
    \lVert\widetilde E_m\rVert\lVert E_m\rVert&=o(r_m).
\end{align}
Together with $\lVert E_m\rVert=o(r_m)$ and $\lVert\widetilde E_m\rVert=o(r_m)$, these estimates imply
\begin{equation}
\label{eq:appB_Fm_small}
    \lVert F_m\rVert = o(r_m).
\end{equation}
Therefore, the first nontrivial asymptotic correction to the identity matrix is completely determined by $D_Z+D_X$, whereas all remaining contributions are asymptotically smaller than $r_m$.

Set
\begin{equation}
    G:=D_Z+D_X.
\end{equation}
Equations \eqref{eq:appB_DZ}, \eqref{eq:appB_fourier_permutation}, and \eqref{eq:appB_DX} show that $G$ is diagonal in the Bell basis and satisfies
\begin{equation}
\label{eq:appB_G_spectrum}
    G\mathbf{e}_{st} =
    \begin{cases}
        4\mathbf{e}_{st},&(s,t)=(0,0),\\
        \mathbf{e}_{st},&s=0,\ t\neq0,\\
        \mathbf{e}_{st},&s\neq0,\ t=0,\\
        -2\mathbf{e}_{st},&s\neq0,\ t\neq0.
    \end{cases}
\end{equation}
Hence, $\mathbf{e}_{00}$ is the unique eigenvector of $G$ associated with its largest eigenvalue $4$, and the gap to the next eigenvalue is $3$.

To analyze the dominant eigendirection, define
\begin{equation}
\label{eq:appB_effective_generator}
    K_m := \frac{\widehat M_{\mathrm{cyc}}^{(m)}-I}{r_m} = G+\frac{F_m}{r_m},
    \qquad
    G:=D_Z+D_X.
\end{equation}
Since $\lVert F_m\rVert=o(r_m)$, we have
\begin{equation}
\label{eq:appB_generator_limit}
    \lim_{m\to\infty} \left\lVert K_m-G\right\rVert = 0.
\end{equation}
The matrices $K_m$ and $\widehat M_{\mathrm{cyc}}^{(m)}$ have the same left and right eigenvectors. Indeed, if $K_m\mathbf{v}=\mu\mathbf{v}$, then
\begin{equation}
    \widehat M_{\mathrm{cyc}}^{(m)}\mathbf{v} = \left(1+r_m\mu\right)\mathbf{v}.
\end{equation}

By \eqref{eq:appB_G_spectrum}, the eigenvalues of $G$ are $4$, $1$, and $-2$, with multiplicities $1$, $4$, and $4$, respectively, and the eigenspace associated with the simple eigenvalue $4$ is spanned by $\mathbf{e}_{00}$.

The eigenvalue $4$ of $G$ is simple and is separated from the remaining spectrum by the gap
\begin{equation}
    \min\{|4-1|,|4-(-2)|\}=3.
\end{equation}
Since $\lVert K_m-G\rVert\to0$, the stability of an isolated simple eigenvalue under finite-dimensional matrix perturbations implies that there exists an integer $m_0$ such that, for every $m\geq m_0$, $K_m$ has exactly one eigenvalue, counted with algebraic multiplicity, in the disk
\begin{equation}
    \mathcal{D}_4 := \{z\in\mathbb{C}:|z-4|<1\}.
\end{equation}
Denote this eigenvalue by $\mu_*^{(m)}$. Then
\begin{equation}
    \lim_{m\to\infty}\mu_*^{(m)} = 4.
\end{equation}
The remaining eight eigenvalues converge, counting algebraic multiplicity, to either $1$ or $-2$. Since $K_m$ is real and $\mathcal{D}_4$ contains exactly one eigenvalue of $K_m$, $\mu_*^{(m)}$ must be real; otherwise, its complex conjugate would be a distinct eigenvalue in the same disk.

Let $\mathbf{q}_*^{(m)}$ and $\mathbf{w}_*^{(m)}$ denote the right and left eigenvectors associated with $\mu_*^{(m)}$, respectively. They are normalized by
\begin{equation}
    \mathbf{1}^T\mathbf{q}_*^{(m)} = 1,
    \qquad
    \left(\mathbf{w}_*^{(m)}\right)^T \mathbf{q}_*^{(m)} = 1.
\end{equation}
Because $\mu_*^{(m)}$ converges to the isolated simple eigenvalue $4$ of $G$, the corresponding right and left eigendirections converge to those of $G$ associated with $4$. Both limiting eigendirections are spanned by $\mathbf{e}_{00}$. With the above normalization,
\begin{equation}
\label{eq:appB_perron_vector_limits}
    \lim_{m\to\infty}\mathbf{q}_*^{(m)} = \mathbf{e}_{00},
    \qquad
    \lim_{m\to\infty}\mathbf{w}_*^{(m)} = \mathbf{e}_{00}.
\end{equation}

The eigenvalue of $\widehat M_{\mathrm{cyc}}^{(m)}$ corresponding to $\mu_*^{(m)}$ is
\begin{equation}
    \lambda_*^{(m)} = 1+r_m\mu_*^{(m)} = 1+4r_m+o(r_m).
\end{equation}
The remaining eigenvalues of $\widehat M_{\mathrm{cyc}}^{(m)}$ are of the forms
\begin{equation}
    1+r_m+o(r_m)
    \qquad
    \text{or}
    \qquad
    1-2r_m+o(r_m).
\end{equation}
Because $r_m>0$ and $r_m\to0$, there exists an integer $m_0$ such that, for every $m\geq m_0$, $\lambda_*^{(m)}$ is the unique eigenvalue of maximum modulus. The exact matrix $\widehat M_{\mathrm{cyc}}^{(m)}$ is nonnegative, since it is a positive rescaling of the unnormalized Bell-probability transition matrix $M_{\mathrm{cyc}}^{(m)}$. Hence, its unique maximum-modulus eigenvalue is its Perron eigenvalue, and $\mathbf{q}_*^{(m)}$ is its normalized nonnegative right Perron eigenvector. Since $M_{\mathrm{cyc}}^{(m)}$ and $\widehat M_{\mathrm{cyc}}^{(m)}$ differ only by a positive scalar factor, they have the same left and right Perron eigenvectors. In particular,
\begin{equation}
\label{eq:appB_perron_limit}
    \lim_{m\to\infty}\mathbf{q}_*^{(m)} = \mathbf{e}_{00}.
\end{equation}

The normalized conditional recurrence associated with complete successful cycles is
\begin{equation}
\label{eq:appB_normalized_cycle}
    \mathbf{q}^{(n+1,m)} = \frac{M_{\mathrm{cyc}}^{(m)}\mathbf{q}^{(n,m)}}{\mathbf{1}^TM_{\mathrm{cyc}}^{(m)}\mathbf{q}^{(n,m)}}.
\end{equation}
Repeated substitution in \eqref{eq:appB_normalized_cycle} shows that all intermediate normalization factors cancel, yielding
\begin{equation}
\label{eq:appB_cycle_power}
    \mathbf{q}^{(n,m)} = \frac{\left(M_{\mathrm{cyc}}^{(m)}\right)^n\mathbf{q}^{(0,m)}}{\mathbf{1}^T\left(M_{\mathrm{cyc}}^{(m)}\right)^n\mathbf{q}^{(0,m)}}.
\end{equation}

Let $\lambda_{\mathrm{P}}^{(m)}$ denote the Perron eigenvalue of $M_{\mathrm{cyc}}^{(m)}$. For every $m\geq m_0$, the simplicity and strict dominance of $\lambda_{\mathrm{P}}^{(m)}$ imply
\begin{equation}
    \left(M_{\mathrm{cyc}}^{(m)}\right)^n\mathbf{q}^{(0,m)} = \left(\lambda_{\mathrm{P}}^{(m)}\right)^n \mathbf{q}_*^{(m)} \left(\mathbf{w}_*^{(m)}\right)^T\mathbf{q}^{(0,m)} + o\left(\left|\lambda_{\mathrm{P}}^{(m)}\right|^n\right)
\end{equation}
as $n\to\infty$.

For the fixed maximum-line alignment used in this proof, the MUB preprocessing is independent of $m$. Hence the initial distribution is also independent of $m$, and we write $\mathbf{q}^{(0,m)}=\mathbf{q}^{(0)}$. Moreover, the preprocessing preserves the target Bell probability, so
\begin{equation}
    \mathbf{e}_{00}^T\mathbf{q}^{(0)} =q_{00}^{(0)} =a>\frac{1}{3}.
\end{equation}
Using the left-eigenvector limit in
\eqref{eq:appB_perron_vector_limits}, we obtain
\begin{equation}
    \lim_{m\to\infty} \left(\mathbf{w}_*^{(m)}\right)^T\mathbf{q}^{(0)} = \mathbf{e}_{00}^T\mathbf{q}^{(0)} =a>0.
\end{equation}
After increasing $m_0$ if necessary, it follows that
\begin{equation}
    \left(\mathbf{w}_*^{(m)}\right)^T\mathbf{q}^{(0)}>0
\end{equation}
for every $m\geq m_0$.

Finally, taking the subsequent large-$m$ limit and using \eqref{eq:appB_perron_limit}, we obtain
\begin{equation}
\label{eq:appB_universal_ordered_limit}
    \lim_{m\to\infty} \left( \lim_{n\to\infty} \mathbf{q}^{(n,m)} \right) = \mathbf{e}_{00},
\end{equation}
and hence
\begin{equation}
    \lim_{m\to\infty} \left( \lim_{n\to\infty} q_{00}^{(n,m)} \right) = \lim_{m\to\infty}q_{*,00}^{(m)} = 1.
\end{equation}
Thus, for every qutrit Pauli channel satisfying the assumptions of the theorem, the conditional fixed-point fidelity of the complete-cycle recurrence approaches unity as the carrier number tends to infinity. This proves Theorem~\ref{thm:universal_purification}.

\section{Proof of Theorem~\ref{thm:prime_power_mcaepp}}
\label{app:prime_power_mcaepp}

All indices and arithmetic operations in this appendix are evaluated over the finite field $\mathbb F_d$, where $d=\ell^r$ is a prime power. Let $\chi$ be the nontrivial additive character defined in Section~\ref{sec:higher}. We first derive the exact conditional map generated by one successful $m$-carrier check and then analyze the complete two-check cycle. Throughout this appendix, an $m$-carrier check means the raw finite-field star-stabilizer check consisting of encoding, carrier transmission, decoding, and all-zero syndrome post-selection. No additional bilateral shear is included in either check of the complete cycle.

We also justify the Clifford implementation used in \eqref{eq:finite_field_line_alignment_matrices}. Choose an $\mathbb F_\ell$ basis $e_1,\ldots,e_r$ of $\mathbb F_d$ and its trace-dual basis $f_1,\ldots,f_r$, satisfying $\operatorname{Tr}(e_if_j)=\delta_{ij}$. Writing $x=\sum_jx_je_j$ and $z=\sum_jz_jf_j$ and identifying $\ket{x}$ with $\bigotimes_j\ket{x_j}$ gives
\begin{equation}
    X(x)Z(z)=\bigotimes_{j=1}^r X_\ell(x_j)Z_\ell(z_j),
\end{equation}
because $\operatorname{Tr}(zu)=\sum_jz_ju_j$. Every $S\in SL(2,\mathbb F_d)$ is $\mathbb F_\ell$-linear and preserves the traced symplectic form $\operatorname{Tr}(zx'-xz')$; its coordinate matrix lies in $Sp(2r,\mathbb F_\ell)$. The symplectic--Clifford construction in Secs.~II.C--IV of \cite{Hostens2005Clifford}, including the modulo-$2\ell$ phase data required in even local dimension, yields a Clifford operation on the  corresponding $r$ systems of dimension $\ell$. Pulling it back through the basis identification implements $S_a$ or $S_\infty$ on the finite-field Pauli labels up to phases. Finally, $(U^*\otimes U)\ket{\Phi^{0,0}}=\ket{\Phi^{0,0}}$ and Pauli conjugation give
\begin{equation}
    (U^*\otimes U)\ket{\Phi^{x,z}} =e^{\mathrm i\vartheta(x,z)}\ket{\Phi^{x',z'}},
\end{equation}
where $(x',z')=S(x,z)$. This proves existence of the required implementation, not optimality of its experimental gate cost.

After the maximum-weight MUB line has been mapped to the axis $z=0$ and the multiplicative symmetrization has been applied, let $\{\widetilde p_{xz}\}_{x,z\in\mathbb F_d}$ denote the effective Pauli probabilities. Define
\begin{equation}
\label{eq:ff_parameters}
    a:=\widetilde p_{00}=p_{00},
    \qquad
    b:=\widetilde p_{x0}
    \quad\text{for every }x\neq0,
\end{equation}
and
\begin{equation}
\label{eq:ff_alpha_beta_gamma}
    \alpha:=a+(d-1)b,
    \qquad
    \beta:=a-b,
    \qquad
    \gamma:=\frac{1-\alpha}{d-1}.
\end{equation}
The multiplicative symmetrization also gives
\begin{equation}
\label{eq:ff_nonzero_column_weight}
    \sum_{x\in\mathbb F_d}\widetilde p_{xz}
    =\gamma
    \qquad\text{for every }z\neq0.
\end{equation}

We first dispose of the case $b=0$. In this case, the aligned axis $z=0$ has total weight $\alpha=a$. Since the multiplicative symmetrization preserves the total weight of this axis, its nonzero points also had zero total weight before the symmetrization. If any nonzero phase-space point outside the aligned axis had positive probability, the unique MUB line through this point and the origin would have weight strictly larger than $a$, contradicting the maximality of the aligned line. Hence every nonidentity Pauli probability vanishes and $a=1$. The channel is then noiseless, the initial Bell distribution is $\mathbf e_{00}$, and every successful cycle leaves it unchanged with unit success probability. The theorem is therefore immediate in this case.

For the remainder of the proof, assume $b>0$. Since $a=p_{00}>1/d$,
\begin{equation}
\label{eq:ff_beta_gamma_difference}
    \beta-\gamma
    =a-b-\frac{1-a-(d-1)b}{d-1}
    =\frac{da-1}{d-1}>0.
\end{equation}
Moreover,
\begin{equation}
    \alpha-\beta=db>0.
\end{equation}
Consequently,
\begin{equation}
\label{eq:ff_strict_parameter_order}
    0\leq\gamma<\beta<\alpha.
\end{equation}

Let
\begin{equation}
\label{eq:ff_q_vector}
    \mathbf q
    =
    \bigl(q_{st}\bigr)_{s,t\in\mathbb F_d}
    \in\mathbb R^{d^2}
\end{equation}
be the Bell-probability vector immediately before one $m$-carrier check. The errors on the $m$ independently transmitted carriers are denoted by $(x_j,z_j)$, $j=1,\ldots,m$, and are independently distributed according to $\{\widetilde p_{xz}\}$.

We first derive the syndrome and output label directly. For an input label $(s,z_0)$, Alice's Fourier and inverse-SUM encoding gives
\begin{equation}
\label{eq:ff_after_alice_encoding}
    \frac{1}{d^{m/2}}\sum_{u,r_1,\ldots,r_{m-1}} \chi(z_0u)\ket{u,u+s}_{AB} \ket{r_1,\ldots,r_{m-1},-u-\sum_{i=1}^{m-1}r_i}.
\end{equation}
Under $X(x)Z(z)\ket r=\chi(zr)\ket{r+x}$, carrier errors transform this state into
\begin{align}
    \frac{1}{d^{m/2}}\sum_{u,\mathbf r}
    &\chi\!\left((z_0-z_m)u+\sum_{i=1}^{m-1}(z_i-z_m)r_i\right) \ket{u,u+s}_{AB}\notag\\
    \label{eq:ff_after_carrier_errors}
    &\otimes\ket{r_1+x_1,\ldots,r_{m-1}+x_{m-1}, -u-\sum_{i=1}^{m-1}r_i+x_m}.
\end{align}
Bob applies the forward SUM from $B$ to carrier $m$ and then from every carrier $i<m$ to carrier $m$. Its label becomes
\begin{equation}
    -u-\sum_i r_i+x_m+(u+s)+\sum_i(r_i+x_i) =s+\sum_{j=1}^m x_j.
\end{equation}
For the first $m-1$ carriers,
\begin{equation}
    H_d\ket{r_i+x_i}=\frac{1}{\sqrt d}\sum_{y_i} \chi\bigl(y_i(r_i+x_i)\bigr)\ket{y_i},
\end{equation}
and character orthogonality gives
\begin{equation}
    \sum_{r_i}\chi\bigl((z_i-z_m+y_i)r_i\bigr) =d\,\delta_{y_i,z_m-z_i}.
\end{equation}
Therefore the decoded state is
\begin{equation}
\label{eq:ff_raw_star_propagation}
    \chi\!\left(\sum_{i=1}^{m-1}x_i(z_m-z_i)\right) \ket{\Phi^{s,z_0-z_m}}_{AB} \bigotimes_{i=1}^{m-1}\ket{z_m-z_i}_i \left|s+\sum_{j=1}^{m}x_j\right\rangle_m.
\end{equation}
The normalization is $d^{-m/2}d^{-(m-1)/2}d^{m-1}=d^{-1/2}$, and the prefactor is a global phase. Hence the all-zero outcome is obtained exactly when
\begin{equation}
\label{eq:ff_success_conditions}
    z_1=\cdots=z_m=l
\end{equation}
and
\begin{equation}
\label{eq:ff_shift_condition}
    s+\sum_{j=1}^{m}x_j=0.
\end{equation}
The accepted Bell label is
\begin{equation}
\label{eq:ff_output_label}
    x_{\mathrm{out}}=s,\qquad z_{\mathrm{out}}=z_0-l.
\end{equation}
For $m=1$, the products over the first $m-1$ carriers are empty, the phase is one, and \eqref{eq:ff_raw_star_propagation} reduces to $\ket{\Phi^{s,z_0-z_1}}\ket{s+x_1}$.

For $s,l\in\mathbb F_d$, define
\begin{equation}
\label{eq:ff_kappa_definition}
    \kappa_s^{(m)}(l) := \sum_{\substack{x_1,\ldots,x_m\in\mathbb{F}_d\\ x_1+\cdots+x_m=-s}}
    \prod_{j=1}^{m}\widetilde p_{x_jl}.
\end{equation}
This is the joint probability that all carriers have phase label $l$ and that their shift labels satisfy \eqref{eq:ff_shift_condition}.

Using the finite-field character identity
\begin{equation}
\label{eq:ff_character_delta}
    \delta_0(y) = \frac{1}{d} \sum_{k\in\mathbb F_d}\chi(ky),
\end{equation}
\eqref{eq:ff_kappa_definition} becomes
\begin{align}
\label{eq:ff_kappa_character}
    \kappa_s^{(m)}(l)
    &=
    \frac{1}{d} \sum_{k\in\mathbb F_d} \chi(ks) \sum_{x_1,\ldots,x_m\in\mathbb F_d} \prod_{j=1}^{m} \widetilde p_{x_jl}\chi(kx_j) \nonumber\\
    &=
    \frac{1}{d} \sum_{k\in\mathbb F_d} \chi(ks)
    \left(
        \sum_{x\in\mathbb F_d}
        \widetilde p_{xl}\chi(kx)
    \right)^m.
\end{align}

For the aligned column $l=0$, \eqref{eq:ff_parameters} and \eqref{eq:ff_alpha_beta_gamma} give
\begin{equation}
\label{eq:ff_zero_column_trivial_character}
    \sum_{x\in\mathbb F_d}\widetilde p_{x0} =\alpha.
\end{equation}
For every $k\neq0$, multiplication by $k$ permutes $\mathbb F_d^\times$, and hence
\begin{align}
\label{eq:ff_zero_column_nontrivial_character}
    \sum_{x\in\mathbb F_d}\widetilde p_{x0}\chi(kx)
    &= a+b\sum_{x\neq0}\chi(kx) \nonumber\\
    &=a-b = \beta.
\end{align}
We also use
\begin{equation}
\label{eq:ff_character_sum_nonzero}
    \sum_{k\neq0}\chi(ks)
    =
    \begin{cases}
        d-1, & s=0,\\
        -1,  & s\neq0.
    \end{cases}
\end{equation}
Substitution into \eqref{eq:ff_kappa_character} yields
\begin{equation}
\label{eq:ff_kappa_zero_s0}
    \kappa_0^{(m)}(0) = \frac{\alpha^m+(d-1)\beta^m}{d},
\end{equation}
and, for every $s\neq0$,
\begin{equation}
\label{eq:ff_kappa_zero_snonzero}
    \kappa_s^{(m)}(0) = \frac{\alpha^m-\beta^m}{d}.
\end{equation}

For a nonzero column $l\neq0$, the constraint in \eqref{eq:ff_kappa_definition} can only decrease the total probability. Equation~\eqref{eq:ff_nonzero_column_weight} therefore gives
\begin{equation}
\label{eq:ff_kappa_nonzero_bound}
    0\leq\kappa_s^{(m)}(l) \leq
    \left(
        \sum_{x\in\mathbb F_d}\widetilde p_{xl}
    \right)^m
    =\gamma^m.
\end{equation}

Let $M_Z^{(m)}$ denote the unnormalized Bell-probability transition matrix generated by one successful check. From \eqref{eq:ff_output_label}, its exact action is
\begin{equation}
\label{eq:ff_single_check_map}
    \bigl(M_Z^{(m)}\mathbf q\bigr)_{st} = \sum_{l\in\mathbb F_d} \kappa_s^{(m)}(l)q_{s,t+l}.
\end{equation}
The corresponding success probability is
\begin{equation}
\label{eq:ff_single_check_success}
    P_Z^{(m)}(\mathbf q) = \mathbf 1^TM_Z^{(m)}\mathbf q.
\end{equation}

Let $\mathbf e_{st}$ denote the standard basis vector associated with the Bell label $(s,t)$, and define the diagonal matrix $D_Z$ by
\begin{equation}
\label{eq:ff_DZ_definition}
    D_Z\mathbf e_{st} =
    \begin{cases}
        (d-1)\mathbf e_{st}, & s=0,\\
        -\mathbf e_{st},     & s\neq0.
    \end{cases}
\end{equation}
Define
\begin{equation}
\label{eq:ff_rm_definition}
    r_m := \left(\frac{\beta}{\alpha}\right)^m.
\end{equation}
By \eqref{eq:ff_strict_parameter_order},
\begin{equation}
\label{eq:ff_rm_limit}
    0<r_m<1,
    \qquad
    r_m\longrightarrow0.
\end{equation}

We rescale the single-check matrix by setting
\begin{equation}
\label{eq:ff_MZ_hat_definition}
    \widehat M_Z^{(m)} := \frac{d}{\alpha^m}M_Z^{(m)}.
\end{equation}
Equations~\eqref{eq:ff_kappa_zero_s0} and \eqref{eq:ff_kappa_zero_snonzero} show that the contribution from $l=0$ is exactly $I+r_mD_Z$. Define the remaining matrix $E_m$ by
\begin{equation}
\label{eq:ff_Em_definition}
    (E_m\mathbf q)_{st} := \frac{d}{\alpha^m} \sum_{l\neq0}\kappa_s^{(m)}(l)q_{s,t+l}.
\end{equation}
Then
\begin{equation}
\label{eq:ff_single_check_expansion}
    \widehat M_Z^{(m)} = I+r_mD_Z+E_m.
\end{equation}

For the induced infinity norm, \eqref{eq:ff_kappa_nonzero_bound} gives
\begin{equation}
\label{eq:ff_Em_norm_bound}
    \|E_m\|_\infty \leq d(d-1) \left(\frac{\gamma}{\alpha}\right)^m.
\end{equation}
Since $\gamma<\beta$,
\begin{equation}
\label{eq:ff_Em_small_o}
    \frac{\|E_m\|_\infty}{r_m} \leq d(d-1) \left(\frac{\gamma}{\beta}\right)^m \longrightarrow0.
\end{equation}
Since the matrix dimension $d^2$ is fixed, all matrix norms are equivalent. Consequently,
\begin{equation}
\label{eq:ff_Em_o_rm}
    \|E_m\|=o(r_m)
\end{equation}
in any fixed matrix norm.

Let $\Pi_R$ denote the permutation matrix induced by the bilateral finite-field Fourier transformation
\begin{equation}
\label{eq:ff_RAB_definition}
    R_{AB}:=H_{d,A}\otimes H_{d,B}^*.
\end{equation}
Its action on the Bell basis is
\begin{equation}
\label{eq:ff_RAB_Bell_action}
    R_{AB}\ket{\Phi^{s,t}} = \chi(-st)\ket{\Phi^{t,-s}}.
\end{equation}
Since the phase factor $\chi(-st)$ does not affect Bell probabilities,
\begin{equation}
\label{eq:ff_PiR_action}
    \Pi_R\mathbf e_{st} = \mathbf e_{t,-s}.
\end{equation}
Define
\begin{equation}
\label{eq:ff_DX_Etilde_definition}
    D_X:=\Pi_R^{-1}D_Z\Pi_R,
    \qquad
    \widetilde E_m:=\Pi_R^{-1}E_m\Pi_R.
\end{equation}
It follows from \eqref{eq:ff_DZ_definition} and \eqref{eq:ff_PiR_action} that
\begin{equation}
\label{eq:ff_DX_action}
    D_X\mathbf e_{st} =
    \begin{cases}
        (d-1)\mathbf e_{st}, & t=0,\\
        -\mathbf e_{st},     & t\neq0.
    \end{cases}
\end{equation}
Since conjugation by $\Pi_R$ only permutes rows and columns,
\begin{equation}
\label{eq:ff_Etilde_small_o}
    \|\widetilde E_m\|=o(r_m).
\end{equation}

The unnormalized matrix associated with one complete successful cycle is
\begin{equation}
\label{eq:ff_cycle_matrix}
    M_{\mathrm{cyc}}^{(m)} = \Pi_R^{-1}M_Z^{(m)}\Pi_RM_Z^{(m)}.
\end{equation}
The second occurrence of $M_Z^{(m)}$ corresponds to a fresh, independent set of $m$ carriers. Define the rescaled cycle matrix by
\begin{equation}
\label{eq:ff_cycle_matrix_hat}
    \widehat M_{\mathrm{cyc}}^{(m)} := \frac{d^2}{\alpha^{2m}}M_{\mathrm{cyc}}^{(m)}.
\end{equation}
Using \eqref{eq:ff_single_check_expansion}, we obtain
\begin{align}
\label{eq:ff_cycle_expansion}
    \widehat M_{\mathrm{cyc}}^{(m)}
    &= \Pi_R^{-1}\widehat M_Z^{(m)} \Pi_R\widehat M_Z^{(m)} \nonumber\\
    &= \bigl(I+r_mD_X+\widetilde E_m\bigr) \bigl(I+r_mD_Z+E_m\bigr) \nonumber\\
    &= I+r_m(D_Z+D_X)+F_m,
\end{align}
where
\begin{align}
\label{eq:ff_Fm_definition}
    F_m :={}
    & E_m+\widetilde E_m +r_mD_XE_m +r_m\widetilde E_mD_Z \nonumber\\
    &+ \widetilde E_mE_m +r_m^2D_XD_Z.
\end{align}
Equations~\eqref{eq:ff_rm_limit}, \eqref{eq:ff_Em_o_rm}, and \eqref{eq:ff_Etilde_small_o} imply
\begin{equation}
\label{eq:ff_Fm_small_o}
    \|F_m\|=o(r_m).
\end{equation}

Set
\begin{equation}
\label{eq:ff_G_definition}
    G:=D_Z+D_X.
\end{equation}
Equations~\eqref{eq:ff_DZ_definition} and \eqref{eq:ff_DX_action} show that $G$ is diagonal in the Bell basis and satisfies
\begin{equation}
\label{eq:ff_G_action}
    G\mathbf e_{st} =
    \begin{cases}
        2(d-1)\mathbf e_{st},
        &(s,t)=(0,0),\\[1mm]
        (d-2)\mathbf e_{st},
        &\text{exactly one of $s$ and $t$ is zero},\\[1mm]
        -2\mathbf e_{st},
        &s\neq0,\ t\neq0.
    \end{cases}
\end{equation}
Thus the eigenvalues of $G$ are
\begin{equation}
\label{eq:ff_G_spectrum}
\begin{aligned}
    &2(d-1) &\text{with multiplicity } &1,\\
    &d-2 &\text{with multiplicity } &2(d-1),\\
    &-2 &\text{with multiplicity } &(d-1)^2.
\end{aligned}
\end{equation}
The eigenspace associated with the largest eigenvalue is
\begin{equation}
\label{eq:ff_G_dominant_eigenspace}
    \ker\bigl(G-2(d-1)I\bigr) = \operatorname{span}\{\mathbf e_{00}\}.
\end{equation}
In particular, the largest eigenvalue $2(d-1)$ is simple, and its gap to the next eigenvalue is
\begin{equation}
\label{eq:ff_G_gap}
    2(d-1)-(d-2)=d.
\end{equation}

To analyze the dominant eigendirection, define
\begin{equation}
\label{eq:ff_Km_definition}
    K_m := \frac{\widehat M_{\mathrm{cyc}}^{(m)}-I}{r_m} = G+\frac{F_m}{r_m}.
\end{equation}
By \eqref{eq:ff_Fm_small_o},
\begin{equation}
\label{eq:ff_Km_to_G}
    \|K_m-G\|\longrightarrow0.
\end{equation}
The matrices $K_m$ and $\widehat M_{\mathrm{cyc}}^{(m)}$ have the same left and right eigenvectors. Indeed, if $K_m\mathbf v=\mu\mathbf v$, then
\begin{equation}
\label{eq:ff_Km_Mm_eigenvalue_relation}
    \widehat M_{\mathrm{cyc}}^{(m)}\mathbf v = (1+r_m\mu)\mathbf v.
\end{equation}

Since the eigenvalue $2(d-1)$ of $G$ is simple and isolated, standard finite-dimensional eigenvalue perturbation theory implies that there exists an integer $m_0$ such that, for every $m\geq m_0$, $K_m$ has exactly one eigenvalue, counted with algebraic multiplicity, in the disk
\begin{equation}
\label{eq:ff_dominant_disk}
    \mathcal D_\ast :=
    \left\{
        z\in\mathbb C:
        |z-2(d-1)|<\frac{d}{3}
    \right\}.
\end{equation}
Denote this eigenvalue by $\mu_\ast^{(m)}$. Then
\begin{equation}
\label{eq:ff_mu_star_limit}
    \mu_\ast^{(m)} = 2(d-1)+o(1).
\end{equation}
Because $K_m$ is real and $\mathcal D_\ast$ contains exactly one eigenvalue, $\mu_\ast^{(m)}$ is real. The remaining eigenvalues converge, counting algebraic multiplicity, to either $d-2$ or $-2$.

The corresponding eigenvalue of $\widehat M_{\mathrm{cyc}}^{(m)}$ is
\begin{equation}
\label{eq:ff_lambda_star_expansion}
    \lambda_\ast^{(m)} = 1+2(d-1)r_m+o(r_m).
\end{equation}
The remaining eigenvalues have the forms
\begin{equation}
\label{eq:ff_other_eigenvalues}
    1+(d-2)r_m+o(r_m)
    \qquad
    \text{or}
    \qquad
    1-2r_m+o(r_m).
\end{equation}
Since $r_m>0$, after increasing $m_0$ if necessary, $\lambda_\ast^{(m)}$ is the unique eigenvalue of maximum modulus for every $m\geq m_0$.

The exact matrix $M_{\mathrm{cyc}}^{(m)}$ is nonnegative because it is the unnormalized Bell-probability transition matrix. Its positive rescaling $\widehat M_{\mathrm{cyc}}^{(m)}$ is therefore also nonnegative. Hence $\lambda_\ast^{(m)}$ is the Perron eigenvalue of $\widehat M_{\mathrm{cyc}}^{(m)}$.

Let $\mathbf q_\ast^{(m)}$ and $\mathbf w_\ast^{(m)}$ be the corresponding right and left Perron eigenvectors, normalized by
\begin{equation}
\label{eq:ff_perron_normalization}
    \mathbf 1^T\mathbf q_\ast^{(m)}=1,
    \qquad
    \bigl(\mathbf w_\ast^{(m)}\bigr)^T \mathbf q_\ast^{(m)}=1.
\end{equation}
The convergence of the isolated eigendirections in \eqref{eq:ff_Km_to_G}, together with \eqref{eq:ff_G_dominant_eigenspace}, gives
\begin{equation}
\label{eq:ff_perron_vectors_limit}
    \mathbf q_\ast^{(m)} \longrightarrow\mathbf e_{00},
    \qquad
    \mathbf w_\ast^{(m)} \longrightarrow\mathbf e_{00}.
\end{equation}
Equivalently, every accumulation point $\mathbf v$ of the normalized right Perron vectors satisfies
\begin{equation}
    G\mathbf v=2(d-1)\mathbf v,
    \qquad
    \mathbf 1^T\mathbf v=1,
\end{equation}
whose unique solution is $\mathbf v=\mathbf e_{00}$.

The normalized conditional recurrence associated with complete successful cycles is
\begin{equation}
\label{eq:ff_normalized_cycle_recurrence}
    \mathbf q^{(N+1,m)} =
    \frac{
        M_{\mathrm{cyc}}^{(m)}
        \mathbf q^{(N,m)}
    }{
        \mathbf 1^T
        M_{\mathrm{cyc}}^{(m)}
        \mathbf q^{(N,m)}
    }.
\end{equation}
Repeated substitution shows that
\begin{equation}
\label{eq:ff_normalized_cycle_power}
    \mathbf q^{(N,m)} =
    \frac{
        \bigl(M_{\mathrm{cyc}}^{(m)}\bigr)^N
        \mathbf q^{(0)}
    }{
        \mathbf 1^T
        \bigl(M_{\mathrm{cyc}}^{(m)}\bigr)^N
        \mathbf q^{(0)}
    }.
\end{equation}

The preprocessing preserves the target Bell probability, so the $m$-independent initial distribution satisfies
\begin{equation}
\label{eq:ff_initial_target_weight}
    \mathbf e_{00}^T\mathbf q^{(0)} = q_{00}^{(0)} = a > \frac{1}{d}.
\end{equation}
Using \eqref{eq:ff_perron_vectors_limit}, we obtain
\begin{equation}
\label{eq:ff_initial_projection_limit}
    \lim_{m\to\infty} \bigl(\mathbf w_\ast^{(m)}\bigr)^T \mathbf q^{(0)} = \mathbf e_{00}^T\mathbf q^{(0)} = a>0.
\end{equation}
After increasing $m_0$ if necessary,
\begin{equation}
\label{eq:ff_initial_projection_positive}
    \bigl(\mathbf w_\ast^{(m)}\bigr)^T \mathbf q^{(0)}>0
\end{equation}
for every $m\geq m_0$. Thus the initial distribution has a nonzero projection onto the dominant Perron eigenspace.

Let $\lambda_P^{(m)}$ denote the Perron eigenvalue of the unscaled matrix $M_{\mathrm{cyc}}^{(m)}$. Its simplicity and strict dominance give
\begin{equation}
\label{eq:ff_perron_power_expansion}
    \bigl(M_{\mathrm{cyc}}^{(m)}\bigr)^N \mathbf q^{(0)} = \bigl(\lambda_P^{(m)}\bigr)^N \mathbf q_\ast^{(m)} \bigl(\mathbf w_\ast^{(m)}\bigr)^T \mathbf q^{(0)} + o\!\left( |\lambda_P^{(m)}|^N \right)
\end{equation}
as $N\to\infty$. Substituting \eqref{eq:ff_perron_power_expansion} into \eqref{eq:ff_normalized_cycle_power} yields
\begin{equation}
\label{eq:ff_fixed_m_convergence}
    \lim_{N\to\infty}\mathbf q^{(N,m)} = \mathbf q_\ast^{(m)}
\end{equation}
for every fixed $m\geq m_0$.

Taking the subsequent large-$m$ limit and using \eqref{eq:ff_perron_vectors_limit}, we obtain
\begin{equation}
\label{eq:ff_ordered_vector_limit}
    \lim_{m\to\infty}
    \left(
        \lim_{N\to\infty}\mathbf q^{(N,m)}
    \right)
    = \mathbf e_{00}.
\end{equation}
In particular,
\begin{equation}
\label{eq:ff_ordered_fidelity_limit}
    \lim_{m\to\infty}
    \left(
        \lim_{N\to\infty}q_{00}^{(N,m)}
    \right)
    = \lim_{m\to\infty}q_{\ast,00}^{(m)} = 1.
\end{equation}

It remains to verify the finite-resource assertion. Let
\begin{equation}
\label{eq:ff_cycle_success_probability}
    P_{\mathrm{cyc}}^{(m)}(\mathbf q) := \mathbf 1^T M_{\mathrm{cyc}}^{(m)} \mathbf q
\end{equation}
be the probability that one complete cycle succeeds. Whenever $q_{00}>0$, consider the event in which the input shared state has Bell label $(0,0)$ and all $2m$ carriers used in the two checks experience the identity error $(0,0)$. This event has probability $q_{00}a^{2m}$, is accepted by both checks, and leaves the Bell label equal to $(0,0)$. Therefore,
\begin{equation}
\label{eq:ff_success_positive_bound}
    P_{\mathrm{cyc}}^{(m)}(\mathbf q) \geq q_{00}a^{2m}>0,
\end{equation}
and
\begin{equation}
\label{eq:ff_target_component_positive_bound}
    \bigl(
        M_{\mathrm{cyc}}^{(m)}\mathbf q
    \bigr)_{00}
    \geq q_{00}a^{2m}>0.
\end{equation}
Since $q_{00}^{(0)}=a>0$, induction shows that every finite cycle has strictly positive success probability and that $q_{00}^{(N,m)}>0$ for every finite $N$.

For any prescribed $F_{\mathrm{tar}}<1$, \eqref{eq:ff_ordered_fidelity_limit} allows us to choose a finite $m\geq m_0$ such that
\begin{equation}
    q_{\ast,00}^{(m)}>F_{\mathrm{tar}}.
\end{equation}
The fixed-$m$ convergence in \eqref{eq:ff_fixed_m_convergence} then gives a finite $N$ such that
\begin{equation}
    q_{00}^{(N,m)}\geq F_{\mathrm{tar}}.
\end{equation}
The corresponding cumulative success probability is
\begin{equation}
\label{eq:ff_cumulative_success}
    P_{\mathrm{tot}}^{(N,m)} = \prod_{n=0}^{N-1} P_{\mathrm{cyc}}^{(m)} \bigl(\mathbf q^{(n,m)}\bigr)>0,
\end{equation}
because it is a finite product of strictly positive factors. This proves Theorem~\ref{thm:prime_power_mcaepp}.

\bibliographystyle{unsrt}
\bibliography{ref}

\end{document}